\documentclass{article}[11pt]
\usepackage[top=1.5in, bottom=1in, left=1in, right=1in]{geometry}
\usepackage{graphicx} % Required for inserting images
\usepackage{amsmath}
\usepackage{amsfonts}
\usepackage{comment}
\usepackage{color}
\usepackage{amsthm}
\usepackage{hyperref}

\usepackage{subcaption}

\newtheorem{theorem}{Theorem}

\newtheorem{lemma}{Lemma}

\newtheorem{proposition}{Proposition}

\newenvironment{hproof}{%
  \proof}{\endproof}

\usepackage{natbib}
 \bibpunct[, ]{(}{)}{,}{a}{}{,}%

\usepackage{tikz}
\usepackage{pgfplots}
\pgfplotsset{compat=newest}

\title{Simple vs. Optimal Congestion Pricing}
\author{Devansh Jalota\\ Columbia\\{\tt dj2757@columbia.edu}
        \and 
        Xuan Di\\ Columbia\\ {\tt sharon.di@columbia.edu}
        \and 
        Adam N. Elmachtoub\\ Columbia\\ {\tt adam@ieor.columbia.edu}
}
\begin{document}

\maketitle

\begin{abstract}
Congestion pricing has emerged as an effective tool for mitigating traffic congestion, yet implementing welfare or revenue-optimal dynamic tolls is often impractical. Most real-world congestion pricing deployments, including New York City’s recent program, rely on significantly simpler, often static, tolls. This discrepancy motivates the question of how much revenue and welfare loss there is when real-world traffic systems use static rather than optimal dynamic pricing. 

We address this question by analyzing the performance gap between static (simple) and dynamic (optimal) congestion pricing schemes in two canonical frameworks: Vickrey’s bottleneck model with a public transit outside option and its city-scale extension based on the Macroscopic Fundamental Diagram (MFD). In both models, we first characterize the revenue-optimal static and dynamic tolling policies, which have received limited attention in prior work. In the worst-case, revenue-optimal static tolls achieve at least half of the dynamic optimal revenue and at most twice the minimum achievable system cost across a wide range of practically relevant parameter regimes, with stronger and more general guarantees in the bottleneck model than in the MFD model. We further corroborate our theoretical guarantees with numerical results based on real-world datasets from the San Francisco Bay Area and New York City, which demonstrate that static tolls achieve roughly 80-90\% of the dynamic optimal revenue while incurring at most a 8-20\% higher total system cost than the minimum achievable system cost. 
\end{abstract}

\section{Introduction}
\vspace{-3pt}
%Traffic congestion is a massive problem in large urban metropolises and congestion pricing is one of the main mechanisms developed to alleviate some of its issues - maybe mention here NYC's CP plan and its massive impact it is likely to have and that it has already had solid positive impacts to begin with in terms of congestion reductions and revenues raised

Traffic congestion has surged in major cities worldwide, straining infrastructure, degrading air quality, and imposing substantial economic costs~\citep{usCongestioncosts,usCongestioncosts2}. Against this backdrop, congestion pricing has emerged as an effective tool for mitigating the inefficiencies of traffic congestion and empirical evidence from recent large-scale deployments underscores its promise. For instance, New York City’s newly implemented congestion pricing program in January 2025 has substantially reduced vehicle volumes, with the potential to generate billions of dollars in revenue for transit investments~\citep{cook2025short,ostrovsky2024effective}. Similar successes have been documented in Stockholm and Singapore~\citep{stockholm-cp}. 

While many large-scale congestion pricing deployments, including those in London (2003), Stockholm (2007), and New York City (2025), have emerged in the past two decades, their intellectual roots date back to~\citet{pigou}. Under Pigou’s framework, the socially optimal toll charges each traveler the marginal cost they impose on others in the network. However, implementing Pigouvian tolls in practice is challenging and often impractical, as it requires time-varying, state-dependent tolls that respond to evolving traffic conditions, potentially at the granularity of individual network links. Such fine-grained dynamic pricing demands sophisticated sensing, communication, and computation infrastructure, while placing substantial informational and cognitive burdens on travelers. Consequently, despite its theoretical appeal, fully dynamic marginal-cost or Pigouvian tolling remains largely infeasible in real-world congestion pricing deployments.

Instead, most operational congestion pricing systems rely on much simpler, often static, tolling structures. For instance, fixed (static) charges are levied on specific bottlenecks, bridges, or crossings, such as the San Francisco (SF)–Oakland Bay Bridge in the SF Bay Area~\citep{GONZALES2015267}. % or the Bandra-Worli Sea Link in Mumbai, India~\cite{shah-koner-2022}. 
Likewise, large urban cordon or area-based systems, such as those in London and New York City, typically impose a fixed (static) entry or usage fee for designated congestion zones during peak periods. Such static tolling schemes are appealing as they are transparent, easy to communicate to users, and require less real-time information and computational infrastructure than theoretically optimal dynamic pricing. 

Yet, the simplicity of static tolling means that it may deviate substantially from idealized marginal-cost pricing, raising a central question: \emph{How much performance is sacrificed when real-world transportation systems use simple static pricing instead of optimal dynamic congestion pricing?} Understanding this gap is crucial for policymakers as cities seek congestion pricing schemes that are operationally feasible and deliver substantial welfare and revenue gains. This question is also increasingly relevant for transportation authorities exploring more dynamic approaches, such as managed express lanes that adjust tolls based on real-time conditions~\citep{PANDEY2018304,JANG201469}, or Singapore’s ERP 2.0 system, which aims to enable real-time, link-level pricing across its road network~\citep{erp-singapore-2.0}.

This work addresses the above question by analyzing the performance gap in terms of revenue and welfare (or system cost) between simple static and optimal dynamic congestion pricing schemes in two canonical modeling frameworks that capture the dominant real-world use cases of congestion pricing. In particular, we study (i) Vickrey’s bottleneck model~\citep{vickrey1969congestion}, a foundational and analytically tractable representation of peak-period congestion that captures queue formation and dissipation at a bottleneck, and (ii) its city-scale urban system extension based on the Macroscopic Fundamental Diagram (MFD)~\citep{DAGANZO200749}, which incorporates spatially distributed origins and destinations and models congestion through a state-dependent capacity, rather than a fixed bottleneck capacity. We focus on these models as they provide clean, tractable frameworks in which to isolate the core mechanisms governing static versus dynamic tolling performance, while also reflecting the structure of contemporary deployments, from bottleneck-priced corridors such as the SF–Oakland Bay Bridge to city-scale congestion pricing systems in New York City and London.

In studying these models, we focus on revenue-maximizing tolls, which have received limited attention in the literature that has largely emphasized welfare-optimal or system-cost-optimal tolls. Beyond filling this research gap, our focus on revenue maximization is motivated by several practical considerations. First, revenue generation is an explicit policy objective of many congestion pricing programs (e.g., New York City), where toll revenues are earmarked for transit investments~\citep{mta-cp-revenue}. Next, revenue objectives are central to public–private toll road partnerships, including tolled road networks operated by firms such as Transurban~\citep{https://doi.org/10.1111/1745-5871.12528}. Finally, our numerical experiments in Section~\ref{sec:experiments} demonstrate that static revenue-optimal tolls often coincide, or perform nearly identically to, system-cost-optimal static tolls across empirically relevant parameter regimes, reinforcing our focus on revenue-optimal tolling. Overall, while a key focus of this work is to study revenue-optimal tolling policies, we also evaluate their welfare (system cost) implications, providing a comprehensive assessment of performance across the metrics most relevant to policymakers.

\textbf{Our Contributions:} This work analyzes the performance of static versus dynamic congestion pricing in two canonical settings: Vickrey’s bottleneck model with a public transit outside option (e.g., a subway) and its MFD-based urban system extension. In doing so, a central contribution is the characterization of revenue-optimal static and dynamic tolls in both models. For the bottleneck model with an outside option, we leverage the equilibrium properties of Vickrey's framework to show that computing revenue-optimal static and dynamic tolls reduces to solving \emph{single-variable quadratic programs} that admit closed-form solutions. For urban systems, such closed-form results are typically intractable for arbitrary MFD relations; thus, we focus on the widely studied and empirically supported triangular fundamental diagram (see Section~\ref{sec:mfd} for more details). We characterize the revenue-optimal dynamic toll, showing that it maintains the system at the throughput-maximizing capacity, and derive a closed-form expression for the revenue generated by any static toll.

Leveraging these characterizations, we compare static revenue-optimal tolling with its optimal dynamic counterparts on both revenue and total system cost metrics (see Section~\ref{subsec:perf-metrics} for details), which are of direct relevance to policymakers. Our results establish lower bounds on the revenue and upper bounds on the total system costs achievable under static-revenue optimal tolls relative to the dynamic benchmarks. In Vickrey’s bottleneck model with an outside option, we show that static revenue-optimal tolls obtain \emph{at least half} of the revenue achieved by dynamic revenue-optimal tolls across all parameter regimes and \emph{at least two-thirds} in many practically relevant regimes. Static revenue-optimal tolls do not always guarantee constant-factor approximations for system cost. However, when the public transit outside option is not substantially less attractive than driving a car, a condition that holds in many real-world settings, static revenue-optimal tolls incur \emph{at most twice} the optimal system cost. In the MFD framework, the guarantees we obtain are weaker (as the capacity is state-dependent) but still robust: across a range of practically relevant parameter regimes, static revenue-optimal tolls still achieve \emph{at least two-thirds} of the dynamic optimal revenue and incur \emph{at most twice} the minimal system cost.

Finally, we complement our theory with numerical experiments based on two real-world case studies: the SF–Oakland Bay Bridge for the bottleneck model and New York City’s congestion reduction zone for the MFD framework. Our results show that static revenue-optimal tolls achieve roughly 80-90\% of the dynamic optimal revenue while incurring at most a 8-20\% higher total system cost than the minimum achievable for practically relevant parameter regimes, though this gap can widen when the public transit option is significantly less attractive (i.e., more costly), consistent with theory. Moreover, when evaluated through the lens of our model, we find that the static tolls currently implemented in practice in both case studies correspond to parameter regimes in which static revenue-optimal tolls capture nearly all the benefits of dynamic tolling, achieving roughly 98\% of the dynamic optimal revenue and a 3\% higher total system cost than the minimum achievable. 
Beyond the attractiveness of public transit relative to driving, a key determinant of the performance of static revenue-optimal tolling is the ratio of the maximum system throughput to the desired user arrival rate. When this ratio is close to one, as in the Bay Bridge study, static revenue-optimal tolls capture a larger fraction of the dynamic optimal revenue but incur a higher multiplicative gap relative to the minimum achievable system cost. This is in comparison to the New York City study, where this ratio is substantially lower due to its much higher public transit mode share. Thus, in addition to highlighting the efficacy of static revenue-optimal tolls, our numerical results illustrate a fundamental trade-off between revenue and system cost objectives under static tolling. 

%capture nearly all the benefits of dynamic tolling, 

%Our results show that for practically relevant parameter regimes, static revenue-optimal tolls incur only modest performance losses, with a revenue loss of at most 10–20\% relative to that of the dynamic revenue-optimal policies, though this gap can widen when the public transit option is significantly less attractive (i.e., more costly), consistent with theory. Similar patterns hold for total system cost relative to the dynamic system-cost-optimal benchmark. 

Overall, our results demonstrate that simple static tolls can deliver robust performance on both revenue and system cost metrics, underscoring its practical efficacy and shedding light on the strong empirical performance of the many static congestion pricing systems already in operation. Moreover, since our comparisons abstract from the substantial informational and computational burdens that dynamic tolling places on both transportation planners and users, our theoretical bounds should be interpreted as worst-case guarantees. Incorporating realistic behavioral or operational constraints would only further strengthen the case for static tolling in real-world deployment.

\emph{Organization:} The remainder of this paper is organized as follows. Section~\ref{sec:related-literature} reviews related literature. Section~\ref{sec:model} introduces our model and reviews equilibrium outcomes in Vickrey's bottleneck model with and without an outside option, which form the foundation of our analysis. Then, Section~\ref{sec:revenue-optimal-tolling-classical-bottleneck} derives the revenue-optimal static and dynamic tolling policies in Vickrey's bottleneck model with an outside option and compares the performance of revenue-optimal static tolls to its dynamic benchmarks. Section~\ref{sec:mfd} extends these results to urban systems based on the MFD. Section~\ref{sec:experiments} presents numerical experiments. Section~\ref{sec:conclusion} concludes and provides directions for future work.

\vspace{-4pt}
\section{Related Literature} \label{sec:related-literature}
\vspace{-2pt}
The study of peak-period congestion originates from Vickrey’s bottleneck model~\citep{vickrey1969congestion}, which characterizes the equilibrium departure time choices of users traversing a bottleneck with a fixed capacity. This framework has served as the foundation for an extensive literature, with numerous extensions including models of parallel-route choice~\citep{ARNOTT1990209}, modal split~\citep{TABUCHI1993414}, heterogeneous preferences~\citep{Newell198721,lindsey-2004-38}, carpooling~\citep{XIAO2016383,ostrovsky2025congestion}, stochastic demand~\citep{ARNOTT1999525,doi:10.1287/trsc.17.4.430}, and bounded rationality~\citep{bounded-rationality-mahmassani}, among others. While the bottleneck model and its extensions capture queuing delays as a point queue, they do not account for hypercongestion, wherein system throughput declines once road density exceeds a critical threshold. This limitation has motivated city-scale extensions, including bathtub models of downtown traffic~\citep{ARNOTT2013110} and queuing formulations~\citep{FOSGERAU2013122} in which the bottleneck capacity is state-dependent and degrades under heavy congestion. Relatedly, the MFD~\citep{DAGANZO200749,GEROLIMINIS2008759} provides a parsimonious representation of the relationship between system-wide vehicle density and throughput. For a comprehensive survey of the bottleneck model and its extensions, see~\citet{LI2020311}.

Building on these foundations, a substantial literature has examined congestion pricing as a mechanism for mitigating traffic, with a focus on system-cost-optimal tolls that internalize congestion externalities~\citep{pigou}. In the bottleneck model, the system-cost-optimal toll is time-varying, requiring continuously adjustable charges~\citep{vickrey1969congestion,ARNOTT1990111,Newell198721}. While such dynamic pricing is theoretically efficient, it is operationally complex, motivating the design of more implementable schemes, including uniform, stepwise, and coarse time-of-day tolls~\citep{CHU1999697,arnott1993structural,BRAID1989320}. For example,~\citet{LAIH1994197} showed that an optimal $n$-step toll can eliminate at most $\frac{n}{n+1}$ of the queuing delay compared to the time-varying optimal toll. Related studies extend these ideas beyond a single bottleneck to parallel networks~\citep{ARNOTT1990209,BRAID1996179}. Like these works, we compare simple static tolls with (optimal) dynamic congestion pricing. However, we consider a model with an outside option (e.g., public transit), which substantially alters equilibrium behavior, and evaluate tolling policies under a revenue-maximization objective.

In this regard, our work connects to prior studies that incorporate an outside option or elastic demand into Vickrey’s bottleneck model~\citep{d0907f84-e14a-3d98-ad20-759f41491d6e,GONZALES20121519}, and to the literature on revenue-maximizing tolls for privately operated facilities~\citep{de2000private,DEPALMA2002217,FU2018430}. Existing revenue-maximization studies in the bottleneck framework primarily examine equilibrium outcomes arising from competition among firms, and work incorporating outside options predominantly focus on devising system-cost-optimal tolling policies. In contrast, we study the design of revenue-maximizing tolling policies and provide explicit guarantees that quantify the performance gap between static and dynamic tolling.

Beyond the tolling literature rooted in Vickrey’s bottleneck framework, our work also connects to the broader work on congestion pricing in settings where first-best Pigouvian pricing is infeasible due to policy, behavioral, or infrastructural constraints. This includes research on second-best congestion pricing, which examines toll design when only a subset of network links can be priced~\citep{VERHOEF2002281,bilevel-labbe,Larsson1998,bilevel-patricksson,DI201674}, and optimal cordon pricing, which uses zone-based tolls to manage congestion in urban systems~\citep{MUN200321,ZHANG2004517}. Like these works, we study simple, implementable tolling policies; however, unlike the large-scale bi-level or mixed-integer optimization models used in these works, we derive closed-form optimal static and dynamic tolls in both the bottleneck model and its MFD extension, yielding transparent structural insights and theoretical guarantees.

Our work also connects to the literature on simplicity versus optimality in algorithm and mechanism design~\citep{hartline2009simple,HART2017313}, including studies of static pricing in revenue management~\citep{10.1145/3736252.3742552,doi:10.1287/opre.2020.2054} and online platforms~\citep{10.1145/2764468.2764527}, which show that simple, time-invariant prices can achieve strong performance relative to their optimal dynamic pricing counterparts. In this spirit, we also show that simple static tolls can achieve performance guarantees relative to optimal dynamic tolling.

Finally, since we compare tolling policies in terms of revenue and welfare (or system cost), our work relates to the broader economics literature on the tension between revenue and welfare-maximizing pricing rules. This trade-off is well known in auction theory, most notably in the contrast between welfare-maximizing Vickrey auctions~\citep{a2fd2512-18e4-3e86-a4f2-2f4443994eb2} and Myerson’s revenue-maximizing mechanisms~\citep{doi:10.1287/moor.6.1.58}, and has also been studied in settings including security games~\citep{jalota2024simplenearoptimalsecuritygames}, congestion games~\citep{zhang2024designinghighoccupancytolllanes}, and revenue management~\citep{doi:10.1287/mnsc.2017.2943,doi:10.1049/icp.2025.2957}. Contributing to this literature, we analyze revenue-welfare tradeoffs in a congestion pricing context and quantify how much revenue can be retained and how much welfare is lost when moving from optimal dynamic pricing to simple static tolls.

\vspace{-5pt}

\section{Model and Background: Bottleneck with Outside Option} \label{sec:model}

\vspace{-2pt}

This section presents Vickrey's bottleneck model with an outside option (Section~\ref{subsec:preliminaries-model}), reviews its untolled equilibrium outcome established in~\citet{GONZALES20121519} (Section~\ref{subsec:background-equilibria}), and defines the revenue and system cost metrics to evaluate the tolling policies we study (Section \ref{subsec:perf-metrics}).

\vspace{-5pt}

\subsection{Setup} \label{subsec:preliminaries-model}

\vspace{-2pt}

We study Vickrey's bottleneck model~\citep{vickrey1969congestion}, a canonical representation of peak-period congestion in capacity-constrained facilities such as bridges and tunnels, augmented with a public transit outside option. In this framework, a mass of $\Lambda$ users makes morning commute trips using either (i) a car, which requires traversing a bottleneck of capacity (service rate) $\mu$, or (ii) a public transit alternative (e.g., a subway). As in~\citet{vickrey1969congestion}, users' desired bottleneck crossing times are uniformly distributed over the time interval $[t_1, t_2]$ (e.g., morning rush). In the absence of the bottleneck, users would arrive at their destination at their desired times, with a corresponding desired arrival rate $\lambda = \frac{\Lambda}{t_2-t_1}$. Unlike stochastic queuing models, e.g., based on Poisson processes, we clarify that arrivals and service are deterministic. In addition to bottleneck congestion that may induce deviations from users' desired bottleneck crossing times and influence mode choices, users' travel decisions can also be influenced via a (possibly time-varying) toll $\Tilde{\tau}(t)$ on cars crossing the bottleneck at time $t$. % in Vickrey's bottleneck model.

%the administrator may additionally impose a (possibly time-varying) toll $\tau(t)$ on vehicles crossing the bottleneck at time $t$.

%To influence both departure-time and mode choices, the administrator may impose a (possibly time-varying) toll $\tau(t)$ on vehicles crossing the bottleneck at time $t$.

The key distinction between car and public transit is how they influence users' travel costs. For car users, the travel cost depends on their desired bottleneck crossing time $t^*$, the prevailing level of congestion, and tolls. Normalizing the queuing delay at the bottleneck to zero under free-flow conditions (i.e., in the absence of congestion), the cost of a car trip comprises four components: (i) a congestion independent free-flow generalized cost $\Tilde{z}_C$, (ii) queuing (or waiting) delay $w(t)$ to traverse the bottleneck due to congestion, (iii) schedule delay $|t^* - t|$, denoting the deviation between a user's desired and actual bottleneck crossing time, and (iv) the toll $\Tilde{\tau}(t)$. Let $c_W$ denote the penalty for incurring a unit of waiting time delay, and $c_e$ and $c_L$ be the schedule delay penalties for arriving early or late, respectively, which is assumed to be identical across users~\citep{vickrey1969congestion,ostrovsky2025congestion}, with $0< c_e < c_W$ and $c_L > 0$, as is standard in the literature~\citep{vickrey1969congestion}. Then, the total travel cost for a car user with a desired bottleneck crossing time $t^*$ who crosses the bottleneck at time $t$ is: 
{\setlength{\abovedisplayskip}{0.5pt}
\setlength{\belowdisplayskip}{0.5pt}
\setlength{\jot}{1pt}
\begin{align} \label{eq:userCost}
    \Tilde{c}(t, t^*) = \Tilde{z}_C + c_W w(t) + c_e (t^* - t)_+ + c_L (t - t^*)_+ + \Tilde{\tau}(t).
\end{align}}
%As is standard, we assume piecewise constant schedule delay costs, given by $c_S = c_e$ for early arrivals ($t < t^*$) and $c_S = c_L$ for late arrivals ($t \geq t^*$), where $0< c_e < c_W$ and $c_L < 0$~\cite{vickrey1969congestion}. 

Unlike car users, we assume that transit users incur a fixed generalized cost $\Tilde{z}_T$, as is standard in the literature~\citep{GONZALES20121519}. This specification is reflective of settings in which the public transit system charges a fixed fare, operates at fixed headways, and is segregated from the road network, features typical of metro or subway systems (e.g., New York City's subway), and thus does not experience congestion delays. While richer models could also allow the transit cost to depend on ridership (e.g., to capture crowding or discomfort effects), a fixed transit cost enables analytical tractability, allowing us to isolate the core mechanisms through which static and dynamic pricing influence user behavior and system outcomes. Nevertheless, relaxing this assumption to incorporate crowding effects or other dependencies of the transit cost on ridership is a valuable direction for future work.

In the remainder of this work, for analytical clarity, we normalize user costs by the waiting time penalty $c_W$. Then, the normalized transit cost is $z_T = \frac{\Tilde{z}_T}{c_W}$, and letting $z_C = \frac{\Tilde{z}_C}{c_W}$, $\tau(t) = \frac{\Tilde{\tau}(t)}{c_W}$, $\frac{c_e}{c_W} = e$, and $\frac{c_L}{c_W} = L$, the normalized cost for a car user from Equation~\eqref{eq:userCost} is given by: %with desired crossing time $t^*$ who crosses the bottleneck at time $t$ is: 
{\setlength{\abovedisplayskip}{0.5pt}
\setlength{\belowdisplayskip}{0.5pt}
\setlength{\jot}{1pt}
\begin{align} \label{eq:userCostNormalized}
    c(t, t^*) = z_C + w(t) + e (t^* - t)_+ + L (t - t^*)_+ + \tau(t).
\end{align}}

\vspace{-8pt}
\subsection{Equilibria in the Untolled Bottleneck Model} \label{subsec:background-equilibria}
\vspace{-2pt}
%Given the setup and user costs defined in the previous section, we review here well-known equilibrium results for the classical bottleneck model, both without and with a public-transit outside option. 

Under the setup and user costs defined in the previous section, an \emph{equilibrium} arises when all users minimize travel costs by choosing a mode (car or transit) and, if traveling by car, a bottleneck crossing time, resulting in a pattern of departure times, mode choices, and induced congestion effects under which no user has an incentive to deviate~\citep{vickrey1969congestion,hendrickson-1981-schedule}. Given cost-minimizing user behavior, we now review well-established equilibrium characterization results of the bottleneck model in the untolled setting ($\tau(t)=0$ for all $t$), both without and with a public transit outside option, which forms the foundation for our analysis of static and dynamic tolling schemes in Sections~\ref{sec:revenue-optimal-tolling-classical-bottleneck} and~\ref{sec:mfd}.

\emph{Bottleneck without Outside Option:} To build intuition, we first consider the setting without an outside option, in which all users travel by car and choose only their bottleneck crossing time $t$. Specifically, a user with a desired crossing time $t^*$ selects a time $t$ to pass the bottleneck to minimize their travel cost $c(t, t^*)$ in Equation~\eqref{eq:userCostNormalized}. If the bottleneck service rate is at least the desired arrival rate, i.e., $\mu \geq \lambda$, no queue forms at equilibrium and all users cross the bottleneck at their desired times, incurring a cost $z_C$. When $\mu < \lambda$, as is typical during morning and evening rush periods, the bottleneck cannot serve all $\Lambda$ users during the interval $[t_1, t_2]$ and congestion delays arise. 

In this congested regime, users incur waiting and schedule delay costs, and the $\Lambda$ users are served at the bottleneck capacity $\mu$ over an interval $[t_A^{(No)}, t_D^{(No)}]$, where $t_D^{(No)} - t_A^{(No)} = \frac{\Lambda}{\mu} > t_2 - t_1$. Here, the superscript \emph{No} denotes the setting with \emph{no} outside option. Over this interval, the equilibrium waiting-time function $w(t)$ is determined by the first-order optimality condition for cost-minimizing users, $c'(t, t^*) = 0$, under which no user can reduce their travel cost through a marginal adjustment in their bottleneck crossing time. This condition implies that the slope of the waiting-time function satisfies $w'(t) = e$ for early arrivals and $w'(t) = -L$ for late arrivals, which is independent of users’ desired crossing times $t^*$, thus characterizing the equilibrium condition for all users.

To characterize the waiting time profile from this differential equation, note that the user with the maximum waiting time at equilibrium must cross the bottleneck at their desired time; otherwise, a marginal shift in their departure time would reduce schedule delay without increasing waiting time, thus lowering their travel cost. Let $t^* \! = \! \Tilde{t}$ denote this user's desired crossing time. Then, the waiting time profile is triangular with $w'\!(t) \! = \! e$ for $t \! < \! \Tilde{t}$ and $w'\!(t) \! = \! -L$ for $t \! > \! \Tilde{t}$, as shown in Figure~\ref{fig:equilibrium_both_models} (left). The critical user with a desired crossing time $\Tilde{t}$ incurs the maximum travel cost, and experiences the maximum waiting time $T_C \! = \! \frac{\Lambda e L}{\mu (e + L)}$. For more details on bottleneck equilibria and the resulting equilibrium bottleneck arrival and departure (crossing) time profiles of users, see~\citet{vickrey1969congestion,hendrickson-1981-schedule} and Figure~\ref{fig:arrival_profile_bottlenck} in Appendix~\ref{apdx:arrival-departure-profiles}.

\vspace{-8pt}

\begin{figure}[tbh!]
      \centering
      \includegraphics[width=135mm]{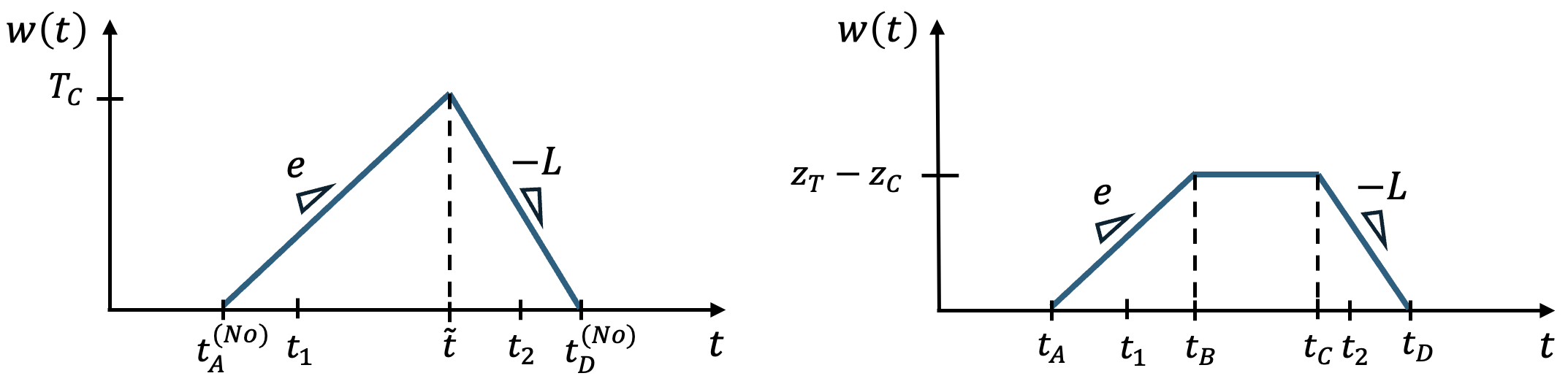}
      \vspace{-13pt}
      \caption{\small \sf Equilibrium waiting time profiles in the bottleneck model without an outside option (left) and with a public transit outside option in a mixed-mode equilibrium where both car and transit are used (right) in the setting when $\mu < \lambda$. The slopes $e$ and $L$ denote the normalized schedule-delay penalties for early and late arrivals, respectively, and $[t_1,t_2]$ represents users’ desired bottleneck crossing times. %All users passing the bottleneck at $t< \Tilde{t}$ pass the bottleneck earlier than their desired time while those passing the bottleneck at $t > \Tilde{t}$ arrive later than their desired depart
      }
      \label{fig:equilibrium_both_models} 
   \end{figure}  

\vspace{-12pt}

\paragraph{Bottleneck with Outside Option:} When a public transit outside option is available, users choose not only when to travel but also whether to travel by car or transit. While car users may incur congestion delays, transit users (e.g., subway riders) face a fixed cost $z_T$ and can arrive at their destination at their desired time; hence, they do not face a departure-time decision. Then, equilibrium mode choice and, for car users, departure-time decisions are determined by the relative magnitudes of the costs $z_T$, $z_C$, and the maximum waiting time $T_C$ in the car-only bottleneck. This yields three parameter regimes, corresponding to different equilibrium mode-use patterns, summarized below.

\emph{Case 1 ($z_T < z_C$):} All users take transit, as it is strictly preferred to a car trip at free-flow. % (i.e., without congestion delays).

%In this case, a transit trip is strictly preferred to a car trip at free-flow (i.e., without congestion delays); hence, all users take transit.

\emph{Case 2 ($z_T \geq z_C + T_C$):} Transit is never cost-effective relative to car travel, even for the user facing the highest travel cost with a wait time of $T_C$ in the car-only bottleneck. Thus, all users travel by car, and the equilibrium coincides with the bottleneck model without an outside option.

\emph{Case 3 ($z_T \in [z_C, z_C + T_C]$):} In this intermediate regime, a mixed-mode equilibrium arises in which both car and transit are used, as characterized by the following proposition from~\citet{GONZALES20121519}.

%Since we model the generalized cost of transit as a fixed value $z_T$, independent of ridership levels and road congestion, transit users can always choose to pass the bottleneck and arrive at their destination at their desired time. 

%Since we consider a fixed cost $z_T$ of using transit, i.e., the cost of transit is not influenced by congestion, all transit riders arrive at their destination at their wished time. At equilibrium, users choose their mode of travel (i.e., car or transit) and time of departing the bottleneck in the case of a car to minimize their cost. Then, the resulting equilibrium outcome will depend on the generalized costs $z_T$ and $z_C$, and the maximum travel cost $T_C$ as follows:

\vspace{-3pt}
\begin{proposition}[Two-Mode Equilibrium~\citep{GONZALES20121519}] \label{prop:user-eq-two-modes}
    Suppose users can choose between two modes, traveling by car through a bottleneck with a free-flow cost $z_C$, and using public transit with a fixed cost $z_T$, where $z_T \geq z_C$. Letting $z_T - z_C < T_C$ and assuming users pass the bottleneck in order of their desired bottleneck departure times, there exists an equilibrium such that (see right of Figure~\ref{fig:equilibrium_both_models}): \vspace{-2pt}
    \begin{enumerate}
        \item The number of early car users $N_e \! = \! \frac{\mu (z_T - z_C)}{e}$, who travel at the start of the rush between $[t_A, t_B]$. The number of late car users $N_L \! = \! \frac{\mu (z_T - z_C)}{L}$, who travel at the end of the rush between $[t_C, t_D]$.
        \item The number of on time car users and transit riders $N_o(\cdot)$, $N_T(\cdot)$ are strictly decreasing functions of the cost difference $z_T - z_C$ and they travel in the middle of the rush between $[t_B, t_C]$.
    \end{enumerate}
\end{proposition}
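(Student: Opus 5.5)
The plan is to construct the equilibrium explicitly and then verify that no user can profitably deviate, either in crossing time or in mode. We guess the shape from the right panel of Figure~\ref{fig:equilibrium_both_models}: a trapezoidal waiting-time profile. The queue rises with slope $e$ on $[t_A,t_B]$, stays flat at the level $w_{\max}=z_T-z_C$ on $[t_B,t_C]$, and falls with slope $-L$ on $[t_C,t_D]$. Throughout $[t_A,t_D]$ the bottleneck serves at rate $\mu$.

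\emph{Edge intervals.} The cap $w(t)\le z_T-z_C$ is forced by mode choice. A car user who crosses at their desired time pays $z_C+w(t)$. If $w(t)$ exceeded $z_T-z_C$, that user would strictly prefer transit. Conversely, any user who is indifferent between car and transit pays exactly $z_T$.
\begin{itemize}
\item The early segment must climb from $0$ to $z_T-z_C$ at slope $e$, so it lasts $(z_T-z_C)/e$. Serving at rate $\mu$ gives $N_e=\mu(z_T-z_C)/e$.
\item The late segment likewise gives $N_L=\mu(z_T-z_C)/L$.
\end{itemize}
Early users crossing at $t<t_B$ have desired times later than $t$. The slope-$e$ condition from the car-only analysis makes their cost $z_C+w(t)+e(t^*-t)$ constant in $t$ over the segment, and the first such user (with $w=0$) must have total cost $z_T$. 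This is consistent with users being served in order of desired times, and it lets us pin down $t_A$ and $t_B$ in terms of $t_1$ and $N_e$.

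\emph{Middle interval.} On $[t_B,t_C]$ the queue is flat, so everyone crossing there does so at their desired time and pays $z_C+(z_T-z_C)=z_T$. They are indifferent to transit. Desired times in the middle arrive at rate $\lambda>\mu$, but only $\mu$ per unit time can cross. The on-time flow is therefore $\mu(t_C-t_B)$, and the remaining users with desired times in $[t_B,t_C]$ take transit. Mass conservation, with $t_B-t_1$ and $t_2-t_C$ fixed by the ordering of desired times in the edge segments, gives
\[N_o=\mu(t_C-t_B),\qquad N_T=\Lambda-N_e-N_L-N_o.\]
The middle interval has length $(t_2-t_1)$ minus the desired-time spans used up by the early and late users. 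Those spans grow linearly in $z_T-z_C$. Hence $N_o$ is a strictly decreasing affine function of $z_T-z_C$. Substituting shows $N_T=\Lambda(1-\mu/\lambda)$ minus a positive multiple of $(z_T-z_C)$, so $N_T$ is also strictly decreasing.

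\emph{Verification.} Three checks remain.
\begin{enumerate}
\item No car user gains by shifting time, because the slopes are exactly $e$ and $-L$ on the edges and shifting off the flat part adds schedule delay.
\item No transit user gains by switching to car, because every crossing time yields a cost of at least $z_T$.
\item The middle interval is nonempty, $t_C>t_B$. This is exactly where the hypothesis $z_T-z_C<T_C$ enters: when $z_T-z_C=T_C$, the trapezoid degenerates to the triangle of the car-only model with height $T_C$.
\end{enumerate}

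\emph{Main obstacle.} The hardest part is the bookkeeping that links desired times to crossing times on the edge segments. Because desired times arrive at rate $\lambda$ while service is at rate $\mu$, I would set this up with cumulative arrival and departure curves, as in the car-only analysis. The aim is to show that the desired-time spans consumed by the early and late groups are proportional to $z_T-z_C$, and that the resulting middle interval is positive precisely when $z_T-z_C<T_C$. Since $z_T-z_C\ge 0$, the counts $N_e,N_L\ge 0$ hold automatically.
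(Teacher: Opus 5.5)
\textbf{Assessment.} Your overall route is the standard constructive argument: guess the trapezoidal profile, read off $N_e,N_L$ from the slopes, obtain $N_o,N_T$ by mass balance, and then verify that no user deviates. The paper gives no proof of this proposition and defers to \citet{GONZALES20121519}. Your counting reproduces exactly what the paper later uses in the proof of Proposition~\ref{prop:rev-opt-static-tolls}: $N_o=(1-\frac{z_T-z_C}{T_C})\frac{\Lambda\mu}{\lambda}$ and $N_T=(1-\frac{z_T-z_C}{T_C})\Lambda(1-\frac{\mu}{\lambda})$. You also correctly identify that $t_C>t_B$ is equivalent to $z_T-z_C<T_C$.

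\textbf{The false step.} You claim that the first early user (crossing at $t_A$ with $w=0$) must pay $z_T$, and you use this to pin down $t_A$ and $t_B$. The slope-$e$ condition does make a given early user's cost constant across crossing times in $[t_A,t^*]$, namely $z_C+e(t^*-t_A)$. But this value varies across users. Under FIFO, the first early user has desired time $t_1$ and pays
\[
z_C+e(t_1-t_A)=z_T-(z_T-z_C)\tfrac{\mu}{\lambda}<z_T.
\]
Only the boundary user with $t^*=t_B$, and the on-time users, pay exactly $z_T$.

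If you actually impose your condition, you get $t_1-t_A=(z_T-z_C)/e=t_B-t_A$, i.e.\ $t_B=t_1$. That leaves no desired-time span for the $N_e>0$ early users and contradicts your own later bookkeeping. The correct anchoring is FIFO mass balance together with the boundary user being on time at $t_B$:
\[
\mu(t_B-t_A)=\lambda(t_B-t_1)=N_e,
\]
so $t_B=t_1+N_e/\lambda$ and $t_A=t_B-N_e/\mu<t_1$. The late side is symmetric.

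\textbf{Checks to add after the fix.}
\begin{itemize}
\item Your verification list omits that car users must not prefer transit. For an early user this holds because $z_C+e(t^*-t_A)\le z_C+e(t_B-t_A)=z_T$ whenever $t^*\le t_B$; the late side is analogous.
\item You should confirm that each early user's assigned crossing time $t=t_A+\lambda(t^*-t_1)/\mu$ indeed precedes $t^*$. This holds because $t-t^*$ is increasing and linear in $t^*$, negative at $t^*=t_1$, and zero at $t^*=t_B$.
\item When you build the cumulative curves, note that a delay profile rising at slope $e$ requires a queue-entry rate of $\mu/(1-e)$. This is where the standing assumption $e<1$ (i.e.\ $c_e<c_W$) is needed.
\end{itemize}
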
 
\vspace{-3pt}
Proposition~\ref{prop:user-eq-two-modes} characterizes the equilibrium structure in the regime $z_T \in [z_C, z_C + T_C]$, which is arguably the most practically relevant case: while public transit is typically less attractive than driving under free-flow conditions, it is often not sufficiently worse for all users to travel by car, so that both modes are used at equilibrium. In this regime, Proposition~\ref{prop:user-eq-two-modes} shows that the waiting time profile for car users is trapezoidal (see right of Figure~\ref{fig:equilibrium_both_models}), which arises as the queuing delays at the car bottleneck can never exceed $z_T - z_C$, since any such user would prefer to switch to transit. We also note that the interval $[t_A, t_D]$ in Proposition~\ref{prop:user-eq-two-modes} need not coincide with, and is generally a subset of, the corresponding interval $[t_A^{(\mathrm{No})}, t_D^{(\mathrm{No})}]$ in the car-only bottleneck. For more details on equilibria in the bottleneck model with an outside option and the resulting equilibrium bottleneck arrival and departure (crossing) time profiles of car users, see~\citet{GONZALES20121519} and Figure~\ref{fig:arrival_profile_mfd} in Appendix~\ref{apdx:arrival-departure-profiles}.

%We also note that the interval $[t_A, t_D]$ in the statement of the proposition may not coincide with (and is a subset of) the corresponding interval $[t_A^{(No)}, t_D^{(No)}]$ in the car-only bottleneck.

\begin{comment}
\begin{figure}[tbh!]
      \centering
      \includegraphics[width=80mm]{Fig/equilibrium_with_outside_option.png}
      \vspace{-15pt}
      \caption{\small \sf Depiction of the equilibrium waiting times in the classical bottleneck model with an outside option in the setting when $z_C + T_C > z_T$. %All users passing the bottleneck at $t< \Tilde{t}$ pass the bottleneck earlier than their desired time while those passing the bottleneck at $t > \Tilde{t}$ arrive later than their desired depart
      }
      \label{fig:equilibrium-with-outside-option} 
   \end{figure}    
\end{comment}

%minimize their total travel cost
\vspace{-5pt}
\subsection{Performance Metrics to Evaluate Equilibria Induced by Tolling Policies} \label{subsec:perf-metrics}
\vspace{-2pt}
%Having characterized the equilibrium outcomes in the classical bottleneck model with and without an outside option, we now turn to the problem of 

Having presented the untolled equilibrium outcomes in the bottleneck model with and without an outside option, we now present the metrics to assess the equilibria induced by a (possibly time-varying) tolling policy $\tau(\cdot)$. We focus on revenue and total system cost metrics, both of which we normalize by the waiting time penalty $c_W$. The revenue of a tolling policy, denoted $R(\tau(\cdot))$, is defined as the total toll payments collected from car users at the resulting equilibrium. The total system cost, denoted $SC(\tau(\cdot))$, measures the aggregate travel burden borne by users at equilibrium and is defined as the sum of waiting time costs, schedule delay costs, and generalized costs (i.e., $z_T$ or $z_C$) associated with car or transit use. Equivalently, the total system cost is the sum of users' total travel costs at equilibrium \emph{minus} their toll payments. Consistent with standard practice in congestion pricing~\citep{vickrey1969congestion,GONZALES20121519} and economics~\citep{a2fd2512-18e4-3e86-a4f2-2f4443994eb2} that excludes tolls (or prices) when defining system cost or welfare, toll payments are excluded from total system cost, as they represent transfers between users and the planner. For formal mathematical definitions of both metrics, see Appendix~\ref{apdx:perf-metrics}.

\vspace{-5pt}
\section{Static vs. Dynamic Tolling in the Bottleneck Model} \label{sec:revenue-optimal-tolling-classical-bottleneck}
\vspace{-2pt}
This section begins our performance comparison between static revenue-optimal tolling and its revenue and system-cost-optimal dynamic tolling counterparts in the bottleneck model with an outside option. In this setting, we first derive closed-form expressions for revenue-optimal static and dynamic tolls in Sections~\ref{subsec:rev-opt-static} and~\ref{subsec:rev-opt-dynamic}, respectively. Leveraging these characterizations, we then quantify the performance gap between static revenue-optimal tolls and its dynamic optimal benchmarks by establishing lower bounds on the fraction of optimal revenue attained and multiplicative upper bounds on total system cost in Sections~\ref{subsec:rev-comp-static-dynamic} and~\ref{subsec:system-cost-bottleneck-comp}.

%relative to its dynamic counterparts 

%We then leverage these characterizations to study the performance gap between static and dynamic tolling policies by establishing lower bounds on the fraction of optimal revenue attained and multiplicative upper bounds on the minimum total system cost achievable under static revenue-optimal tolls, relative to its dynamic counterparts, in Sections~\ref{subsec:rev-comp-static-dynamic} and~\ref{subsec:system-cost-bottleneck-comp}, respectively.

We focus on the non-trivial and empirically relevant regimes when $\mu < \lambda$, i.e., the congested regime when the bottleneck service rate is strictly below the desired bottleneck departure rate, and $z_T \geq z_C$, i.e., car travel under free-flow is preferred to transit. Note that when $\mu \geq \lambda$, congestion does not arise and the static and dynamic optimal tolls on both revenue and system cost metrics coincide, corresponding to a uniform toll of $\max \{ z_T - z_C, 0 \}$. Moreover, when $z_T<z_C$, all users strictly prefer transit under any tolling policy, yielding no revenue and a total system cost of $z_T\Lambda$.

\vspace{-5pt}
\subsection{Static Revenue-Optimal Tolls} \label{subsec:rev-opt-static}
\vspace{-2pt}
This section characterizes the revenue-optimal static toll $\tau$ in the bottleneck model with an outside option, where the toll is constant over time (i.e., $\tau(t) = \tau$ for all $t$). Under a static toll $\tau$, let $N_T(\tau)$ denote the number of users that choose transit at equilibrium. The toll revenue is given by $R(\tau) = \tau (\Lambda - N_T(\tau))$. Proposition~\ref{prop:rev-opt-static-tolls} derives an expression for the revenue-optimal static toll $\tau_s^*$, showing that it depends on the magnitude of the cost difference between transit and car travel at free-flow, given by $z_T - z_C$. When this difference is below a threshold, revenue is maximized by setting the toll at its highest feasible level, $z_T-z_C$, beyond which all users would switch to transit. In contrast, when this difference is larger than that threshold, the revenue-optimal static toll is strictly lower and depends on the remaining model parameters, as characterized in Proposition~\ref{prop:rev-opt-static-tolls}.

%shows that the revenue-optimal static toll depends on the magnitude of $z_T - z_C$, the cost difference between transit and car travel at free-flow. When this difference is below a certain threshold, it is optimal to set the toll at its maximum feasible level, $z_T - z_C$, since any higher toll would induce all users to switch to transit. In contrast, when this difference is larger than that threshold, the revenue-optimal toll is strictly lower and depends on the remaining model parameters, as characterized in Proposition~\ref{prop:rev-opt-static-tolls}.

%Under a static toll, letting $N_T(\tau)$ denote the number of users taking transit at equilibrium under the toll $\tau$, the expression for the total toll revenue simplifies to $R(\tau) = \tau (\Lambda - N_T(\tau))$. 

%When this magnitude is small, it is optimal to set the maximum possible toll of $z_T - z_C$ (as any higher toll would make transit more attractive at equilibrium), and when this is large, then it is optimal to set the toll to a smaller value that depends on the other problem parameters.

\vspace{-2pt}
\begin{proposition}[Revenue-Optimal Static Tolls] \label{prop:rev-opt-static-tolls}
    Suppose $\mu < \lambda$ and users choose between two modes, traveling by car through a bottleneck with a free flow cost of $z_C$, and using a public transit alternative with a fixed cost of $z_T$, where $z_T \geq z_C$. Then, the revenue-optimal static toll is given by:
    {\setlength{\abovedisplayskip}{1pt}
\setlength{\belowdisplayskip}{1pt}
\setlength{\jot}{1pt}
    \begin{align} \label{eq:optimal-static-toll}
    \tau^* = 
    \begin{cases}
    \max \left\{ \frac{z_T - z_C}{2} + \frac{\Lambda eL}{2(\lambda - \mu)(e+L)}, z_T - z_C - \frac{\Lambda e L}{\mu(e+L)} \right\}, & \text{if } z_T - z_C \geq \frac{\Lambda eL}{(\lambda - \mu)(e+L)} \\[-2pt]
    z_T - z_C, & \text{if } 0 \leq z_T - z_C < \frac{\Lambda eL}{(\lambda - \mu)(e+L)}. %\\
    %0, & \text{if } z_T - z_C < 0
\end{cases}
\end{align}}
\end{proposition}

\vspace{-6pt}

\begin{proof}%[Proof of Proposition~\ref{prop:rev-opt-static-tolls}]
First note that a static toll $\tau \geq 0$ must be such that $z_T - z_C - \tau \leq T_C$. Note that if $z_T - z_C - \tau > T_C$, then $\tau$ can be increased to $z_T - z_C - T_C$ without changing the number of car users, resulting in a higher revenue. Next, for any $\tau \geq \max \{ 0, z_T - z_C - T_C \}$, note that the equilibrium is as specified in Proposition~\ref{prop:user-eq-two-modes} other than that the maximum waiting time is $\Bar{w} = z_T - z_C - \tau$. 

%Next, %to derive an expression for the revenue, $R(\tau) = \tau \left[ \Lambda - N_T(\tau) \right]$, 
Next, we derive an expression for the number of users traveling by transit under the toll $\tau$. By the linearity of the waiting time profile, all car users that pass the bottleneck either early or late (i.e., between $[t_A, t_B]$ and $[t_C, t_D]$, respectively, on the right of Figure~\ref{fig:equilibrium_both_models}) correspond to a $\frac{\Bar{w}}{T_C}$ fraction of all users. The remaining $1-\frac{\Bar{w}}{T_C}$ fraction arrive exactly on time (either using transit or car), where the total on-time car users is $N_o(\tau) = (1-\frac{\Bar{w}}{T_C})(t_2 - t_1)\mu = (1-\frac{\Bar{w}}{T_C})\frac{\Lambda}{\lambda}\mu$ and transit users is $N_T(\tau) = (1-\frac{\Bar{w}}{T_C})\Lambda \left( 1 - \frac{\mu}{\lambda}\right)$. We then obtain the following expression for the revenue as a function of $\tau$: $R(\tau) \! = \! \tau \left[ \Lambda - N_T(\tau) \right] \! = \! \tau \left[ \Lambda \! - \! \left(1-\frac{z_T - z_C - \tau}{T_C} \right)\Lambda \left( 1 \! - \! \frac{\mu}{\lambda}\right) \right] \! \stackrel{(a)}{=} \! \mu \tau \left[ \frac{\Lambda}{\lambda} \! + \! \frac{(z_T  - z_C - \tau)(e \! + \! L)}{eL} \left( 1 \! - \! \frac{\mu}{\lambda} \right) \right],$ \begin{comment}
{\setlength{\abovedisplayskip}{2pt}
\setlength{\belowdisplayskip}{1pt}
\setlength{\jot}{1pt}
%Then, since $N_T = (1-\frac{T}{T_C})\Lambda \left( 1 - \frac{\mu}{\lambda}\right)$ by Lemma~\ref{lem:number of on-time arrivals by car}, it follows that the revenue as a function of the toll $\tau$ is: % given by:
\begin{align*}
    R(\tau) \! &= \! \tau \left[ \Lambda - N_T(\tau) \right] \! = \! \tau \left[ \Lambda \! - \! \left(1-\frac{z_T - z_C - \tau}{T_C} \right)\Lambda \left( 1 \! - \! \frac{\mu}{\lambda}\right) \right] \! \stackrel{(a)}{=} \! \mu \tau \left[ \frac{\Lambda}{\lambda} \! + \! \frac{(z_T  - z_C - \tau)(e \! + \! L)}{eL} \left( 1 \! - \! \frac{\mu}{\lambda} \right) \right],
\end{align*}}
\end{comment}
where (a) follows as $T_C = \frac{\Lambda e L}{\mu (e + L)}$. Consequently, the static revenue optimization problem is:
{\setlength{\abovedisplayskip}{2pt}
\setlength{\belowdisplayskip}{1pt}
\setlength{\jot}{1pt}
\begin{align} \label{eq:rev-opt-static-bottleneck}
    \max_{\tau \in \mathbb{R}} R(\tau) \! := \! \mu \tau \left[ \frac{\Lambda}{\lambda} + \frac{(z_T - z_C - \tau)(e+L)}{eL} \left( 1 - \frac{\mu}{\lambda} \right) \right] \quad \! \! \text{s.t.} \! \! \! \quad \max \{ 0, z_T - z_C - T_C \} \! \leq \! \tau \! \leq \! z_T - z_C.
\end{align}}
Taking the first-order condition of this single dimensional quadratic optimization problem, we can show that its optimal solution corresponds to the static tolls in the statement of the proposition.
\end{proof}
\vspace{-2pt}
Proposition~\ref{prop:rev-opt-static-tolls} characterizes the revenue-optimal static toll by casting the static revenue optimization problem as a single-variable quadratic program, where the optimal toll depends on the magnitude of $z_T - z_C$ relative to the threshold $\frac{\Lambda eL}{(\lambda-\mu)(e+L)}$. A key step in proving this result requires extending the equilibrium analysis in Proposition~\ref{prop:user-eq-two-modes}~\citep{GONZALES20121519}, which characterizes monotonicity properties of the number of users choosing transit at equilibrium, to obtain an explicit expression for this quantity. This explicit characterization enables closed-form expressions for both revenue and total system cost under the optimal static toll across parameter regimes, which underpins our performance comparison between static and dynamic tolling in Sections~\ref{subsec:rev-comp-static-dynamic} and~\ref{subsec:system-cost-bottleneck-comp}.

\vspace{-4pt}
\subsection{Dynamic Revenue-Optimal Tolls} \label{subsec:rev-opt-dynamic}
\vspace{-2pt}
We now characterize the revenue-optimal dynamic tolling policy $\tau_d^*(\cdot)$. Theorem~\ref{thm:rev-opt-dynamic-tolls} establishes that the revenue-optimal dynamic toll has the trapezoidal structure shown in Figure~\ref{fig:rev-opt-dynamic-tolls}, where the tolling policy is constant and equal to $z_T - z_C$ over the middle of the rush, i.e., over the interval $[t_B^*, t_C^*]$, and decreases linearly on either side of this interval. The corresponding equilibrium arrival and departure (crossing) time profiles of users at the bottleneck under this tolling policy are as depicted in Figure~\ref{fig:arrival_profile_dynamic_opt} in Appendix~\ref{apdx:arrival-departure-opt}.

%Theorem~\ref{thm:rev-opt-dynamic-tolls} shows that the optimal toll takes the form shown in Figure~\ref{fig:rev-opt-dynamic-tolls}, where the toll is constant and fixed at $z_T - z_C$ in the middle of the rush and decreases linearly as we move away from the rush.

%For notational brevity, we drop the superscript $\tau^*(\cdot)$ in the notation of the times $t_A, t_B, t_C$, and $t_D$

%we denote $[t_A, t_D]$ as the equilibrium interval over which users pass the bottleneck under the optimal tolling policy 
\vspace{-2pt}
\begin{theorem}[Revenue-Optimal Dynamic Tolls] \label{thm:rev-opt-dynamic-tolls}
Suppose $\mu < \lambda$ and users choose between traveling by car through a bottleneck with a free flow cost $z_C$ and using public transit with a fixed cost $z_T$, where $z_T \geq z_C$. Then, the revenue-optimal dynamic tolling policy $\tau_d^*(\cdot)$ is such that (see Figure~\ref{fig:rev-opt-dynamic-tolls}): \vspace{-1pt}
\begin{enumerate}
    \item The toll is constant and fixed at $\tau_d^*(t) = z_T - z_C$ for all $t \in [t_B^*, t_C^*]$ during the middle of the rush. Moreover, the time interval $[t_B^*, t_C^*]$ over which the toll remains constant corresponds to a fraction $f^* = \max \{ 1 - \frac{(z_T - z_C) \mu (e+L)}{\Lambda e L} \left( 1 - \frac{\mu}{\lambda} \right), 0 \}$ of the interval $[t_1, t_2]$, i.e., $\frac{t_C^* - t_B^*}{t_2 - t_1} = f^*$.
    %The number of early car users $N_e = \frac{\mu T}{e}$ and they travel at the beginning of the morning rush between $[t_A, t_B]$. The number of late car users $N_L = \frac{\mu T}{L}$ and they travel at the end of the morning rush between $[t_C, t_D]$.
    \item For all $t \in [t_A^*, t_B^*]$, $\tau^*(t) = z_T - z_C - e (t_B^* - t)$, and for all $t \in [t_C^*, t_D^*]$, $\tau^*(t) = z_T - z_C - L (t - t_C^*)$, where $t_B^* - t_A^* = \frac{L}{e+L} \frac{(1 - f^*) \Lambda}{\mu}$, and $t_D^* - t_C^* = \frac{e}{e+L} \frac{(1 - f^*) \Lambda}{\mu}$.
\end{enumerate}
Moreover, under this tolling policy, the optimal revenue is given by: 
{\setlength{\abovedisplayskip}{1pt}
\setlength{\belowdisplayskip}{1pt}
\setlength{\jot}{1pt}
\begin{align} \label{eq:dynamic-opt-rev}
    R^* =
    \begin{cases}
        (z_T - z_C) \Lambda \frac{\mu}{\lambda} + \frac{(z_T - z_C)^2 \mu (e+L)}{2 e L} \left( 1 - \frac{\mu}{\lambda} \right)^2, & \text{if } z_T - z_C \leq \frac{\Lambda e L}{ (\lambda - \mu) (e+L)} \frac{\lambda}{\mu}, \\[-2pt]
        (z_T - z_C) \Lambda - \frac{\Lambda^2}{2 \mu} \frac{eL}{e+L}, & \text{if } z_T - z_C > \frac{\Lambda e L}{ (\lambda - \mu) (e+L)} \frac{\lambda}{\mu}.
    \end{cases}
\end{align}}
\end{theorem}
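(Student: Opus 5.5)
The plan is to prove the theorem in two halves: an upper bound on the revenue of \emph{any} dynamic toll, obtained from a relaxation with no queue, and a matching lower bound, obtained by showing that the trapezoidal toll in the statement induces an equilibrium attaining that bound. The upper bound rests on the individual-rationality constraint of the outside option. At any equilibrium under a toll $\tau(\cdot)$, every car user with desired time $t^*$ crossing at $t$ satisfies $c(t,t^*) \leq z_T$, so their toll payment obeys
\[
\tau(t) \leq z_T - z_C - w(t) - e(t^*-t)_+ - L(t-t^*)_+ \leq z_T - z_C - e(t^*-t)_+ - L(t-t^*)_+ .
\]
Summing over car users, $R(\tau(\cdot))$ is at most the value of a scheduling problem. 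In that problem the planner chooses which users drive and assigns them crossing times with throughput at most $\mu$. Each driver contributes $z_T - z_C$ minus their schedule-delay cost, queues are dropped since they only reduce revenue, and transit users contribute nothing.

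Next I would solve this relaxed scheduling problem using exchange arguments.
\begin{enumerate}
\item Without loss of generality the served drivers cross in order of desired time, and the service interval has no idle time at rate $\mu$. Compressing idle gaps toward desired times only lowers schedule delay.
\item Users whose desired time lies in a central window are served on time at rate $\mu$, so the surplus $\lambda-\mu$ of that window takes transit.
\item In the two outer portions of $[t_1,t_2]$, of total length fraction $g = 1-f$, all users drive. They are served early in $[t_A,t_B]$ and late in $[t_C,t_D]$, and the split of the displaced window $g\Lambda/\mu$ in proportions $\frac{L}{e+L}$ and $\frac{e}{e+L}$ equalizes the marginal early and late penalties.
\end{enumerate}
Summing the triangular schedule-delay costs of the displaced users gives relaxed revenue as a function of $g$:
\[
\Lambda(z_T-z_C)\Big[\tfrac{\mu}{\lambda} + g\big(1-\tfrac{\mu}{\lambda}\big)\Big] - \frac{g^2\Lambda^2 eL}{2\mu(e+L)}, \qquad g\in[0,1].
\]
This is a concave quadratic. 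Its first-order condition yields $g^* = \frac{(z_T-z_C)\mu(e+L)}{\Lambda eL}(1-\frac{\mu}{\lambda})$, clipped at $1$, which is exactly $1-f^*$. Substituting gives the first case of \eqref{eq:dynamic-opt-rev} when $g^* \leq 1$, which is the stated threshold, and gives $(z_T-z_C)\Lambda - \frac{\Lambda^2 eL}{2\mu(e+L)}$ when $g^*$ is clipped to $1$.

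For achievability, I would verify that the trapezoidal toll described in items 1 and 2 induces a queue-free equilibrium realizing this schedule. On $[t_A^*,t_B^*]$ the toll rises at slope $e$, and on $[t_C^*,t_D^*]$ it falls at slope $L$, mirroring the Vickrey waiting-time profile. Each early or late driver is therefore indifferent among crossing times and pays exactly $z_T - z_C$ minus their schedule delay, so their total cost equals $z_T$. On-time drivers in the central window pay $z_T-z_C$ and are indifferent with transit. Crossing outside $[t_A^*,t_D^*]$ is unattractive because the toll there is at least $0$ while the schedule delay already exceeds $z_T-z_C$ at the endpoints. Finally, the rate-$\mu$ crossing schedule keeps the queue at zero, so every inequality in the upper bound is tight.

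I expect the main obstacle to be the relaxation step: rigorously showing that the optimal assignment has this ``on-time core plus fully-served flanks'' structure. In particular, one must show that it is never better to serve only some of the users with outer desired times while rejecting others, or to spread central users off-time. My first attempt would be a pairwise exchange argument, swapping a transit user for a driver with larger schedule delay, combined with the observation that marginal schedule-delay cost grows linearly with distance from the desired time. Together these justify both the contiguity of the served set and the $\frac{L}{e+L}$ versus $\frac{e}{e+L}$ split.
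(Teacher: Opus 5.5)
There is a genuine gap in the upper-bound half. The relaxation you set up is not tight, and the value you assign to it is not its actual value.

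\textbf{The relaxation's value is larger than you computed.} Keeping only the outside-option constraint $c(t,t^*)\le z_T$ driver by driver makes your relaxation the first-best problem. Its value is the largest total surplus $\sum_{\text{drivers}}(z_T-z_C-\text{schedule delay})$ over no-queue schedules. If you actually sum the schedule delays in your flank structure, you do not get $\frac{g^2\Lambda^2 eL}{2\mu(e+L)}$. Early drivers have desired times in $[t_1,t_B]$ and cross at rate $\mu$ in $[t_A,t_B]$, so the earliest of them is early by $t_1-t_A=(1-\frac{\mu}{\lambda})(t_B-t_A)$, not by $t_B-t_A$. The total schedule-delay cost is therefore $\frac{g^2\Lambda^2 eL}{2\mu(e+L)}(1-\frac{\mu}{\lambda})$. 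With the correct term:
\begin{itemize}
\item the relaxed objective strictly exceeds your displayed expression at every $g>0$, by $\frac{g^2\Lambda^2 eL}{2\mu(e+L)}\frac{\mu}{\lambda}$;
\item its maximizer is $g=\frac{(z_T-z_C)\mu(e+L)}{\Lambda eL}$ (clipped at $1$), which is the system-optimal split, not $1-f^*$;
\item for $0<z_T-z_C\le\frac{\Lambda eL}{\mu(e+L)}$, its value $z_T\Lambda-SC^*$ exceeds the claimed $R^*$ by $\frac{(z_T-z_C)^2\mu(e+L)}{2eL}(1-\frac{\mu}{\lambda})\frac{\mu}{\lambda}$.
\end{itemize}
So your upper and lower bounds cannot meet. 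Your formula for the revenue in terms of $g$ coincides with the true optimum only because the factor $(1-\frac{\mu}{\lambda})$ was dropped.

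\textbf{The missing constraint is incentive compatibility of departure-time choice.} The toll at time $t$ is common to every user who could cross at $t$. Hence in equilibrium $w+\tau$ must have slope $e$ where early drivers cross and $-L$ where late drivers cross. Anchored at the indifferent on-time driver at $t_B^*$, an early driver with desired time $t^*$ pays $\tau^*(t)=z_T-z_C-e(t_B^*-t)$. This is strictly less than $z_T-z_C-e(t^*-t)$, so her cost is $z_T-e(t_B^*-t^*)<z_T$; symmetrically, late drivers have cost $z_T-L(t^*-t_C^*)$. Your achievability claim that every early or late driver has cost exactly $z_T$ is therefore false.

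No toll can remove these rents. Suppose an early driver with desired time $t^*$, crossing at $s<t^*$, had cost $z_T$. Then any driver with desired time in $[s,t^*)$ could copy $s$ and get cost strictly below $z_T$.

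\textbf{How the paper's proof handles this.} It uses the equilibrium conditions $w'+\tau'=e$, $w'+\tau'=-L$ and $w+\tau\le z_T-z_C$ to show that $w+\tau$ is a trapezoid. Replacing $\tau$ by $w+\tau$ gives $R\le\mu\int(w+\tau)\,dt$, and the bound is the trapezoid's area. Its deductions $\frac{e\mu}{2}(t_B-t_A)^2+\frac{L\mu}{2}(t_D-t_C)^2$ equal schedule delay \emph{plus} the information rents above; the rents supply the missing $\frac{\mu}{\lambda}$ share.

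\textbf{Repairing the lower bound.} Compute $\mu\int\tau^*(t)\,dt$ directly rather than appealing to tightness. Also, the schedule delay at $t_A^*$ is $e(1-\frac{\mu}{\lambda})(t_B^*-t_A^*)\le z_T-z_C$, so it does not exceed $z_T-z_C$ as you claim. Deterring crossings outside $[t_A^*,t_D^*]$ instead requires setting a sufficiently high toll there.
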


%\vspace{-16pt}

\begin{figure}[tbh!]
      \centering
      \includegraphics[width=85mm]{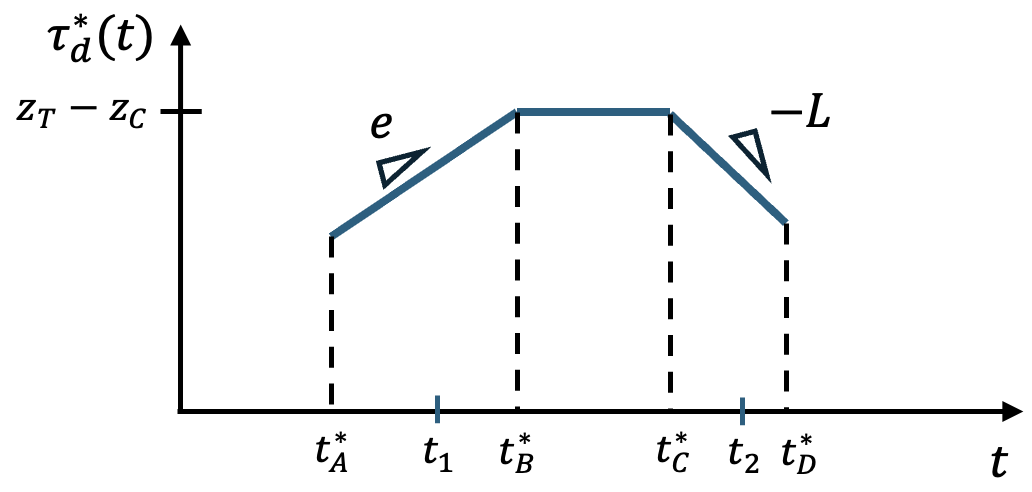}
      \vspace{-14pt}
      \caption{\small \sf Depiction of the revenue optimal dynamic tolling policy. Here, $[t_A^{*}, t_D^{*}]$ denotes the equilibrium interval over which users pass the bottleneck under the dynamic revenue-optimal tolling policy $\tau_d^*(\cdot)$, analogous to the no-toll equilibrium interval $[t_A, t_D]$ shown in the right of Figure~\ref{fig:equilibrium_both_models}. The sub-interval $[t_B^{*}, t_C^{*}]$ corresponds to the middle of the rush, during which the toll is constant and equal to $z_T - z_C$.}
      \label{fig:rev-opt-dynamic-tolls} 
   \end{figure}  

%\vspace{-7pt}

%\vspace{-8pt}

\begin{proof}
%Recall that the total cost for a car user departing the bottleneck at time $t$ with a desired bottleneck departure time of $t^*$ is given by Equation~\eqref{eq:userCostNormalized}. 
Under the cost function in Equation~\eqref{eq:userCostNormalized}, a necessary condition at equilibrium is that all users minimize travel costs when using a car, i.e., the following condition must hold: $c'(t, t^*) = 0$, which implies that $w'(t) + \tau'(t) = e$ for early arrivals and $w'(t) + \tau'(t) = -L$ for late arrivals. Moreover, at equilibrium, $w(t) + \tau(t) \leq z_T - z_C$ for all $t$, as otherwise users would switch to transit. Thus, the sum of the equilibrium waiting and toll costs take the form depicted in Figure~\ref{fig:eq-waiting-tolls}, where the exact times $t_A', t_B', t_C', t_D'$ shown in the figure are determined endogenously by the tolling policy.

%the sum of the waiting and toll costs when using a car is at most $z_T - z_C$, i.e., 

%\vspace{-4pt}
%\begin{align*}
%    c'(t, t^*) = 0 \quad \implies \quad w'(t) + \tau'(t) = \frac{c_S}{c_W}.
%\end{align*}

%are given by Figure~\ref{fig:eq-waiting-tolls} for some periods $t_A', t_B', t_C', t_D'$, which depend on the tolling policy.

%\vspace{-13pt}
\begin{figure}[tbh!]
      \centering
      \includegraphics[width=85mm]{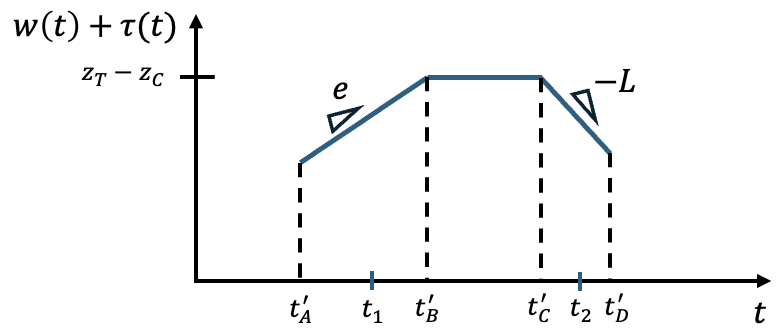}
      \vspace{-10pt}
      \caption{\small \sf Sum of the waiting plus toll costs at equilibrium. The time points $t_A', t_B', t_C', t_D'$ are determined endogenously by the tolling policy and may, in general, differ from the corresponding time points shown in Figure~\ref{fig:equilibrium_both_models} (right) and Figure~\ref{fig:rev-opt-dynamic-tolls}, which correspond to the no-toll and revenue-optimal dynamic toll settings, respectively.}
      \label{fig:eq-waiting-tolls} 
   \end{figure} 

%\vspace{-8pt}

Leveraging the structure of the sum of the waiting and toll costs at equilibrium in Figure~\ref{fig:eq-waiting-tolls}, we proceed by showing that in analyzing dynamic revenue-optimal policies, it suffices to restrict attention to tolling schemes under which the waiting time satisfies $w(t)=0$ for all periods $t$. To see this, for any policy $\tau(\cdot)$ with associated periods $t_A', t_B', t_C', t_D'$ as depicted in Figure~\ref{fig:eq-waiting-tolls}, we construct an alternate policy $\Tilde{\tau}(\cdot)$, where $\Tilde{\tau}(t) = w(t) + \tau(t)$ for all $t$. Then, we have $R(\tau(\cdot)) = \int_{t_A'}^{t_D'} \mu \tau(t) dt \leq \int_{t_A'}^{t_D'} \mu \Tilde{\tau}(t) dt = R(\Tilde{\tau}(\cdot)),$ where the inequality follows as $\Tilde{\tau}(t) \geq \tau(t)$ for all $t$.

Thus, focusing on tolling policies $\Tilde{\tau}(\cdot)$ where the waiting time $w(t) = 0$ for all $t$, we have: %. Then, the revenue of the policy $\Tilde{\tau}(\cdot)$ with no waiting time is given by:
{\setlength{\abovedisplayskip}{2pt}
\setlength{\belowdisplayskip}{1pt}
\setlength{\jot}{1pt}
\begin{align*}
    R(\Tilde{\tau}(\cdot)) &= \int_{t_A'}^{t_D'} \mu \Tilde{\tau}(t) dt = \int_{t_A'}^{t_B'} \mu \Tilde{\tau}(t) dt + \int_{t_B'}^{t_C'} \mu \Tilde{\tau}(t) dt + \int_{t_C'}^{t_D'} \mu \Tilde{\tau}(t) dt, \\
    &= \mu \bigg[ \int_{t_A'}^{t_B'} z_T - z_C - L(t_B' - t) dt + \int_{t_B'}^{t_C'} (z_T - z_C) dt + \int_{t_C'}^{t_D'} z_T - z_C - L(t - t_C') dt \bigg], \\
    &= \mu \bigg[ (z_T - z_C) (t_D' - t_A') - \frac{e (t_B' - t_A')^2}{2} - \frac{L (t_D' - t_C')^2}{2} \bigg].
\end{align*}}
Let $f$ be the fraction of time in the horizontal portion of the waiting time curve, i.e., $\frac{t_C' - t_B'}{t_2 - t_1} = f$. Then, $t_C' - t_B' = f (t_2 - t_1) = f \frac{\Lambda}{\lambda}$. Since the remaining $(1-f)$ fraction of users all use a car, it follows that $\mu (t_B' - t_A' + t_D' - t_C') = (1-f) \Lambda$. Thus, the above revenue expression reduces to:
{\setlength{\abovedisplayskip}{1pt}
\setlength{\belowdisplayskip}{1pt}
\setlength{\jot}{1pt}
\begin{align*}
    R(\Tilde{\tau}(\cdot)) &= (z_T - z_C) \left(f \Lambda \frac{\mu}{\lambda} + (1 - f)\Lambda \right) - \frac{e \mu (t_B' - t_A')^2}{2} - \frac{L \mu (t_D' - t_C')^2}{2}.
\end{align*}}
Next, we know that $t_B' - t_1 + t_2 - t_C' = (1-f) \frac{\Lambda}{\lambda}$, where $\lambda(t_B' - t_1) = \mu (t_B' - t_A')$ and $\lambda (t_2 - t_C') = \mu (t_D' - t_C')$. Then, it is straightforward to see that to maximize $\frac{e \mu (t_B' - t_A')^2}{2} + \frac{L \mu (t_D' - t_C')^2}{2}$, it must be that $t_B' - t_A' = \frac{L}{e+L} \frac{(1-f) \Lambda}{\mu}$ and $t_D' - t_C' = \frac{e}{e+L} \frac{(1-f) \Lambda}{\mu}$. We thus obtain the following expression for the revenue, which, with a slight abuse of notation, we re-express as a function of the fraction $f$:
{\setlength{\abovedisplayskip}{1pt}
\setlength{\belowdisplayskip}{1pt}
\setlength{\jot}{1pt}
\begin{align} \label{eq:rev-exp-f}
    R(f) = (z_T - z_C) \left(f \Lambda \frac{\mu}{\lambda} + (1 - f)\Lambda \right) - \frac{\Lambda^2}{2 \mu} \frac{eL}{e+L}(1-f)^2.
\end{align}}

Given this relation for the revenue, we have the following revenue optimization problem:
{\setlength{\abovedisplayskip}{1pt}
\setlength{\belowdisplayskip}{1pt}
\setlength{\jot}{1pt}
\begin{align} \label{eq:constrained-dynamic-rev-opt}
    \max_{f \in [0, 1]} \! R(f) \! = \! (z_T - z_C) \! \left(f \Lambda \frac{\mu}{\lambda} \! + \! (1 - f)\Lambda \right) \! - \! \frac{\Lambda^2}{2 \mu} \frac{eL}{e+L}(1-f)^2 \quad \! \! \! \text{s.t.} \! \! \! \quad 1 - \min \left\{ \frac{z_T - z_C}{T_C}, 1 \right\} \! \leq \! f \! \leq \! 1.
\end{align}}
We have thus reduced the dynamic revenue optimization problem from one of optimizing over a set of tolling functions to optimizing over a single dimensional variable $f$. Note here that when $z_T - z_C \leq T_C = \frac{\Lambda e L}{\mu (e+L)}$, the lower bound constraint reduces to $f \geq 1 - \frac{z_T - z_C}{T_C}$ to ensure that the tolls are non-negative at all periods. Then, taking the above problem's first order condition, we obtain $f^* = \max \left\{ 1 - \frac{(z_T - z_C) \mu (e+L)}{\Lambda e L} \left( 1 - \frac{\mu}{\lambda} \right), 0 \right\}$. Finally, substituting $f^*$ in the revenue expression in Equation~\eqref{eq:rev-exp-f}, we obtain the expression for the optimal revenue in the theorem statement. %, which establishes our claim.
\end{proof}
\vspace{-3pt}
Theorem~\ref{thm:rev-opt-dynamic-tolls} characterizes the revenue-optimal dynamic toll in the bottleneck model with an outside option. To achieve this result, we reduced the revenue maximization problem, which requires optimizing over tolling functions, to a single-variable quadratic program. To our knowledge, this is the first analysis of revenue maximization in the bottleneck model with an outside option. The resulting revenue-optimal policy closely resembles the system-cost-optimal dynamic toll in~\citet{GONZALES20121519}, which also has a trapezoidal form and is, in particular, identical to the equilibrium waiting time profile shown on the right of Figure~\ref{fig:equilibrium_both_models} and eliminates congestion delays at equilibrium. Despite this similarity, the two policies differ in key ways. First, by comparing the functions in Figure~\ref{fig:equilibrium_both_models} (right) and Figure~\ref{fig:rev-opt-dynamic-tolls}, the revenue-optimal policy can be viewed as a vertical shift of the system-cost-optimal tolling profile. Additionally, the duration of the flat segment, where the toll equals $z_T - z_C$, is chosen to maximize revenue and thus generally differs from that under the system-cost-optimal policy.

\vspace{-6pt}

\subsection{Revenue Comparison for Static vs. Dynamic Revenue-Optimal Tolls} \label{subsec:rev-comp-static-dynamic}

\vspace{-2pt}

Leveraging the characterizations of the static and dynamic revenue-optimal tolls, we compare their performance by deriving lower bounds on the fraction of the optimal dynamic revenue attained under the static revenue-optimal policy. Theorem~\ref{thm:rev-comp-static-v-dynamic} shows that the revenue gap between the two policies depends on the magnitude of $z_T - z_C$, the difference between the cost of transit and car travel at free-flow. When $z_T - z_C$ lies below a lower threshold or above an upper threshold, the static revenue-optimal toll achieves at least two-thirds of the optimal dynamic revenue. In the intermediate regime, this revenue fraction is bounded below by one-half, implying that static revenue-optimal tolls achieve at least half of the dynamic optimal revenue across all parameter regimes. Sharper, regime-specific bounds are provided in the theorem statement.

\vspace{-2pt}
\begin{theorem}[Revenue Ratio: Static vs. Dynamic Revenue-Optimal Tolls] \label{thm:rev-comp-static-v-dynamic}
Suppose $z_T \geq z_C$ and $\mu < \lambda$. Moreover, let $\tau^*_{s}$ be the static revenue-optimal toll, $\tau^*_d(\cdot)$ be the dynamic revenue-optimal policy, and let $s = \frac{\Lambda e L}{(z_T - z_C)(\lambda - \mu)(e+L)} \geq 0$. Then, the ratio of the revenues of the two policies satisfies:
{\setlength{\abovedisplayskip}{1pt}
\setlength{\belowdisplayskip}{1pt}
\setlength{\jot}{1pt}
\begin{align*}
    \frac{R(\tau_s^*)}{R(\tau_d^*(\cdot))} \geq 
    \begin{cases}
    \frac{2}{3 - \frac{\mu}{\lambda}}, & \text{if } 0 \leq z_T - z_C < \frac{\Lambda eL}{(\lambda - \mu)(e+L)} \\[-2pt]
    \min \left\{ \frac{2+s}{4}, \frac{1}{2(1 - \frac{\mu}{\lambda})} \right\}, & \text{if } z_T - z_C \in \left[ \frac{\Lambda eL}{(\lambda - \mu)(e+L)}, \frac{\Lambda eL}{(e+L)} \left( \frac{1}{\lambda - \mu} + \frac{2}{\mu} \right) \right] \\[-2pt]
    \frac{2}{3}, & \text{if } z_T - z_C > \frac{\Lambda eL}{(e+L)} \left( \frac{1}{\lambda - \mu} + \frac{2}{\mu} \right).
\end{cases}
\end{align*}}
\end{theorem}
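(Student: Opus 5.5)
The plan is to compute both revenues in closed form from Proposition~\ref{prop:rev-opt-static-tolls} and Theorem~\ref{thm:rev-opt-dynamic-tolls}, and then bound their ratio separately in each of the three regimes of $\Delta := z_T - z_C$. Throughout I write $\rho = \mu/\lambda \in (0,1)$ and $K = \frac{\Lambda e L}{e+L}$, so that $T_C = K/\mu$ and the static threshold is $\Delta_1 = \frac{K}{\lambda-\mu}$. The dynamic threshold is $\Delta_2 = \frac{K}{\lambda-\mu}\cdot\frac{1}{\rho} = \frac{K\lambda}{\mu(\lambda-\mu)}$. The upper static threshold in the statement is $\Delta_3 = K\left(\frac{1}{\lambda-\mu}+\frac{2}{\mu}\right)$. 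In this notation, equation~\eqref{eq:rev-opt-static-bottleneck} becomes
\[
R(\tau)=\tau\left[\Lambda\rho + \frac{\mu(1-\rho)}{K}\Lambda(\Delta-\tau)\right],
\]
and the dynamic revenue is $R^* = \Delta\Lambda\rho + \frac{\Delta^2\mu\Lambda}{2K}(1-\rho)^2$ when $\Delta\le\Delta_2$, and $R^*=\Delta\Lambda-\frac{\Lambda K}{2\mu}$ otherwise.

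\textbf{Regime 1} ($\Delta<\Delta_1$). The static toll is $\tau=\Delta$, so $R(\tau_s^*)=\Delta\Lambda\rho$. Since $\Delta_1<\Delta_2$, the first dynamic branch applies, and
\[
\frac{R^*}{R(\tau_s^*)} = 1 + \frac{\Delta\mu(1-\rho)^2}{2K\rho}.
\]
Using $\Delta<\frac{K}{\lambda(1-\rho)}$ gives $\frac{\Delta\mu(1-\rho)^2}{2K\rho}<\frac{1-\rho}{2}$, so the ratio is at most $\frac{3-\rho}{2}$. This yields the bound $\frac{2}{3-\rho}$.

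\textbf{Regime 3} ($\Delta>\Delta_3$). First I check that the binding branch of the static max is the interior maximizer. The interior toll $\frac{\Delta}{2}+\frac{\Delta_1}{2}$ is at least the corner $\Delta - T_C$ exactly when $\Delta\le \Delta_1+2T_C=\Delta_3$. So for $\Delta>\Delta_3$ the static toll is the corner $\tau=\Delta-T_C$. At that toll every user drives and the revenue is $(\Delta-T_C)\Lambda$. Also $\Delta_3\ge\Delta_2$, which I will verify via $\frac{1}{\lambda-\mu}+\frac{2}{\mu}\ge\frac{\lambda}{\mu(\lambda-\mu)}$, equivalent to $\mu+2(\lambda-\mu)\ge\lambda$. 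Hence
\[
\frac{R(\tau_s^*)}{R^*}=\frac{\Delta-K/\mu}{\Delta-K/(2\mu)},
\]
which is increasing in $\Delta$. It therefore suffices to evaluate it at $\Delta=\Delta_3$, or more crudely at $\Delta = 2K/\mu$. There the ratio is $\frac{(1/(\lambda-\mu)+1/\mu)}{(1/(\lambda-\mu)+3/(2\mu))}\ge \frac{1}{3/2}=\frac{2}{3}$, since $\frac{a+b}{a+\frac{3}{2}b}\ge\frac{2}{3}$ for $a,b\ge0$.

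\textbf{Regime 2} ($\Delta_1\le\Delta\le\Delta_3$), which I expect to be the main obstacle. Here the static toll is $\tau=\frac{\Delta+\Delta_1}{2}$. Writing $\Delta_1=s\Delta$ with $s\le1$, the revenue at this toll is
\[
R(\tau_s^*)=\frac{\mu(1-\rho)\Lambda}{K}\tau^2=\frac{\mu(1-\rho)\Lambda\Delta^2(1+s)^2}{4K},
\]
using $\Lambda\rho = \frac{\mu(1-\rho)\Lambda}{K}\Delta_1$. The dynamic revenue must be bounded in both of its branches.

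\emph{Branch $\Delta\le\Delta_2$.} Using $\Lambda\rho\Delta=\frac{\mu(1-\rho)\Lambda}{K}s\Delta^2$, we get
\[
R^*=\frac{\mu(1-\rho)\Lambda\Delta^2}{K}\left(s+\frac{1-\rho}{2}\right),
\]
so the ratio is $\frac{(1+s)^2}{4s+2(1-\rho)}$. Here $\Delta\le\Delta_2$ means $s\ge\rho$, hence $1-\rho\ge 1-s$, and the ratio is at most $\frac{(1+s)^2}{4s+2(1-s)}=\frac{(1+s)^2}{2(1+s)}$. That runs the wrong way for a lower bound, so instead I would use $1-\rho\le1$ to get $\frac{(1+s)^2}{4s+2}\ge\frac{2+s}{4}$, checking $4(1+s)^2\ge(2+s)(4s+2)$, which is true. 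I will double-check which form matches the stated $\min$.

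\emph{Branch $\Delta>\Delta_2$.} Here $R^*=\Delta\Lambda-\frac{\Lambda K}{2\mu}\le\Delta\Lambda$. Since $(1+s)^2\ge 1$ and $K/(\mu\Delta)\le$ a constant on this range, I aim for the bound $\frac{1}{2(1-\rho)}$, or $\frac{2+s}{4}$ when that is smaller. This amounts to a one-variable inequality in $s$ on $[s_3,\rho]$, where $s_3=\Delta_1/\Delta_3$, to be verified by convexity or by checking the endpoints. Taking the minimum over the two branches gives the stated expression.
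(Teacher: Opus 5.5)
Your Regimes 1 and 3 are correct and follow the paper's Cases i and iii: the same exact ratios and the same monotonicity-in-$\Delta$ argument. Your check that the corner toll $\Delta-T_C$ is the binding branch exactly when $\Delta>\Delta_3$ is also right.

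The gap is in Regime 2, which as written does not establish the stated bound. There are two problems.

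\textbf{Branch $\Delta\le\Delta_2$.} Your exact ratio $\frac{(1+s)^2}{4s+2(1-\rho)}$ is correct. However, the step that replaces $1-\rho$ by $1$ rests on a false inequality: $4(1+s)^2-(2+s)(4s+2)=-2s<0$. In fact $\frac{(1+s)^2}{4s+2}=\frac{s(2+s)+1}{4s+2}$ is the mediant of $\frac{2+s}{4}$ and $\frac12$. So after dropping $\rho$ you can only conclude a bound of $\frac12$. The term $\frac{1}{2(1-\rho)}$ in the stated minimum is exactly the information you threw away.

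\textbf{Branch $\Delta>\Delta_2$.} Nothing is actually proved here, and the route you sketch, $R^*\le\Delta\Lambda$, is too weak. It gives $R(\tau_s^*)/R^*\ge\frac{\rho(1+s)^2}{4s}$. As $s\uparrow\rho$ this tends to $\frac{(1+\rho)^2}{4}$, which is below $\frac12$ for small $\rho$ (about $0.30$ at $\rho=0.1$). The target $\min\{\frac{2+s}{4},\frac{1}{2(1-\rho)}\}$, by contrast, is always at least $\frac12$.

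\textbf{The fix.} Bound the dynamic revenue in both branches at once by the \emph{unconstrained} maximum of the quadratic in $f$ from the proof of Theorem~\ref{thm:rev-opt-dynamic-tolls}. Problem~\eqref{eq:constrained-dynamic-rev-opt} only adds constraints on $f$, so for every $\Delta$ we have $R^*\le\Delta\Lambda\rho+\frac{\Delta^2\mu\Lambda}{2K}(1-\rho)^2=\frac{\mu(1-\rho)\Lambda\Delta^2}{K}\bigl(s+\frac{1-\rho}{2}\bigr)$. Then keep $\rho$ and write $\frac{R(\tau_s^*)}{R^*}\ge\frac{(1+s)^2}{4s+2(1-\rho)}=\frac{s(2+s)+1}{4s+2(1-\rho)}$. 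This is the mediant of $\frac{s(2+s)}{4s}=\frac{2+s}{4}$ and $\frac{1}{2(1-\rho)}$, hence at least their minimum. This is precisely the paper's term-by-term comparison in Case ii, and it makes your split at $\Delta_2$ inside Regime 2 unnecessary.
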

\vspace{-2pt}

We establish this result by bounding the ratio of the revenue of the static revenue-optimal toll to the dynamic optimal revenue in Equation~\eqref{eq:dynamic-opt-rev} in the three regimes for $z_T - z_C$ in the theorem statement, each corresponding to a distinct static revenue-optimal toll characterized in Proposition~\ref{prop:rev-opt-static-tolls}. For a proof of Theorem~\ref{thm:rev-comp-static-v-dynamic}, see Appendix~\ref{apdx:rev-comp-stativ-v-dynamic}. Theorem~\ref{thm:rev-comp-static-v-dynamic} establishes that revenue-optimal static tolls achieve a constant fraction of the optimal dynamic revenue, at least one-half in some parameter regimes and at least two-thirds in others, highlighting the efficacy of simple static tolls. %Since our comparison abstracts from the substantial informational and computational burdens that dynamic tolling places on traffic planners and users, the bounds in Theorem~\ref{thm:rev-comp-static-v-dynamic} should be interpreted as worst-case relative guarantees. Incorporating realistic behavioral or operational constraints would likely strengthen these bounds, further reinforcing the case for static tolling in practical applications.

%the theoretical bounds we obtain should be interpreted as worst-case relative performance guarantees. Incorporating realistic behavioral or operational constraints would only further strengthen the case for static tolling in real-world deployment.

\begin{comment}
At the static optimal toll obtained in Proposition~\ref{prop:rev-opt-static-tolls}, the resulting revenue (when $\mu < \lambda$) is given by:
\begin{align*}
    R(\tau^*) = 
    \begin{cases}
    \frac{\mu\, eL}{4\,(e+L)\left(1-\frac{\mu}{\lambda}\right)} \left[ \frac{\Lambda}{\lambda} + \frac{e+L}{eL}\left(1-\frac{\mu}{\lambda}\right) (z_T - z_C) \right]^2, & \text{if } \tau^* = \frac{z_T - z_C}{2} + \frac{\Lambda eL}{2(\lambda - \mu)(e+L)} \\
    (z_T - z_C) \Lambda - \frac{\Lambda^2 eL}{\mu(e+L)} , & \text{if } \tau^* = (z_T - z_C) - \frac{\Lambda eL}{\mu(e+L)}\\
    (z_T - z_C) \frac{\mu \Lambda}{\lambda}, & \text{if } \tau^* = z_T - z_C \\
    0, & z_T - z_C < 0.
\end{cases}
\end{align*}
Note that if $z_T \geq z_C$ and $\mu \geq \lambda$, then $R(\tau^*) = (z_T - z_C) \Lambda$.    
\end{comment}

\vspace{-5pt}
\subsection{System Cost Bounds of Revenue-Optimal Tolling Policies} \label{subsec:system-cost-bottleneck-comp}
\vspace{-2pt}

We now compare the total system cost under static and dynamic revenue-optimal tolls to that under the dynamic system-cost-optimal policy. Unlike the revenue comparisons in the previous section, constant-factor system cost guarantees do not hold across all parameter regimes (see Proposition~\ref{prop:unbounded-sc-ratios} at the end of this section). Nevertheless, in the regime where both car and transit are used at the untolled equilibrium, i.e., when $z_T - z_C \leq T_C = \frac{\Lambda eL}{\mu (e+L)}$, a condition characteristic of cities with viable public transit and holds for real systems in our numerical experiments in Section~\ref{sec:experiments}, we establish a constant-factor guarantee. Specifically, in this regime, both static and dynamic revenue-optimal tolls incur at most twice the minimum achievable system cost.

\vspace{-2pt}
\begin{theorem}[System Cost Comparison] \label{thm:sc-comp-static-v-dynamic}
Suppose $\mu < \lambda$, and let $\tau^*_{s}$ be the static revenue-optimal toll, $\tau^*_d(\cdot)$ be the dynamic revenue-optimal toll, and $SC^*$ be the minimum achievable system cost. If $z_T - z_C \leq \frac{\Lambda eL}{\mu (e+L)}$, then %both policies incur at most twice the minimum system cost, i.e., 
$SC(\tau^*_d(\cdot)) \leq 2 SC^*$ and $SC(\tau^*_s) \leq 2 SC^*$.
\end{theorem}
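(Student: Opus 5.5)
The plan is to measure every cost relative to the baseline $z_C\Lambda$. Since $z_T\ge z_C$, every user incurs at least $z_C$ under any policy, so $SC^*\ge z_C\Lambda$. It then suffices to show that the \emph{excess} system cost $SC-z_C\Lambda$ of each revenue-optimal policy is at most $(SC^*-z_C\Lambda)+SC^*$. In practice I expect to prove the sharper bound that this excess is at most twice the excess of $SC^*$.

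\textbf{Step 1: a closed form for $SC^*$.} By the argument in the proof of Theorem~\ref{thm:rev-opt-dynamic-tolls}, the system-cost-optimal dynamic toll eliminates queues; this is the trapezoidal toll of \citet{GONZALES20121519}. It is therefore parameterized by the fraction $f$ of $[t_1,t_2]$ in the flat segment, with the same optimal early/late split $\frac{L}{e+L},\frac{e}{e+L}$. Write $a=(z_T-z_C)(1-\frac{\mu}{\lambda})$ and $T_C=\frac{\Lambda eL}{\mu(e+L)}$. The no-queue cost accounting gives
\begin{align*}
SC(f)=z_C\Lambda+a\Lambda f+\tfrac{1}{2}T_C\Lambda(1-f)^2 .
\end{align*}
Here the term $a\Lambda f$ collects the $f\Lambda(1-\frac{\mu}{\lambda})$ transit users, each paying the extra $z_T-z_C$. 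The quadratic term is the schedule delay of the $(1-f)\Lambda$ shifted car users, consistent with the term $\frac{\Lambda^2}{2\mu}\frac{eL}{e+L}(1-f)^2$ in~\eqref{eq:rev-exp-f}. I then minimize over $f\in[0,1]$. Under the hypothesis $z_T-z_C\le T_C$ we have $a\le T_C$, so the minimizer is interior: $1-f=a/T_C$. This gives $SC^*-z_C\Lambda=a\Lambda\bigl(1-\frac{a}{2T_C}\bigr)\ge \frac{a\Lambda}{2}$.

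\textbf{Step 2: the dynamic revenue-optimal policy.} Again there is no queue, so $SC(\tau_d^*)=SC(f^*)$ with $1-f^*=\frac{(z_T-z_C)(1-\mu/\lambda)}{T_C}=a/T_C$ from Theorem~\ref{thm:rev-opt-dynamic-tolls}. Comparing this with Step 1, the minimizer is the same $f$, so the bound $SC(\tau_d^*)\le 2SC^*$ should hold trivially, even with equality to $SC^*$. The value of this step is that the comparison is explicit and checks the parameterization.

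\textbf{Step 3: the static revenue-optimal policy.} This is the main obstacle. Let $\bar w=z_T-z_C-\tau_s^*$. Using the equilibrium of Proposition~\ref{prop:user-eq-two-modes} with maximum wait $\bar w$, as in the proof of Proposition~\ref{prop:rev-opt-static-tolls}, the excess cost splits into two parts. The transit users contribute $N_T(z_T-z_C)=(1-\frac{\bar w}{T_C})a\Lambda$. The early/late car users each incur waiting plus schedule delay equal to $\bar w$, so they contribute $\frac{\bar w}{T_C}\Lambda\bar w$. Hence
\begin{align*}
SC(\tau_s^*)-z_C\Lambda=\Bigl(1-\tfrac{\bar w}{T_C}\Bigr)a\Lambda+\tfrac{\bar w^2}{T_C}\Lambda .
\end{align*}
Because $0\le\bar w\le z_T-z_C\le T_C$, I bound this by $a\Lambda+\bar w\Lambda\cdot\frac{\bar w}{T_C}$. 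Next I will plug in the two static regimes of Proposition~\ref{prop:rev-opt-static-tolls}, using the assumption $z_T-z_C\le T_C$ to rule out the second branch of the max. In the regime $\tau_s^*=z_T-z_C$ we have $\bar w=0$, so the excess is exactly $a\Lambda\le 2(SC^*-z_C\Lambda)$ by Step 1. In the interior regime $\tau_s^*=\frac{z_T-z_C}{2}+\frac{\Lambda eL}{2(\lambda-\mu)(e+L)}$, we get $\bar w=\frac{z_T-z_C}{2}-\frac{\Lambda eL}{2(\lambda-\mu)(e+L)}\le\frac{z_T-z_C}{2}$.

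The difficulty in this interior regime is that the term $\bar w^2\Lambda/T_C$ is not controlled by $a\Lambda$ when $\mu/\lambda$ is close to 1. There I would instead use the extra slack $SC^*\ge z_C\Lambda$ together with the full target $SC\le 2SC^*$, not just the excess comparison. Alternatively, I would show directly that the quadratic in $\bar w$, evaluated at this specific $\bar w$, is at most $2a\Lambda(1-\frac{a}{2T_C})$, using $\bar w\le\frac{z_T-z_C}{2}$ and the regime condition $z_T-z_C\ge\frac{\Lambda eL}{(\lambda-\mu)(e+L)}$. The first thing I would try is to substitute $x=\frac{z_T-z_C}{T_C}\in[0,1]$ and $r=\frac{\mu}{\lambda}$, reduce the claim to a two-variable polynomial inequality on the feasible region, and verify it at the region's boundary points.
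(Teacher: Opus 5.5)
Your architecture is the paper's: measure each cost as an excess over $z_C\Lambda$, show both revenue-optimal policies have excess at most $a\Lambda$, and show $SC^*-z_C\Lambda\ge a\Lambda/2$. The paper's common bound $z_C\Lambda\frac{\mu}{\lambda}+z_T(1-\frac{\mu}{\lambda})\Lambda$ is exactly $z_C\Lambda+a\Lambda$. The cost accounting, however, is wrong in all three steps.

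In Step 1, the quadratic term in \eqref{eq:rev-exp-f} is the total toll \emph{discount} $\mu\int e(t_B'-t)\,dt$ (plus its late analogue), measured from $t_B'$. It is not schedule delay measured from each driver's own $t^*$. The earliest driver is early by only $t_1-t_A'=N_e(\frac{1}{\mu}-\frac{1}{\lambda})$, so the true schedule delay is $\frac12 T_C\Lambda(1-\frac{\mu}{\lambda})(1-f)^2$. The system-optimal fraction is then $1-f=(z_T-z_C)/T_C$, and $SC^*-z_C\Lambda=a\Lambda(1-\frac{z_T-z_C}{2T_C})$, matching the value the paper takes from Gonzales and Daganzo. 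Your bound $SC^*-z_C\Lambda\ge a\Lambda/2$ survives.

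Step 2 is false as stated. The revenue-optimal fraction $1-f^*=a/T_C$ is not the system-optimal one; the paper explicitly notes the flat segments differ. So $SC(\tau_d^*(\cdot))\neq SC^*$ in general. What is true, and suffices, is that its excess equals $a\Lambda-\frac{a^2\Lambda}{2T_C}(1+\frac{\mu}{\lambda})\le a\Lambda$.

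Step 3 contains the real gap; your $\bar w=0$ case is fine, and the problem is the interior case. Your excess omits the queuing cost of the $N_o=(1-\frac{\bar w}{T_C})\Lambda\frac{\mu}{\lambda}$ on-time drivers, each of whom waits $\bar w$ during $[t_B,t_C]$. Also, early and late drivers do not each incur $\bar w$. A driver with desired time $t^*\in[t_1,t_B]$ incurs $e(t^*-t_A)\le\bar w$, and the early/late group total is $\frac{\Lambda\bar w^2}{T_C}(1-\frac{\mu}{2\lambda})$. Net, your expression undercounts $SC(\tau_s^*)$ by $\bar w\frac{\mu}{\lambda}\Lambda(1-\frac{3\bar w}{2T_C})$, which is positive whenever $\bar w>0$ in this regime. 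So it is not an upper bound at all. The correct excess is
\[
\Bigl(1-\tfrac{\bar w}{T_C}\Bigr)a\Lambda+\bar w\Bigl(1-\tfrac{\bar w}{T_C}\Bigr)\tfrac{\mu}{\lambda}\Lambda+\tfrac{\Lambda\bar w^2}{T_C}\Bigl(1-\tfrac{\mu}{2\lambda}\Bigr).
\]
At the interior optimum $\bar w=\frac12\bigl(z_T-z_C-\frac{\Lambda eL}{(\lambda-\mu)(e+L)}\bigr)$, use the identity $\frac{\mu}{\lambda}T_C=(1-\frac{\mu}{\lambda})\frac{\Lambda eL}{(\lambda-\mu)(e+L)}$. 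The excess then collapses to $a\Lambda-\frac{\Lambda\bar w^2}{T_C}(1-\frac{\mu}{2\lambda})\le a\Lambda$. This cancellation is the paper's closed form for that case, and it is the step your proposal leaves open; checking a two-variable polynomial at boundary points would not by itself prove it.

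Finally, the difficulty you anticipated as $\mu/\lambda\to 1$ cannot arise. The interior regime requires $\frac{\Lambda eL}{(\lambda-\mu)(e+L)}\le z_T-z_C\le T_C$, which forces $\mu\le\lambda/2$.
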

\vspace{-6pt}

\begin{hproof}
To prove this claim, we first leverage the characterizations of the revenue-optimal static and dynamic tolling policies in Proposition~\ref{prop:rev-opt-static-tolls} and Theorem~\ref{thm:rev-opt-dynamic-tolls}, respectively, to derive expressions for their corresponding total system costs at equilibrium. In particular, for both policies, we compute the total system cost by, summing across all users, the following four terms: (i) the cost of using transit, (ii) the free-flow cost of using a car, (iii) schedule delay costs, and (iv) waiting time delays, where, recall from the proof of Theorem~\ref{thm:rev-opt-dynamic-tolls} that there are no waiting time delays under the dynamic revenue-optimal policy. Summing these relations, we then prove that the total system cost under both policies remains upper bounded by $z_C \Lambda \frac{\mu}{\lambda} + z_T \left( 1 - \frac{\mu}{\lambda} \right) \Lambda$. Then, comparing this upper bound to the minimum achievable total system cost derived in~\citet{GONZALES20121519}, combined with the fact that $z_T - z_C \leq \frac{\Lambda eL}{\mu (e+L)}$, we obtain our desired bounds.
%For both tolling policies, we derive expressions for, the sum across all users of, (i) the cost of using transit, (ii) the free-flow cost of using a car, (iii) schedule delay costs, and (iv) waiting time delays, where, recall that the total system cost is the sum of these four terms, where we also note by Observation~\ref{obs:zero-waiting-time} that there are no waiting time delays under the dynamic revenue-optimal policy. Summing these four relations, we find that the total system cost under revenue-optimal static and dynamic tolls always remains upper bounded by $z_C \Lambda \frac{\mu}{\lambda} + z_T \left( 1 - \frac{\mu}{\lambda} \right) \Lambda$.
\end{hproof}

\vspace{-4pt}
For a complete proof of Theorem~\ref{thm:sc-comp-static-v-dynamic}, see Appendix~\ref{apdx:pf-sc-comp}. Theorem~\ref{thm:sc-comp-static-v-dynamic} shows that when transit provides a viable outside option, i.e., when $z_T - z_C \leq \frac{\Lambda eL}{(\lambda - \mu)(e+L)}$, revenue-optimal tolls achieve total system costs within a modest factor of the system-cost-optimal benchmark. Together, Theorems~\ref{thm:rev-comp-static-v-dynamic} and~\ref{thm:sc-comp-static-v-dynamic} show that static revenue-optimal tolling simultaneously achieves strong revenue and system cost guarantees by capturing at least a constant fraction of the optimal dynamic revenue while incurring at most twice the minimum achievable system cost (when $z_T - z_C \leq \frac{\Lambda eL}{\mu (e+L)}$). Since our comparisons abstract from the substantial informational and computational burdens that dynamic tolling places on traffic planners and users, these bounds should be interpreted as worst-case relative guarantees. Incorporating realistic behavioral or operational constraints would likely strengthen these bounds, further reinforcing the case for static tolling in practical applications.

%We reiterate that we obtain these guarantees without accounting for the informational and operational complexity required to implement dynamic tolling, thus providing a strong practical case for static pricing. Incorporating realistic behavioral or operational constraints would likely strengthen these bounds, further reinforcing the case for static tolling in practical applications.

The constant-factor guarantees in Theorem~\ref{thm:sc-comp-static-v-dynamic} rely on the cost difference between transit and car travel at free flow being bounded by $\frac{\Lambda eL}{\mu (e+L)}$, ensuring both modes are used at equilibrium (see Section~\ref{subsec:background-equilibria}). When this condition fails, constant-factor system cost ratios generally do not hold. %, as elucidated by the following observation. %This regime may arise, for instance, in settings with autonomous driving, where car travel at free flow becomes substantially more attractive than public transit~\cite{ostrovsky2025congestion}.

%which is possible with autonomous driving where using a car at free-flow may become much more attractive compared to transit, constant-factor system cost ratios may, in general, not be possible, as elucidated by the following proposition.

\vspace{-2pt}

\begin{proposition}[Unbounded System Cost Ratios] \label{prop:unbounded-sc-ratios}
Let $\tau^*_{s}$ be the static revenue-optimal toll and $SC^*$ be the minimum achievable total system cost. Further, suppose that $z_T - z_C > \frac{\Lambda eL}{\mu (e+L)}$, with $z_C = 0$ and $z_T > \frac{\Lambda eL}{(e+L)(\lambda - \mu)} \left( \frac{2\lambda - \mu}{\mu} \right)$. Then, $SC(\tau_s^*) = \left( 1 + \frac{1}{1 - \frac{\mu}{\lambda}} \right) SC^*$, which is unbounded when $\frac{\mu}{\lambda} \rightarrow 1$.
%$SC(\tau^*_d(\cdot)) \leq 2 SC^*$ and $SC(\tau^*_s) \leq 2 SC^*$.
\end{proposition}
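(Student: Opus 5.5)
The plan is to compute both sides of the claimed identity in closed form, using the characterizations already established, and then take their ratio. The hypotheses $z_C=0$ and $z_T>\frac{\Lambda eL}{(e+L)(\lambda-\mu)}\frac{2\lambda-\mu}{\mu}$ are designed to pin down a single branch of Proposition~\ref{prop:rev-opt-static-tolls}. Write $A=\frac{\Lambda eL}{(\lambda-\mu)(e+L)}$ and $T_C=\frac{\Lambda eL}{\mu(e+L)}$. Since $\frac{2\lambda-\mu}{\mu}A = A+2\frac{\lambda-\mu}{\mu}A = A+2T_C$, the hypothesis is exactly $z_T-z_C>A+2T_C$.

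\textbf{Step 1: identify the static toll.} In this range we are in the first branch of \eqref{eq:optimal-static-toll}, and $z_T-z_C-T_C\ge \frac{z_T-z_C}{2}+\frac{A}{2}$. Hence $\tau_s^*=z_T-z_C-T_C$ sits at the lower constraint of \eqref{eq:rev-opt-static-bottleneck}. The effective transit threshold for drivers is then $\bar w=z_T-z_C-\tau_s^*=T_C$. Using the transit-count formula from the proof of Proposition~\ref{prop:rev-opt-static-tolls}, $N_T(\tau_s^*)=(1-\bar w/T_C)\Lambda(1-\frac{\mu}{\lambda})=0$. So every user drives and the induced equilibrium is the car-only Vickrey equilibrium of Section~\ref{subsec:background-equilibria}.

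\textbf{Step 2: compute $SC(\tau_s^*)$.} Tolls are transfers and $z_C=0$, so the system cost is the total waiting plus schedule-delay cost of the car-only equilibrium. In that equilibrium every user's non-toll cost equals $T_C$, because the triangular profile makes $w(t)$ plus schedule delay constant and equal to the critical user's wait. This gives $SC(\tau_s^*)=\Lambda T_C$.

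\textbf{Step 3: compute $SC^*$.} I would take the system-optimal structure from \citet{GONZALES20121519}, or rederive it as in the proof of Theorem~\ref{thm:rev-opt-dynamic-tolls}. Queues are eliminated and a fraction $f$ of the window $[t_1,t_2]$ is served on time by cars at rate $\mu$; the displaced on-time users take transit at cost $z_T$. The remaining $(1-f)\Lambda$ users drive early or late, with the early/late split chosen optimally. This yields
$SC(f)=\frac{\Lambda T_C}{2}(1-f)^2+z_T f\Lambda\left(1-\frac{\mu}{\lambda}\right)$,
which is then minimized over $f\in[0,1]$.

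\textbf{Step 4: take the ratio.} Dividing $\Lambda T_C$ by the minimized $SC(f)$ should produce the stated factor $1+\frac{1}{1-\mu/\lambda}$, which blows up as $\mu/\lambda\to1$.

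The main obstacle is Step~3. The first-order condition at $f=0$ is $z_T(1-\frac{\mu}{\lambda})\ge T_C$, which already holds under the hypothesis. So the naive optimum is the corner $f=0$, which gives $SC^*=\Lambda T_C/2$ and a ratio of $2$, not the claimed $\mu/\lambda$-dependent factor. I would therefore first recheck the exact accounting of the system-optimal profile in \citet{GONZALES20121519} (and the definition in Appendix~\ref{apdx:perf-metrics}), specifically:
\begin{itemize}
\item whether on-time transit users' cost is counted relative to $z_C$;
\item whether the schedule-delay term scales with $\frac{\Lambda}{\mu}$ or with the window length $\frac{\Lambda}{\lambda}$.
\end{itemize}
Either discrepancy would introduce the factor $1-\frac{\mu}{\lambda}$. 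I would reconcile this before forming the final ratio, because the exact form of the claimed constant hinges on it.
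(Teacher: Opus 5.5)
Your route matches the paper's: pin down $\tau_s^*=z_T-z_C-T_C$, note that it reproduces the untolled car-only equilibrium, and compare with the car-only system optimum. Step~1 is correct. However, Steps~2 and~3 share an accounting error, which is why you land on a ratio of $2$. You treat schedule delay as if all users had one common desired time. Here desired times are spread over $[t_1,t_2]$ and users cross in order, so the user crossing at $t\in[t_A,\tilde t]$ has desired time $t^*(t)=t_1+\frac{\mu}{\lambda}(t-t_A)$.

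In Step~2, a user with desired time $t^*\le\tilde t$ pays $w(t)+e(t^*-t)=e(t^*-t_A)$. This is constant in their own crossing time, which is the indifference condition, but it increases in $t^*$. It runs from $e(t_1-t_A)=(1-\frac{\mu}{\lambda})T_C$ up to $T_C$, and only the critical user pays $T_C$; the late side is symmetric. Hence $SC(\tau_s^*)=\Lambda T_C\left(1-\frac{\mu}{2\lambda}\right)=\frac{\Lambda T_C}{2}+\frac{\Lambda T_C}{2}\left(1-\frac{\mu}{\lambda}\right)$, which is the paper's expression at $z_C=0$, not $\Lambda T_C$.

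In Step~3, with no queue the user crossing at $t\in[t_A',t_B']$ has earliness $t^*(t)-t$. This falls from $t_1-t_A'=(1-\frac{\mu}{\lambda})(t_B'-t_A')$ to $0$. Your term $\frac{\Lambda T_C}{2}(1-f)^2$ instead measures distance to $t_B'$; it coincides with the toll-discount term in $R(f)$ from the proof of Theorem~\ref{thm:rev-opt-dynamic-tolls}. The correct objective is $SC(f)=\left(1-\frac{\mu}{\lambda}\right)\Lambda\left[\frac{T_C}{2}(1-f)^2+z_T f\right]$, consistent with the paper's formula for $SC(\tau_d^*(\cdot))$ in Appendix~\ref{apdx:pf-sc-comp}.

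With this objective, $SC'(f)\ge 0$ on $[0,1]$ whenever $z_T\ge T_C$, which the hypothesis gives. So the optimum is $f=0$ and $SC^*=\frac{\Lambda T_C}{2}\left(1-\frac{\mu}{\lambda}\right)=\frac{eL}{2(e+L)}\Lambda^2\left(\frac{1}{\mu}-\frac{1}{\lambda}\right)$, exactly the value the paper uses. The ratio is then $\frac{2-\mu/\lambda}{1-\mu/\lambda}=1+\frac{1}{1-\mu/\lambda}$.

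Both corrections are needed. Your second checklist item does point at the right place, but fixing only Step~3 gives $\frac{2}{1-\mu/\lambda}$, which is unbounded but not the stated identity. Your first checklist item is irrelevant, because nobody rides transit at $f=0$. As it stands, the proposal derives a value that contradicts the claim and stops before reconciling it, so it does not prove the proposition.
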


\vspace{-2pt}

For a proof of Proposition~\ref{prop:unbounded-sc-ratios}, see Appendix~\ref{apdx:unbounded-sc-ratios}. This result also extends when comparing the system cost of dynamic revenue-optimal tolls to the minimum achievable system cost. The main driver behind these unbounded system cost ratios is that when $z_T - z_C > \frac{\Lambda eL}{\mu (e+L)}$, transit is sufficiently unattractive compared to car travel that revenue-optimal tolls scale with $z_T - z_C$ to extract maximal revenue. In contrast, system-cost-optimal tolls are upper bounded by $\frac{\Lambda eL}{\mu (e+L)}$ and are calibrated to eliminate congestion rather than exploit users’ willingness to pay. Thus, when $z_T - z_C > \frac{\Lambda eL}{\mu (e+L)}$, the revenue and system-cost-optimal policies operate on fundamentally different scales, precluding constant factor system cost guarantees. Despite this unboundedness result, Theorem~\ref{thm:sc-comp-static-v-dynamic} demonstrates the robustness of revenue-optimal tolling even on the system cost metric in regimes where both car and public transit are used at the untolled equilibrium, an empirically relevant regime that characterizes many transportation systems with viable public transit alternatives.

\vspace{-4pt}

\section{Revenue-Optimal Tolling in Urban Systems} \label{sec:mfd}

\vspace{-2pt}

The previous section analyzed static and dynamic revenue-optimal tolling in the bottleneck model with an outside option, where the bottleneck represents a single capacity-constrained facility (e.g., bridge or tunnel) and has a fixed capacity independent of the number of vehicles in the system. We now extend this analysis to city-scale urban traffic systems, where congestion emerges from interactions across an entire urban area rather than at a single facility. In such settings, the outflow capacity (or service rate) depends on the number of vehicles in the urban system and decreases once vehicle accumulation exceeds a critical threshold, e.g., due to queue spill back across intersections. We model these system-wide congestion dynamics using the Macroscopic Fundamental Diagram (MFD), which provides a parsimonious yet empirically grounded characterization of aggregate traffic behavior in urban systems~\citep{DAGANZO200749}. While the MFD is commonly expressed as a relationship between average system-wide vehicle density and aggregate flow, when the average trip distance $D$ is constant across users, it can be interpreted as describing a state-dependent service rate as a function of total vehicle accumulation~\citep{GONZALES20121519}. 

%We now extend this analysis to urban systems, where the outflow capacity (or service rate) depends on the number of vehicles in the system and decreases once vehicle accumulation exceeds a critical threshold, due to queue spill back across intersections. We model these system-wide congestion dynamics using the Macroscopic Fundamental Diagram (MFD), which provides a parsimonious yet empirically grounded characterization of aggregate traffic behavior in urban systems~\cite{DAGANZO200749}. While the MFD is commonly expressed as a relationship between average system-wide vehicle density and aggregate flow, when the average trip distance $D$ is constant across users, it can be interpreted as describing a state-dependent service rate as a function of total vehicle accumulation~\cite{GONZALES20121519}. 

We focus on a well-established class of MFDs satisfying two standard properties: \vspace{-2pt}
\begin{itemize}
    \item \textbf{Property 1:} There exists a critical vehicle accumulation level $n_c$ beyond which additional vehicles reduce total outflow due to spill-back and gridlock effects.
    \item \textbf{Property 2:} For vehicle accumulations up to $n_c$, the throughput is linear, corresponding to free-flow conditions, achieving its maximum throughput $\mu_f$ at $n = n_c$.
\end{itemize}
Figure~\ref{fig:triangular-mfd} illustrates the canonical triangular MFD satisfying these properties. For vehicle accumulations below the threshold $n_c$, the system operates at free-flow with no queuing delays. Once accumulation exceeds $n_c$, congestion sets in, reducing the effective system capacity, which drops to zero at the jam accumulation level $n_j$. Vickrey's bottleneck model can be interpreted as a special case of such MFD relations, with a capacity plateau rather than a drop in the congested regime (see Appendix~\ref{apdx:conceptual-bottleneck-mfd}).

For urban systems governed by MFDs satisfying the above properties, in Section~\ref{subsec:dynamic-tolls-mfd}, we first characterize the dynamic revenue-optimal policy and show that it maintains the system at the throughput-maximizing capacity $\mu_f$ achieved at $n_c$. Then, in Section~\ref{subsec:static-tolls-mfd}, we derive a closed-form expression for the revenue generated by any static toll for triangular MFDs (see Figure~\ref{fig:triangular-mfd}). Finally, while static revenue-optimal tolls generally achieve weaker guarantees than in the bottleneck setting (as the system service rate is state-dependent), in Section~\ref{subsec:static-v-dynamic-triangular-mfd}, we show that, across a broad and practically relevant range of parameters, static revenue-optimal tolls attain at least two-thirds of the dynamic optimal revenue while incurring at most twice the minimum attainable system cost.

%\vspace{-10pt}

\begin{figure}[tbh!]
      \centering
      \includegraphics[width=75mm]{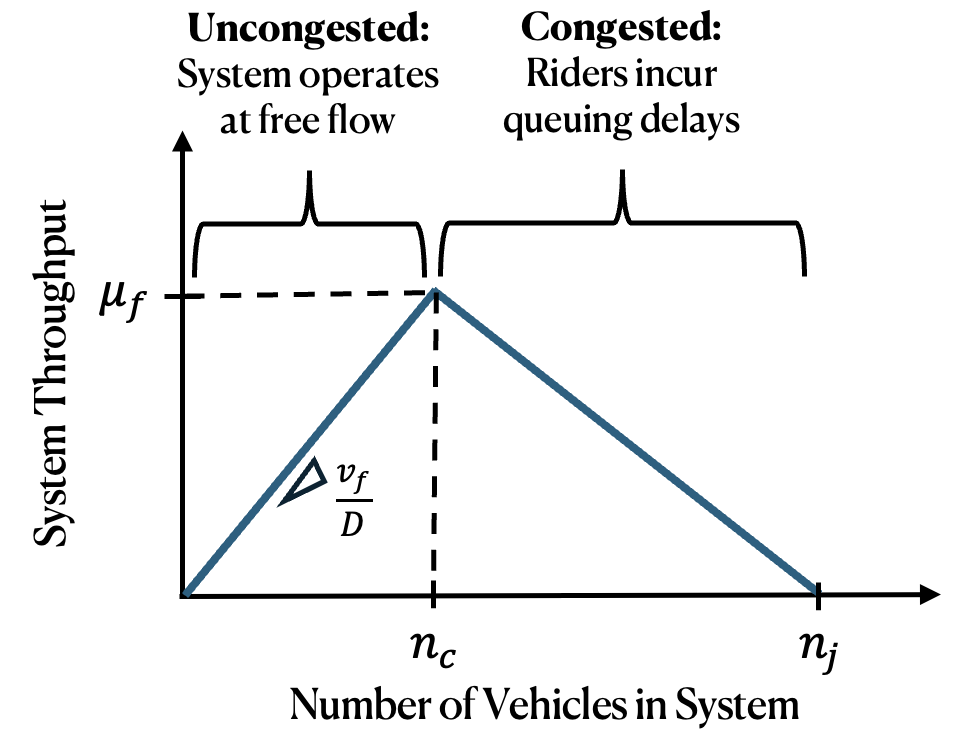}
      \vspace{-15pt}
      \caption{\small \sf Depiction of the triangular MFD relating the system outflow capacity to total vehicle accumulation. The slope of the segment connecting $(0,0)$ to any point on the curve determines the average system speed, equal to average trip distance $D$ times the slope. The MFD consists of an uncongested regime in which the system operates at free-flow with a speed $v_f$ up to the critical threshold $n_c$. Beyond $n_c$, the system enters a congested regime in which speeds decline, eventually dropping to zero at the jam accumulation level $n_j$.
      }
      \label{fig:triangular-mfd} 
   \end{figure} 

%\vspace{-12pt}

\vspace{-4pt}
\subsection{Revenue-Optimal Dynamic Tolling} \label{subsec:dynamic-tolls-mfd}
\vspace{-2pt}

This section characterizes the revenue-optimal dynamic tolling policy for urban systems governed by MFDs satisfying Properties 1 and 2, such as the triangular MFD in Figure~\ref{fig:triangular-mfd}, and shows that it maintains the system at the critical vehicle accumulation level that maximizes system throughput. In proving this result, akin to Section~\ref{sec:revenue-optimal-tolling-classical-bottleneck}, we focus on the setting when the maximum achievable system throughput $\mu_f$ is strictly below the desired arrival rate $\lambda$, i.e., $\mu_f < \lambda$. For a depiction of the arrival and departure time profiles of users in the urban system under the revenue-maximizing tolling policy, see Figure~\ref{fig:arrival_profile_dynamic_opt-mfd} in Appendix~\ref{apdx:arrival-departure-opt}.
\vspace{-2pt}
\begin{theorem}[Revenue-Optimal Dynamic Tolls under MFD] \label{thm:rev-opt-dynamic-tolls-mfd}
Suppose $\mu_f < \lambda$ and users choose between two modes, traveling by car through an urban system characterized by an MFD satisfying properties 1 and 2, and using public transit with a fixed cost $z_T$, where $z_T \geq z_C$. %Furthermore, suppose that the cost of using public transit is at least the free-flow cost of using a car, i.e., $z_T \geq z_C$. 
Then, under the revenue-optimal dynamic tolling policy, the system always operates at the critical vehicle accumulation level $n_c$ corresponding to a throughput-maximizing capacity $\mu_f$.
\end{theorem}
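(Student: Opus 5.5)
We will prove Theorem~\ref{thm:rev-opt-dynamic-tolls-mfd} with an exchange argument. Starting from any feasible dynamic tolling policy $\tau(\cdot)$, we construct a policy that never lets vehicle accumulation exceed $n_c$ and collects at least as much revenue. Among such policies, we then show revenue is maximized by keeping the system at $n_c$, which means serving at rate $\mu_f$, throughout the period in which cars are served. The argument parallels the proof of Theorem~\ref{thm:rev-opt-dynamic-tolls}, where the bottleneck is replaced by a state-dependent outflow rate bounded above by $\mu_f$ (Properties 1 and 2).

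\textbf{Step 1: equilibrium structure.}
Each car user's generalized cost is $z_C + w(t) + e(t^*-t)_+ + L(t-t^*)_+ + \tau(t)$. Here $w(t)$ is now the excess travel time over free-flow, which is zero whenever accumulation is at most $n_c$. As in the proof of Theorem~\ref{thm:rev-opt-dynamic-tolls}, equilibrium requires two things:
\begin{itemize}
\item $w(t)+\tau(t) \le z_T - z_C$;
\item $w'(t)+\tau'(t)$ equals $e$ for early users and $-L$ for late users.
\end{itemize}
So the sum of waiting and toll costs again has the trapezoidal shape of Figure~\ref{fig:eq-waiting-tolls}.

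\textbf{Step 2: remove delay.}
Define $\tilde\tau(t)=\tau(t)+w(t)$. We must argue that this policy is implementable with zero delay: the cost profile users face is unchanged, so the same cohort of car users arrives in the same order. The difference is that the outflow rate must now accommodate them without queueing.

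This is the main obstacle. With an MFD, congestion lowers the outflow, so under $\tau$ the cars are served at a rate $\mu(n(t)) \le \mu_f$, which may fall strictly below $\mu_f$ when $n>n_c$. Under $\tilde\tau$ we instead want to serve users at exactly the rate they would exit at free flow. The cleanest route is a direct argument:
\begin{itemize}
\item Any completion schedule of car trips has cumulative departures whose slope is at most $\mu_f$.
\item Hence the set of car users and their exit times under any policy can be replicated by a no-delay policy in which accumulation stays at or below $n_c$, with the toll absorbing the former waiting cost.
\item That no-delay policy weakly raises every car user's payment, and also weakly raises the number of car users, since the freed capacity can be used.
\end{itemize}

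\textbf{Step 3: reduce to the bottleneck problem.}
Once attention is restricted to delay-free policies with outflow at most $\mu_f$, revenue equals $\int \mu(t)\tilde\tau(t)\,dt$ with $\mu(t)\le \mu_f$. The set of car users consists of
\begin{itemize}
\item a fraction of on-time users in the middle of the rush, and
\item early and late users whose toll decreases at slopes $e$ and $L$.
\end{itemize}
For a fixed set of served users, any period with outflow below $\mu_f$ stretches the early/late intervals. That increases their schedule delay, which the trapezoid shape converts into lower tolls.

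We therefore show, by an interchange argument on cumulative curves, that compressing service to rate $\mu_f$ weakly increases revenue. This reproduces the fixed-capacity-$\mu_f$ problem of Theorem~\ref{thm:rev-opt-dynamic-tolls}, whose optimal solution serves all car users at rate $\mu_f$. With zero delay and outflow $\mu_f$, Property 2 forces accumulation to equal $n_c$ throughout.
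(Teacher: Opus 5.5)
Your proposal is correct and follows essentially the paper's exchange argument: both compare an arbitrary policy with the zero-delay policy that runs at $\mu_f$ and keeps the same flat window $[t_B^*,t_C^*]$. The middle window is bounded pointwise by $\mu_f(z_T-z_C)$, and on the early/late windows both drop $w\ge 0$ and then show that packing the fixed mass of early (late) users at rate $\mu_f$ next to $t_B^*$ ($t_C^*$) maximizes revenue — the paper via integration by parts on $F(\Delta)=\int_0^\Delta\mu(x)\,dx$, you via cumulative curves.

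Two small remarks. First, the ``main obstacle'' in your Step 2 is avoidable: the paper never implements an intermediate delay-free policy. It uses $w\ge 0$ only as a pointwise bound inside the revenue integral, and the only policy it actually constructs is the constant-$\mu_f$ one. Second, to conclude that the optimal policy \emph{always} operates at $n_c$, rather than merely that some optimal policy does, you need the strictness remark: throughput below $\mu_f$ on any sub-interval makes one of the inequalities strict.
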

\vspace{-8pt}
\begin{hproof}
First, as with the bottleneck model, a necessary equilibrium condition is that $w'(t) + \tau'(t) = e$ for early arrivals and $w'(t) + \tau'(t) = -L$ for late arrivals, and that $w(t) + \tau(t) \leq z_T - z_C$ for all $t$, i.e., the sum of the waiting and toll costs are as depicted in Figure~\ref{fig:eq-waiting-tolls}. Here, the times $t_A', t_B', t_C', t_D'$ shown in the figure are determined endogenously by the tolling policy and the waiting time depends on the number of users $n(t)$ in the system at time $t$. Under this equilibrium condition, %we show that under dynamic revenue-optimal tolls, the waiting times are point-wise zero, i.e., the system always operates at free flow. To do so, 
we compare two policies: (i) a candidate revenue-optimal policy $\tau^*(\cdot)$ with associated times $t_A^*, t_B^*, t_C^*, t_D^*$ in Figure~\ref{fig:eq-waiting-tolls}, and (ii) a policy $\Tilde{\tau}(\cdot)$, with associated times $\Tilde{t}_A, \Tilde{t}_B, \Tilde{t}_C, \Tilde{t}_D$, under which the system operates at the capacity $\mu_f$ with no waiting delays, where $\Tilde{t}_B = t_B^*$ and $\Tilde{t}_C = t_C^*$, such that $\Tilde{\tau}(t) \! = \! z_T - z_C$ for the period $[t_B^*, t_C^*]$. %In other words, $\Tilde{\tau}(\cdot)$ is akin to the tolling policy depicted in Figure~\ref{fig:rev-opt-dynamic-tolls} at which the system operates at $\mu_f$ with no waiting time delays. with no waiting time delays, where $\Tilde{t}_B = t_B^*$ and $\Tilde{t}_C = t_C^*$, such that $\Tilde{\tau}(t) = z_T - z_C$ for the period $[t_B^*, t_C^*]$. 
Finally, using Properties 1 and 2, we show $R(\Tilde{\tau}(\cdot)) \! \geq \! R(\tau^*(\cdot))$. % to prove our result.
\end{hproof}
\vspace{-4pt}
For a complete proof of Theorem~\ref{thm:rev-opt-dynamic-tolls-mfd}, see Appendix~\ref{apdx:pf-mfd-dynamic-opt}.
The challenge in establishing this result and, in particular, proving $R(\Tilde{\tau}(\cdot)) \geq R(\tau^*(\cdot))$ lies in handling a key trade-off under a revenue-maximization objective. Specifically, we compare the revenue of a candidate optimal policy $\tau^*(\cdot)$ that operates over a longer interval $[t_A^*, t_D^*]$ over which both throughput and tolls may be lower, with that of an alternative policy $\Tilde{\tau}(\cdot)$ that enforces operation at the throughput-maximizing level $\mu_f$ by charging (weakly) higher tolls over a shorter interval $[\Tilde{t}_A,\Tilde{t}_D]$. Apriori, it is not obvious which effect dominates: higher tolls over a shorter interval or lower tolls over a longer interval. Our analysis shows that despite this apparent tradeoff between toll intensity and duration, revenue is maximized by maintaining the system at the throughput-maximizing operating point. We note that this trade-off does not arise in the bottleneck setting, where throughput is fixed at the bottleneck capacity; in contrast, under the MFD, throughput is endogenous and depends on vehicle accumulation in the system, making the analysis fundamentally more involved.

Theorem~\ref{thm:rev-opt-dynamic-tolls-mfd} establishes that the dynamic revenue-optimal toll eliminates queuing delays and maintains the system at the throughput-maximizing capacity $\mu_f$, a desirable property in real-world traffic systems. Consequently, as immediate corollaries of Theorems~\ref{thm:rev-opt-dynamic-tolls} and~\ref{thm:rev-opt-dynamic-tolls-mfd}, the dynamic revenue-optimal policy is akin to the tolling policy depicted in Figure~\ref{fig:rev-opt-dynamic-tolls} (as it sets waiting times point-wise to zero) and the revenue and system cost expressions under this policy coincide with those in the bottleneck setting in Section~\ref{sec:revenue-optimal-tolling-classical-bottleneck}, with the bottleneck capacity $\mu$ replaced by $\mu_f$. %Moreover, since the waiting time for all users is zero under the revenue-optimal dynamic toll, the equilibrium arrival and departure (crossing) time profiles for all users coincide.

Since the dynamic revenue-optimal toll eliminates queuing delays and sustains an equilibrium at the throughput-maximizing point of the MFD, it closely resembles the system-cost-optimal policy in~\citet{Geroliminis2009}, which also maintains the system at the critical accumulation $n_c$. Nevertheless, the two policies differ in important respects, for the same reasons discussed in Section~\ref{subsec:rev-opt-dynamic}.

More broadly, Theorem~\ref{thm:rev-opt-dynamic-tolls-mfd} can be interpreted through the lens of standard revenue-maximization trade-offs. In many settings, tolls or prices directly affect the number of users served, and the optimal revenue is achieved by balancing higher payments per user against reduced demand. In the setting studied here, Theorem~\ref{thm:rev-opt-dynamic-tolls-mfd} shows that this balance is achieved at the boundary of the feasible operating region, with the revenue-optimal policy maintaining the system at the critical accumulation $n_c$, rather than at higher accumulation levels on the congested branch of the MFD.

\vspace{-5pt}
\subsection{Static Tolling under Triangular MFD} \label{subsec:static-tolls-mfd}
\vspace{-2pt}

This section characterizes the revenue under static tolls for urban systems governed by a triangular MFD (see Figure~\ref{fig:triangular-mfd}). Unlike the dynamic tolling analysis in the previous section, which applies to general MFDs satisfying Properties 1 and 2, static tolling can induce operation on the congested branch of the MFD and thus requires additional structure for analytical tractability. Thus, focusing on triangular MFDs, we derive an expression for the revenue as a function of any static toll $\tau$.

\vspace{-2pt}
\begin{theorem}[Revenue under Static Tolls in MFD Framework] \label{thm:rev-static-tolls-mfd}
    Suppose $\mu_f < \lambda$ and users choose between two modes, traveling by car through an urban system characterized by a triangular MFD (see Figure~\ref{fig:triangular-mfd}), and using public transit with a fixed cost $z_T$, where $z_T \geq z_C$ and all trips are of a fixed distance $D$. Moreover, let $\underline{\tau} \geq 0$ be the minimum toll at which there is some user that is indifferent between using car and transit. Then, the revenue under any static toll $\tau \in [\underline{\tau}, z_T - z_C]$ is given by: \vspace{-2pt}
    \begin{align} \label{eq:rev-rlation-static-mfd}
        R(\tau) = \tau \left[ \frac{\Lambda}{\lambda} \frac{n_j}{\frac{n_j}{\mu_f} + z_T - z_C - \tau} + n_j \frac{e+L}{eL} \ln \left( 1 + \frac{(z_T - z_C - \tau) \mu_f}{n_j} \right) \left( 1 - \frac{\frac{n_j}{\frac{n_j}{\mu_f} + z_T - z_C - \tau}}{\lambda} \right) \right].
    \end{align}
\end{theorem}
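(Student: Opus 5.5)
The plan is to mirror the proof of Proposition~\ref{prop:rev-opt-static-tolls}: compute the number of car users $\Lambda - N_T(\tau)$ under a static toll, then multiply by $\tau$. Under a static toll $\tau\in[\underline{\tau}, z_T - z_C]$ we have $\tau'(t)=0$. The equilibrium conditions from the proof of Theorem~\ref{thm:rev-opt-dynamic-tolls-mfd} then give $w'(t)=e$ for early car users and $w'(t)=-L$ for late car users. Since some user is indifferent between modes, the waiting time is capped at $\bar w := z_T - z_C - \tau$. So the waiting-time profile is again trapezoidal as in Figure~\ref{fig:equilibrium_both_models} (right), with plateau height $\bar w$. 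The difference from the bottleneck case is that the outflow during each phase now depends on the current waiting time through the MFD, rather than being a constant $\mu$.

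\textbf{Step 1: outflow as a function of waiting time.} With fixed trip distance $D$, Little's law gives the trip time at accumulation $n$ as $n/\mu(n)$, and the free-flow trip time as $n_c/\mu_f$. The queuing delay is therefore $w = n/\mu(n) - n_c/\mu_f$. On the congested branch of the triangular MFD, $\mu(n) = \mu_f\frac{n_j - n}{n_j - n_c}$. I would solve this linear relation for $n$ in terms of $\mu$, substitute, and expect the $n_c$ terms to cancel, leaving
\[
w = \frac{n_j}{\mu} - \frac{n_j}{\mu_f}, \qquad \text{i.e.,} \qquad \mu(w) = \frac{n_j}{\frac{n_j}{\mu_f} + w}.
\]
This gives the outflow purely as a function of waiting time.

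\textbf{Step 2: count early and late car users.} During the early phase, $w$ rises linearly from $0$ to $\bar w$ at rate $e$. Changing variables $dt = dw/e$ gives
\[
N_e = \int \mu(w(t))\,dt = \frac{1}{e}\int_0^{\bar w} \frac{n_j}{\frac{n_j}{\mu_f}+w}\,dw = \frac{n_j}{e}\ln\!\left(1+\frac{\bar w\,\mu_f}{n_j}\right).
\]
By the same argument, $N_L$ is the same expression with $e$ replaced by $L$. Summing, $N_e+N_L = n_j\frac{e+L}{eL}\ln\!\left(1+\frac{\bar w \mu_f}{n_j}\right)$.

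\textbf{Step 3: count on-time car users.} As in Proposition~\ref{prop:user-eq-two-modes}, users pass in order of desired times. The early and late car users' desired times therefore occupy a total length $(N_e+N_L)/\lambda$ of $[t_1,t_2]$. The remaining middle window has length $\frac{\Lambda}{\lambda} - \frac{N_e+N_L}{\lambda}$. During this window the waiting time is constant at $\bar w$, so cars exit at rate $\bar\mu := \mu(\bar w)$ and the rest of the users take transit. Hence the number of car users is
\[
(N_e+N_L) + \bar\mu\cdot\frac{\Lambda - N_e - N_L}{\lambda} = \frac{\Lambda}{\lambda}\bar\mu + (N_e+N_L)\left(1-\frac{\bar\mu}{\lambda}\right).
\]
Multiplying by $\tau$ and substituting $\bar w = z_T - z_C - \tau$ gives \eqref{eq:rev-rlation-static-mfd}.

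\textbf{Main obstacle.} The hardest part will be justifying that the point-queue-style equilibrium conditions ($w'=e$ and $w'=-L$, with $w$ capped at $\bar w$) remain valid under the MFD dynamics. This includes checking that the plateau is a stationary state: accumulation stays fixed because inflow matches the outflow $\bar\mu$, and $\bar\mu\le\mu_f<\lambda$ so that transit absorbs the excess. The requirement $\tau\ge\underline{\tau}$ is what ensures the plateau actually occurs, so that $\bar w$ is the true maximal delay. I would handle this by arguing, as in the Theorem~\ref{thm:rev-opt-dynamic-tolls-mfd} sketch, that a trip's delay depends only on the accumulation along its path, which in this fixed-$D$ setting is summarized by the exit rate. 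I would treat this as a quasi-stationary approximation, consistent with \citet{GONZALES20121519}, rather than attempting a fully rigorous treatment of the transient dynamics.
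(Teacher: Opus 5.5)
Your proposal is correct and follows essentially the same route as the paper. Both derive $\mu = n_j/(n_j/\mu_f + w)$ from the waiting-time relation and the triangular MFD. Both integrate the throughput over the early and late ramps by a change of variables to obtain the $\frac{n_j}{e}\ln(\cdot)$ and $\frac{n_j}{L}\ln(\cdot)$ terms. Both recover the plateau length from $\lambda(t_B - t_1) + \lambda(t_2 - t_C) = N_e + N_L$. The only cosmetic difference is that you count car users first and then multiply by $\tau$, as in Proposition~\ref{prop:rev-opt-static-tolls}, whereas the paper splits $R(\tau) = \tau\int_{t_A}^{t_D}\mu(n(t))\,dt$ directly. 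Like the paper, you take the trapezoidal equilibrium profile under the MFD as given rather than proving it.
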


\begin{proof}
We prove this claim in two steps. First, under a triangular MFD, we characterize the equilibrium system throughput $\mu(n(t))$, which is time-varying and depends on the vehicle accumulation in the system, unlike in the bottleneck setting, under a static toll $\tau$. Using this relation for the system throughput, we then derive the revenue expression in the statement of the theorem.

To establish a relation for $\mu(n(t))$, first note that the static toll $\tau \geq 0$ must be such that there is some user with a cost of $z_T$ at equilibrium. If not, all users would take the car at equilibrium and thus the toll can be increased without reducing car ridership or revenues. Thus, let $\underline{\tau} \geq 0$ be the minimum toll at which there is some user who is indifferent between using car and transit.

%first note that a static toll $\tau \geq 0$ must be such that $z_T - z_C - \tau \leq T_C$. Note that if $z_T - z_C - \tau > T_C$, then $\tau$ can be increased to $z_T - z_C - T_C$ without changing the number of car users, resulting in a higher revenue. Next, for any $\tau \geq \max \{ 0, z_T - z_C - T_C \}$, note that the equilibrium is as specified in Proposition~\ref{prop:rev-opt-static-tolls} other than that $T = z_T - z_C - \tau$.

Next, recall that under a static toll $\tau \geq \underline{\tau}$, the equilibrium waiting time is akin to the right of Figure~\ref{fig:equilibrium_both_models}, with the time points $t_A, t_B, t_C, t_D$ determined endogenously by the tolling policy and where the peak waiting time is $z_T - z_C - \tau$ during the interval $[t_B, t_C]$. Moreover, note that in the interval $[t_A, t_B]$, the waiting time is given by $z_T - z_C - \tau - e \Delta$ where $\Delta \in [0, t_B - t_A]$. Furthermore, in the interval $[t_C, t_D]$, the waiting time is given by $z_T - z_C - \tau - L \Delta$ where $\Delta \in [0, t_D - t_C]$.

Next, let $v_f$ be the free-flow speed in the uncongested part of the MFD, so free-flow travel time is $\frac{D}{v_f}$. Moreover, letting $v(n(t))$ be the system speed at time $t$, the corresponding travel time with vehicle accumulation $n(t)$ is $\frac{D}{v(n(t))}$. Here, $v(n(t)) = D \frac{\mu(n(t))}{n(t)}$, since the system speed equals to the trip distance $D$ times the slope from the origin to any point on the triangular MFD in Figure~\ref{fig:triangular-mfd}. Consequently, with slight abuse of notation, we have the following relation for the waiting time: $w(n(t)) = \frac{D}{v(n(t))} - \frac{D}{v_f} = \frac{n(t)}{\mu(n(t))} - \frac{n_c}{\mu_f}.$ Moreover, by our triangular MFD assumption, $\frac{\mu_f}{\mu(n(t))} = \frac{n_j - n_c}{n_j - n(t)}$, which implies $n(t) = n_j - \frac{\mu(n(t))}{\mu_f} (n_j - n_c)$. Substituting this relation in the above waiting time equation and rearranging, we obtain that: $\mu(n(t)) = \frac{n_j}{\frac{n_j}{\mu_f} + w(n(t))}$.

Let $\mu_{\tau}$ be the fixed system throughput during the period $[t_B, t_C]$ when the waiting time function is flat and at its peak under the static toll $\tau$. Then, using the above relation for the system throughput as a function of $t$, we have the following expression for the revenue given a fixed toll $\tau$:
{\setlength{\abovedisplayskip}{1pt}
\setlength{\belowdisplayskip}{1pt}
\setlength{\jot}{1pt}
\begin{align*}
    R(\tau) &= \tau \int_{t_A}^{t_D} \mu(n(t)) dt = \tau \left[ \int_{t_A}^{t_B} \mu(n(t)) dt + \int_{t_B}^{t_C} \mu(n(t)) dt + \int_{t_C}^{t_D} \mu(n(t)) dt \right], \\
    &\stackrel{(a)}{=} \! \tau \bigg[ \int_{0}^{t_B - t_A} \! \! \! \! \frac{n_j}{\frac{n_j}{\mu_f} + (z_T - z_C - \tau - \Delta e)} d \Delta + \int_{0}^{t_D - t_C} \! \! \! \! \frac{n_j}{\frac{n_j}{\mu_f} + (z_T - z_C - \tau - \Delta L)} d \Delta + \mu_{\tau} (t_C - t_B)  \bigg], \\
    &\stackrel{(b)}{=} \! \tau \bigg[ \frac{n_j}{e} \ln \left( 1 \! + \! \frac{(z_T - z_C - \tau) \mu_f}{n_j} \right) \! + \! \frac{n_j}{L} \ln \left( 1 \! + \! \frac{(z_T - z_C - \tau) \mu_f}{n_j} \right) \! + \! \frac{n_j}{\frac{n_j}{\mu_f} \! + z_T - z_C - \tau} (t_C - t_B) \bigg], \\
    &= \tau \bigg[ n_j \frac{e+L}{eL} \ln \left( 1 + \frac{(z_T - z_C - \tau) \mu_f}{n_j} \right) + \frac{n_j}{\frac{n_j}{\mu_f} + z_T - z_C - \tau} (t_C - t_B) \bigg],
\end{align*}}
where (a) follows by the variable transformation $\Delta = t_B - t$ for the integral between $[t_A, t_B]$ and the variable transformation $\Delta = t - t_C$ for the integral between $[t_C, t_D]$, and (b) follows by evaluating the integral and noting that the waiting time at $t_A$ and $t_D$ is zero.

Next, noting that $\int_{t_A}^{t_B} \mu(n(t)) dt = \lambda (t_B - t_1)$ and $\int_{t_C}^{t_D} \mu(n(t)) dt = \lambda (t_2 - t_C)$, we have: $t_C - t_B = \frac{\Lambda}{\lambda} - \frac{1}{\lambda} \left[ \int_{t_A}^{t_B} \mu(n(t)) dt + \int_{t_C}^{t_D} \mu(n(t)) dt \right] = \frac{\Lambda}{\lambda} - \frac{1}{\lambda} n_j \frac{e+L}{eL} \ln \left( 1 + \frac{(z_T - z_C - \tau) \mu_f}{n_j} \right).$ Substituting this into the above relation for the revenue and simplifying, we obtain our desired result.
\begin{comment}
\begin{align*}
    R(\tau) %&= \tau \left[ n_j \frac{e+L}{eL} \log \left( 1 + \frac{(z_T - z_C - \tau) \mu_f}{n_j} \right) + \frac{n_j}{\frac{n_j}{\mu_f} + z_T - z_C - \tau} \left( \frac{\Lambda}{\lambda} - \frac{1}{\lambda} n_j \frac{e+L}{eL} \log \left( 1 + \frac{(z_T - z_C - \tau) \mu_f}{n_j} \right) \right) \right], \\
    &= \tau \left[ \frac{\Lambda}{\lambda} \frac{n_j}{\frac{n_j}{\mu_f} + z_T - z_C - \tau} + n_j \frac{e+L}{eL} \log \left( 1 + \frac{(z_T - z_C - \tau) \mu_f}{n_j} \right) \left( 1 - \frac{\frac{n_j}{\frac{n_j}{\mu_f} + z_T - z_C - \tau}}{\lambda} \right) \right],
\end{align*}
which establishes our desired relation for the revenue as a function of the static toll $\tau$.
\end{comment}
\end{proof}
\vspace{-4pt}
While Theorem~\ref{thm:rev-static-tolls-mfd} yields a closed-form expression for the revenue under a static toll $\tau$, its complexity makes it challenging to characterize the revenue-optimal static toll in closed form, in contrast to the bottleneck setting in Section~\ref{subsec:rev-opt-static}. This difficulty arises since, unlike dynamic revenue-optimal tolls, static revenue-optimal tolling can, in general, induce operation on the congested branch of the MFD, resulting in congestion delays. That said, since the revenue function in Equation~\eqref{eq:rev-rlation-static-mfd} is Lipschitz continuous over $\tau \in [0, z_T - z_C]$, an approximately optimal toll can be computed by discretizing this interval and selecting the toll with the highest revenue from this set.

\vspace{-4pt}
\subsection{Static vs. Dynamic Tolling under Triangular MFD} \label{subsec:static-v-dynamic-triangular-mfd}
\vspace{-2pt}

Due to the state-dependent capacities under the MFD framework, static tolls generally admit weaker performance guarantees than in the bottleneck model. For instance, while static revenue-optimal tolls achieve at least half of the dynamic optimal revenue in the bottleneck model (Theorem~\ref{thm:rev-comp-static-v-dynamic}), our numerical experiments in Section~\ref{sec:experiments} show that this revenue ratio may drop below half for urban systems governed by a triangular MFD. Nevertheless, for such systems, we show that there exists a toll that, over a broad and practically relevant range of parameters, achieves at least two-thirds of the dynamic optimal revenue and incurs at most twice the minimum attainable system cost.

We consider the regime in which the cost difference between transit and the free-flow cost of using a car satisfies $z_T - z_C \leq \frac{\Lambda eL}{(\lambda - \mu_f)(e+L)}$, an empirically relevant condition for urban systems where public transit provides a viable alternative to car travel, as supported by our numerical experiments in Section~\ref{sec:experiments}. In this regime, the toll $\tau = z_T - z_C$, the revenue-optimal static toll in the bottleneck setting (see Proposition~\ref{prop:rev-opt-static-tolls}), maintains the system at the throughput-maximizing capacity $\mu_f$ with no congestion delays, as any delays would induce car users to switch to transit. As a result, both the revenue expression in Equation~\eqref{eq:rev-rlation-static-mfd} and the corresponding system cost reduce to their bottleneck counterparts, with the capacity $\mu$ replaced by $\mu_f$. Then, following the proofs of Theorem~\ref{thm:rev-comp-static-v-dynamic} and~\ref{thm:sc-comp-static-v-dynamic}, if $z_T - z_C \leq \frac{\Lambda eL}{(\lambda - \mu_f)(e+L)}$, the static toll $\tau = z_T - z_C$ (i) achieves at least a $\frac{2}{3 - \frac{\mu_f}{\lambda}}$ fraction of the optimal dynamic revenue and (ii) incurs at most twice the minimum achievable system cost.

While the above guarantees do not extend to the regime in which $z_T - z_C > \frac{\Lambda eL}{(\lambda - \mu_f)(e+L)}$, they nonetheless highlight that static tolling can achieve robust performance across both revenue and system cost metrics over a broad and practically relevant range of parameters, even for urban systems governed by a triangular MFD. That said, owing to the complexity of the revenue expression under static tolls in Equation~\eqref{eq:rev-rlation-static-mfd}, a general characterization of the performance gap between static revenue-optimal tolling and its dynamic benchmarks is considerably more challenging than in the bottleneck setting. Therefore, we provide a more comprehensive assessment of this performance gap through numerical experiments in the next section.

\vspace{-4pt}
\section{Numerical Experiments} \label{sec:experiments}
\vspace{-2pt}
This section presents numerical experiments comparing the efficacy of static revenue-optimal tolling against its dynamic revenue and system-cost-optimal counterparts based on two real-world application cases of congestion pricing: (i) the SF–Oakland Bay Bridge for the bottleneck model and (ii) New York City’s congestion reduction zone (CRZ) for the MFD framework. Our results, which are grounded in practical datasets, validate our theoretical guarantees and demonstrate that simple static tolling policies can deliver robust performance on both revenue and system cost metrics, underscoring their practical effectiveness and shedding light on the strong empirical performance of the many static congestion pricing systems already in operation. In the following, we provide an overview of our experimental setup (Section~\ref{subsec:model-calibration}) and present results comparing the relative efficacy of static and dynamic tolling for both application cases (Section~\ref{subsec:numerical-results}). For complete details on our data sources and model calibration procedure for both case studies, see Appendix~\ref{apdx:additional-numerical-details}. The code to generate the results are available at the following \href{{https://github.com/djalota/static_optimal_tolling}}{Link}.

\vspace{-4pt}
\subsection{Overview of Experimental Setup} \label{subsec:model-calibration}
\vspace{-2pt}
%For both the SF-Oakland Bay Bridge and New York City (NYC) case studies, we focus on the morning commuting period between 5-10 AM, when congestion is the most pronounced, and users choose between traveling by car or using a local public transit alterative, which we model as the outside option. In both settings, in practice, car travel is subject to a \emph{flat, time-invariant} congestion toll during this period, with a \$8.50 toll for westbound commuting trips on the Bay Bridge and a \$9 CRZ toll in NYC, making these settings natural empirical benchmarks for evaluating the performance of static tolling relative to its dynamic pricing benchmarks.

For both the SF-Oakland Bay Bridge and New York City (NYC) case studies, we focus on the weekday morning commuting period between 5-10 AM, when congestion is the most pronounced and a majority of cars are subject to a \emph{flat, time-invariant} toll of \$8.50 for westbound trips on the Bay Bridge and \$9 for entry into NYC’s Congestion Relief Zone (CRZ). The use of simple static tolls in both settings makes them natural empirical settings for evaluating the performance of static tolling relative to its dynamic benchmarks. In each case, we consider a setting where users choose between traveling by car or using a local public transit alternative, Bay Area Rapid Transit (BART) in the Bay Bridge corridor or the NYC subway system, which we model as the outside option.

To calibrate the parameters of the bottleneck model with an outside option for the Bay Bridge study and the MFD framework for the NYC study, we draw on a broad set of data sources. These include BART and NYC subway ridership data~\citep{BART_RidershipReports,MTA_OD_Ridership2025}, vehicular demand and flow data (such as hourly traffic counts for the Bay Bridge and CRZ entry counts for NYC)~\citep{caltrans_pems,NY_CongestionRelief_Entries2025}, taxi and for-hire vehicle trip records in NYC~\citep{NYC_TLC_TripData}, and OpenStreetMap data~\citep{OpenStreetMap} to compute total roadway length within the CRZ for calibrating the parameters of the triangular MFD. We combine these data sources with standard estimates of the value of time and schedule delay penalties from the literature. %, as well as information on parking costs in both locations.

A key quantity we vary in our experiments is the cost difference between transit and car travel at free-flow, $z_T - z_C$, which plays a central role in our theoretical results in Sections~\ref{sec:revenue-optimal-tolling-classical-bottleneck} and~\ref{sec:mfd}. To do so, while we fix $z_C$ to reflect the sum of the costs of parking and the average free-flow travel time for morning commutes in each setting, we vary $z_T$ through a single parameter: \emph{a discomfort multiplier} $\eta$. Specifically, in our experiments, we model $z_T$ as the sum of the transit fare and a time-based cost composed of walking to and from stations, waiting time on platforms, and in-vehicle BART or subway travel time, with these time components scaled by $\eta$ to reflect the empirically observed fact that time spent using transit is perceived as more onerous than time spent driving~\citep{Wardman2012}.

Because there is no single, well-established value for the discomfort multiplier and it can vary substantially across settings, we conduct a sensitivity analysis by varying $\eta \in [1.5, 18]$, with values in the range $[1.5, 5]$ consistent with empirical estimates~\citep{Wardman2012}. Exploring a broader range allows us to capture heterogeneity in user preferences and assess how technological or infrastructural changes, such as improvements in transit quality or the increased attractiveness of car travel due to autonomous vehicles~\citep{ostrovsky2025congestion}, affect the cost differential $z_T - z_C$. In addition, varying $\eta$ over this wider range enables us to capture all the regimes for $z_T - z_C$ characterized in our theoretical results. % in Section~\ref{sec:revenue-optimal-tolling-classical-bottleneck}.

For further details on our data sources and model calibration procedure, see Appendix~\ref{apdx:model-calibration}. Moreover, we summarize the parameter values used in our experiments for the Bay Bridge and NYC case studies in Tables~\ref{tab:calibration_summary-bottleneck} (Appendix~\ref{apdx:bay_area_summary}) and~\ref{tab:calibration_summary_nyc_mfd} (Appendix~\ref{apdx:summary-values-nyc}), respectively.

\vspace{-4pt}
\subsection{Results} \label{subsec:numerical-results}
\vspace{-2pt}
This section compares the performance of static and dynamic tolls on both revenue and system cost metrics for the Bay Bridge and NYC case studies. To this end, Figures~\ref{fig:bottleneck-ratios-bay-area} and~\ref{fig:mfd-ratios-nyc} plot the revenue and system cost ratios for a set of static and dynamic tolling policies, normalized by the corresponding dynamic revenue-optimal and system-cost-optimal benchmarks, as the discomfort multiplier $\eta$ is varied for the Bay Bridge and NYC settings, respectively. We denote the static revenue-optimal and system-cost-optimal policies by \emph{Static-RO} and \emph{Static-SO}, and their dynamic counterparts by \emph{Dynamic-RO} and \emph{Dynamic-SO}. For the experiments, the static and dynamic revenue-optimal policies are computed using the derivations in Sections~\ref{sec:revenue-optimal-tolling-classical-bottleneck} and~\ref{sec:mfd}. The dynamic system-cost-optimal policy is computed using established results from the literature~\citep{GONZALES20121519,Geroliminis2009}, while the static system-cost-optimal policy under the bottleneck and MFD models are computed based on the derivations in Appendices~\ref{apdx:sco-static-toll} and~\ref{apdx:derivation-tsc-triangular-mfd}.

Our results show that for practically relevant values of the discomfort multiplier (i.e., $\eta \in [1.5, 5]$), static revenue-optimal tolls incur at most a 10\% revenue loss in the Bay Bridge study and 20\% in the NYC study compared to the dynamic revenue-optimal policy. As public transit becomes substantially less attractive ($\eta >> 5$), this gap widens to around 15\% in the Bay Bridge study and can exceed 50\% in NYC, consistent with our theory. While static revenue-optimal tolling achieves better revenue ratios in the Bay Bridge study, it performs worse on the system cost metric. Specifically, in the NYC study, static revenue-optimal tolls incur at most an 8\% increase relative to the minimum achievable system cost, whereas, in the Bay Bridge study, the system cost increase reaches 25\% for $\eta \in [1.5, 5]$. Furthermore, when $\eta \gg 5$, the system cost ratio in the Bay Bridge study can even exceed two, incurring more than twice the optimal cost. Thus, while static revenue-optimal tolls in NYC capture a smaller fraction of dynamic optimal revenue, they achieve a system cost much closer to the dynamic optimum compared to the Bay Bridge study, highlighting a core trade-off between the revenue and system cost objectives under static tolls.

Additionally, our results for the Bay Bridge study in Figure~\ref{fig:bottleneck-ratios-bay-area} highlight a non-monotonic relationship between the discomfort multiplier and the performance gap between static and dynamic policies. Static revenue-optimal tolls perform well when transit is only slightly less desirable than driving (i.e., low values of $\eta$), their performance deteriorates as transit becomes moderately worse corresponding to intermediate values of $\eta$, and then improves somewhat again when transit becomes highly unattractive (i.e., very large values of $\eta$). This non-monotonic behavior is consistent with our theoretical bounds on the revenue ratio between static and dynamic revenue-optimal policies in Theorem~\ref{thm:rev-comp-static-v-dynamic}. Moreover, this non-monotonic behavior contrasts the strictly monotonic pattern observed in the NYC study under the MFD framework in Figure~\ref{fig:mfd-ratios-nyc}, where the performance gap of static revenue optimal tolls relative to its dynamic benchmarks worsens as $\eta$ increases.

Figure~\ref{fig:bottleneck-ratios-bay-area} further shows that the two static tolling policies (Static-RO and Static-SO) exhibit near identical performance across most values of the discomfort multiplier $\eta$. Notably, in the practically relevant range $\eta \in [1.5, 5]$, the two policies coincide, with a slight divergence between these policies at intermediate values of $\eta$. A similar pattern holds for the NYC study under the MFD framework, where the static system-cost-optimal and revenue-optimal tolls coincide for all discomfort multipliers shown in Figure~\ref{fig:mfd-ratios-nyc}; hence, we plot only the latter for visual clarity. Overall, these results suggest that if a central planner is required to set static tolls, as is common in practice, both revenue and system cost objectives can be achieved simultaneously particularly over practically relevant parameter ranges of the discomfort multiplier.

Beyond the discomfort multiplier, a key driver of performance is the ratio of the maximum system throughput to the desired user arrival rate ($\frac{\mu}{\lambda}$ for the bottleneck model and $\frac{\mu_f}{\lambda}$ for the MFD framework). When this ratio is much lower than one, as in NYC, where it is around $0.2$ due to its high public transit mode share, the dynamic revenue and system-cost-optimal tolls nearly coincide, and static tolls achieve lower revenues and higher system costs compared to both dynamic benchmarks. Despite this Pareto dominance of dynamic tolling across both metrics in the transit-intensive NYC study, the performance losses of static tolling remain small for most practically relevant values of $\eta \in [1.5, 5]$. In contrast, when the ratio of the maximum system throughput to the desired user arrival rate is closer to one, as in the Bay Bridge morning peak where it is roughly $0.7$, the two dynamic tolling policies differ considerably. In this case, while dynamic revenue-optimal tolls Pareto dominate both static policies by achieving higher revenues and lower system costs for all $\eta$, such a Pareto dominance does not hold for dynamic system-cost-optimal tolls, which can even generate up to 20\% less revenue than both static policies for practically relevant values of $\eta$.

Finally, in both case studies, the static and dynamic revenue-optimal tolls align closely with the currently implemented tolls on the Bay Bridge and in NYC’s CRZ for most practical ranges of the discomfort multiplier $\eta$. Notably, the static revenue-optimal toll equals the \$8.50 Bay Bridge toll at approximately $\eta = 2.1$ and the \$9 CRZ toll at $\eta = 1.7$. This lower value of the discomfort multiplier in NYC is reflective of its robust public transit system, which contrasts the dominance of car travel in the Bay Bridge corridor. At these values of $\eta$, we find that static tolls capture nearly all of the benefits of dynamic tolling, achieving roughly 98\% of dynamic optimal revenue and increasing total system cost by only 3\% relative to the minimum achievable system cost in both settings.

%only a 3\% higher total system cost than the minimum achievable system cost in both case studies.

Overall, our results highlight the merits of static revenue optimal tolling policies relative to their dynamic optimal tolling counterparts on both revenue and total system cost metrics. 

\begin{figure*}[tbh!]
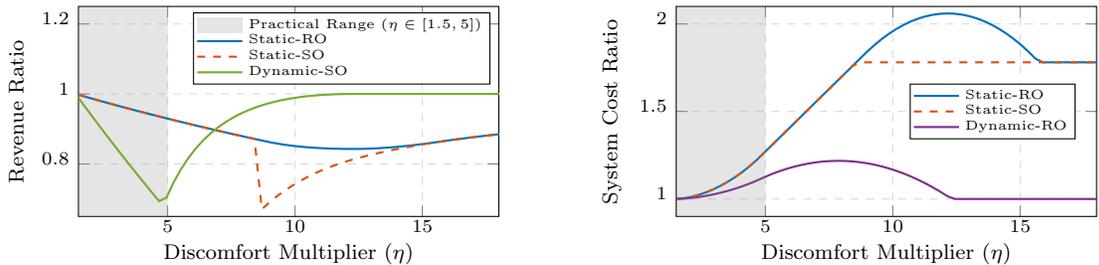

  \centering \hspace{-20pt}
  \begin{subfigure}[b]{0.4\columnwidth}
      \include{Fig/tex/bay_area_revenue_ratio}
  \end{subfigure} \hspace{30pt}
  \begin{subfigure}[b]{0.4\columnwidth}
      \include{Fig/tex/bay_area_sc_ratio}
  \end{subfigure} %\hspace{5pt}
     \vspace{-38pt}
    \caption{{\small \sf Depiction of the revenue (left) and system cost (right) ratios of a set of static and dynamic tolling policies, normalized by the corresponding dynamic revenue-optimal and system-cost-optimal benchmarks, as the discomfort multiplier $\eta$ is varied for the Bay Bridge case study. %The left panel depicts the revenue ratio of the (i) static revenue-optimal, (ii) static system-cost optimal, and (iii) dynamic system-cost optimal tolling policies, normalized by the maximum achievable dynamic tolling revenue. The right panel depicts the system cost ratio of the (i) static revenue-optimal, (ii) static system-cost optimal, and (iii) dynamic revenue-optimal tolling policies, normalized by the minimum achievable dynamic tolling system cost.
    %Variation of the revenue ratio of the (i) static revenue-optimal, (ii) static system-cost optimal, and (iii) dynamic system-cost optimal tolling policies, normalized by the maximum achievable dynamic tolling revenue, as a function of the discomfort multiplier $\eta$ for the Bay Area (left) and NYC (right) case studies.
    %revenue (left) and system cost (right) ratios of a set of static and dynamic tolling policies, normalized by the corresponding dynamic revenue-optimal and system-cost-optimal benchmarks, as the discomfort multiplier $\eta$ is varied for the Bay Area case study. In particular, we depict the revenue ratio corresponding to the static 
    }} 
    \label{fig:bottleneck-ratios-bay-area}
\end{figure*}
%\vspace{-18pt}
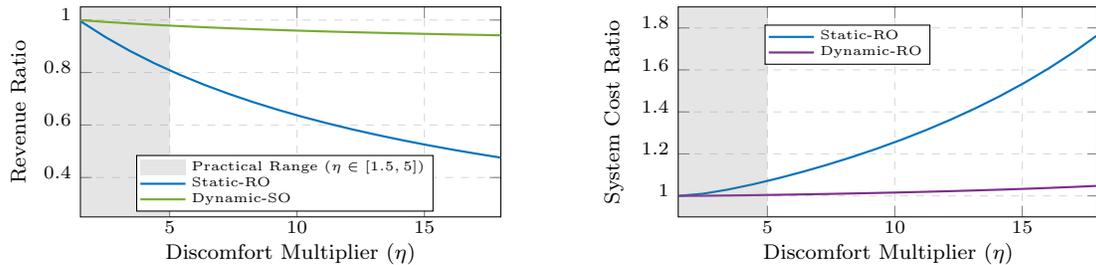
\begin{figure*}[tbh!]
  \centering \hspace{-20pt}
  \begin{subfigure}[b]{0.4\columnwidth}
      \definecolor{mycolor1}{rgb}{0.00000,0.44700,0.74100}%
\definecolor{mycolor2}{rgb}{0.85000,0.32500,0.09800}%
\definecolor{mycolor3}{rgb}{0.46600,0.67400,0.18800}%
\definecolor{mycolor4}{rgb}{0.49400,0.18400,0.55600}%

\begin{tikzpicture}
\begin{axis}[
    width=2.2in,
    height=1.1in,
    at={(0in,0in)},
    scale only axis,
    xmin=1.5,
    xmax=18,
    ymin=0.25,
    ymax=1.05,
    xlabel={Discomfort Multiplier ($\eta$)},
    ylabel={Revenue Ratio},
    xlabel style={
        font=\color{white!15!black},
        font=\footnotesize,
        yshift=0.1cm
    },
    ylabel style={
        font=\color{white!15!black},
        font=\footnotesize
    },
    tick label style={
        font=\scriptsize,
        yshift=0.07cm
    },
    axis background/.style={fill=white},
    grid=both,
    grid style={dashed, gray!30},
    legend style={
        at={(0.84,0.16)},      % adjust if needed after adding shaded region
        anchor=east,
        legend cell align=left,
        align=left,
        draw=white!0!black,
        inner xsep=1pt,
        inner ysep=0pt,
        font=\tiny,
        row sep=-3pt
    }
]

% --- Practical range shading + dashed boundaries ---
% Shade the vertical band from x=1.5 to x=3
\addplot[
    draw=none,
    fill=gray,
    fill opacity=0.20,
    legend image code/.code={
        \draw[fill=gray, fill opacity=0.20, draw=none]
            (0cm,-0.1cm) rectangle (0.6cm,0.1cm);
    }
] coordinates {(1.5,0) (5,0) (5,2.1) (1.5,2.1)} \closedcycle;

% Legend entry for the shaded band
\addlegendentry{Practical Range ($\eta \in [1.5, 5]$)}

% % Draw dashed vertical boundary lines at x=1.5 and x=3
% \addplot[
%     black,
%     dashed,
%     thick,
%     forget plot
% ] coordinates {(1.5,0) (1.5,2.1)};

% \addplot[
%     black,
%     dashed,
%     thick,
%     forget plot
% ] coordinates {(3,0) (3,2.1)};

% ---------- Line 1 ----------
\addplot[color = mycolor1, thick] table[row sep=\\] {
x y \\
1.5 0.99568183 \\
2.47058824 0.93555859 \\
3.44117647 0.88228283 \\
4.41176471 0.83474779 \\
5.38235294 0.79207300 \\
6.35294118 0.75354933 \\
7.32352941 0.71859917 \\
8.29411765 0.68674733 \\
9.26470588 0.65759930 \\
10.23529412 0.63082483 \\
11.20588235 0.60614533 \\
12.17647059 0.58332418 \\
13.14705882 0.56215909 \\
14.11764706 0.54247612 \\
15.08823529 0.52412485 \\
16.05882353 0.50697455 \\
17.02941176 0.49091106 \\
18.00000000 0.47583426 \\
};
\addlegendentry{Static-RO}

\begin{comment}
% ---------- Line 1 ----------
\addplot[color = mycolor2, thick, dashed] table[row sep=\\] {
x y \\
1.5 0.99568183 \\
2.47058824 0.93555859 \\
3.44117647 0.88228283 \\
4.41176471 0.83474779 \\
5.38235294 0.79207300 \\
6.35294118 0.75354933 \\
7.32352941 0.71859917 \\
8.29411765 0.68674733 \\
9.26470588 0.65759930 \\
10.23529412 0.63082483 \\
11.20588235 0.60614533 \\
12.17647059 0.58332418 \\
13.14705882 0.56215909 \\
14.11764706 0.54247612 \\
15.08823529 0.52412485 \\
16.05882353 0.50697455 \\
17.02941176 0.49091106 \\
18.00000000 0.47583426 \\
};
\addlegendentry{Static SC Opt}
\end{comment}

% ---------- Line 2 ----------
\addplot[color = mycolor3, thick, mark=none] table[row sep=\\] {
% (second y-array, same x)
x y \\
1.5 0.99952020 \\
2.47058824 0.99283984 \\
3.44117647 0.98692031 \\
4.41176471 0.98163864 \\
5.38235294 0.97689700 \\
6.35294118 0.97261659 \\
7.32352941 0.96873324 \\
8.29411765 0.96519415 \\
9.26470588 0.96195548 \\
10.23529412 0.95898054 \\
11.20588235 0.95623837 \\
12.17647059 0.95370269 \\
13.14705882 0.95135101 \\
14.11764706 0.94916401 \\
15.08823529 0.94712498 \\
16.05882353 0.94521939 \\
17.02941176 0.94343456 \\
18.00000000 0.94175936 \\
};
\addlegendentry{Dynamic-SO}

\end{axis}
\end{tikzpicture}
  \end{subfigure} \hspace{30pt}
  \begin{subfigure}[b]{0.4\columnwidth}
      \definecolor{mycolor1}{rgb}{0.00000,0.44700,0.74100}%
\definecolor{mycolor2}{rgb}{0.85000,0.32500,0.09800}%
\definecolor{mycolor3}{rgb}{0.46600,0.67400,0.18800}%
\definecolor{mycolor4}{rgb}{0.49400,0.18400,0.55600}%

\begin{tikzpicture}
\begin{axis}[
    width=2.2in,
    height=1.1in,
    at={(0in,0in)},
    scale only axis,
    xmin=1.5,
    xmax=18,
    ymin=0.9,
    ymax=1.9,
    xlabel={Discomfort Multiplier ($\eta$)},
    ylabel={System Cost Ratio},
    xlabel style={
        font=\color{white!15!black},
        font=\footnotesize,
        yshift=0.1cm
    },
    ylabel style={
        font=\color{white!15!black},
        font=\footnotesize
    },
    tick label style={
        font=\scriptsize,
        yshift=0.07cm
    },
    axis background/.style={fill=white},
    grid=both,
    grid style={dashed, gray!30},
    legend style={
        at={(0.6,0.82)},      % adjust if needed after adding shaded region
        anchor=east,
        legend cell align=left,
        align=left,
        draw=white!0!black,
        inner xsep=1pt,
        inner ysep=0pt,
        font=\tiny,
        row sep=-3pt
    }
]

% --- Practical range shading + dashed boundaries ---
% Shade the vertical band from x=1.5 to x=3

% Legend entry for the shaded band
%\addlegendentry{Practical Range ($\eta \in [1.5, 5]$)}

% % Draw dashed vertical boundary lines at x=1.5 and x=3
% \addplot[
%     black,
%     dashed,
%     thick,
%     forget plot
% ] coordinates {(1.5,0) (1.5,2.1)};

% \addplot[
%     black,
%     dashed,
%     thick,
%     forget plot
% ] coordinates {(3,0) (3,2.1)};

% ---------- Line 1 ----------
\addplot[color = mycolor1, thick] table[row sep=\\] {
x y \\
1.5 1.00005841 \\
2.47058824 1.01025849 \\
3.44117647 1.02993623 \\
4.41176471 1.05454599 \\
5.38235294 1.08262837 \\
6.35294118 1.11365068 \\
7.32352941 1.14744963 \\
8.29411765 1.18404422 \\
9.26470588 1.22356277 \\
10.23529412 1.26621278 \\
11.20588235 1.31226962 \\
12.17647059 1.36207506 \\
13.14705882 1.41604187 \\
14.11764706 1.47466324 \\
15.08823529 1.53852676 \\
16.05882353 1.60833364 \\
17.02941176 1.68492426 \\
18.00000000 1.76931204 \\
};
\addlegendentry{Static-RO}

\begin{comment}
\addplot[color = mycolor2, thick, mark=none, dashed] table[row sep=\\] {
% (second y-array, same x)
x y \\
1.5 1.00005841 \\
2.47058824 1.01025849 \\
3.44117647 1.02993623 \\
4.41176471 1.05454599 \\
5.38235294 1.08262837 \\
6.35294118 1.11365068 \\
7.32352941 1.14744963 \\
8.29411765 1.18404422 \\
9.26470588 1.22356277 \\
10.23529412 1.26621278 \\
11.20588235 1.31226962 \\
12.17647059 1.36207506 \\
13.14705882 1.41604187 \\
14.11764706 1.47466324 \\
15.08823529 1.53852676 \\
16.05882353 1.60833364 \\
17.02941176 1.68492426 \\
18.00000000 1.76931204 \\
};
\addlegendentry{Static SC Opt}
\end{comment}

% ---------- Line 2 ----------
\addplot[color = mycolor4, thick, mark=none] table[row sep=\\] {
% (second y-array, same x)
x y \\
1.5 1.00000365 \\
2.47058824 1.00064116 \\
3.44117647 1.00187101 \\
4.41176471 1.00340912 \\
5.38235294 1.00516427 \\
6.35294118 1.00710317 \\
7.32352941 1.00921560 \\
8.29411765 1.01150276 \\
9.26470588 1.01397267 \\
10.23529412 1.01663830 \\
11.20588235 1.01951685 \\
12.17647059 1.02262969 \\
13.14705882 1.02600262 \\
14.11764706 1.02966645 \\
15.08823529 1.03365792 \\
16.05882353 1.03802085 \\
17.02941176 1.04280777 \\
18.00000000 1.04808200 \\
};
\addlegendentry{Dynamic-RO}

\addplot[
    draw=none,
    fill=gray,
    fill opacity=0.20,
    % legend image code/.code={
    %     \draw[fill=gray, fill opacity=0.20, draw=none]
    %         (0cm,-0.1cm) rectangle (0.6cm,0.1cm);
    % }
] coordinates {(1.5,0) (5,0) (5,2.1) (1.5,2.1)} \closedcycle;

\end{axis}
\end{tikzpicture}
  \end{subfigure} %\hspace{5pt}
     \vspace{-38pt}
    \caption{{\small \sf Depiction of the revenue (left) and system cost (right) ratios of a set of static and dynamic tolling policies, normalized by the corresponding dynamic revenue-optimal and system-cost-optimal benchmarks, as the discomfort multiplier $\eta$ is varied for the New York City case study.
    %Variation in the system cost ratio of the (i) static revenue-optimal, (ii) static system-cost optimal, and (iii) dynamic revenue-optimal tolling policies, normalized by the minimum achievable dynamic tolling system cost, as a function of the discomfort multiplier $\eta$ for the Bay Area (left) and NYC (right) case studies.
    }} 
    \label{fig:mfd-ratios-nyc}
\end{figure*}
%\vspace{-13pt}
%This regime may arise, for instance, in settings with autonomous driving, where car travel at free flow becomes substantially more attractive than public transit~\cite{ostrovsky2025congestion}.

\begin{comment}
\begin{figure}[tbh!]
      \centering
      \includegraphics[width=80mm]{Fig/experiments/bay_bridge_toll_comparison_to_true_value.png}
      \vspace{-10pt}
      \caption{\small \sf Comparison between the Bay Bridge toll of $\$8$ and the static revenue optimal tolls for varying values of the discomfort multiplier for using transit.}
      \label{fig:toll-comparison-bay-bridge} 
   \end{figure}     
\end{comment}

\vspace{-10pt}

\section{Conclusion and Future Work} \label{sec:conclusion}
\vspace{-2pt}

This work analyzed the performance gap between simple static and optimal dynamic congestion pricing schemes in two canonical models capturing the dominant real-world use cases of congestion pricing. In both models, we derived, in closed form, the revenue-optimal static and dynamic tolling policies and showed that static revenue-optimal tolls deliver robust performance, achieving constant-factor approximations, on both revenue and system cost metrics for a wide range of practically relevant parameter regimes. We further validated our theory with experiments based on two real-world congestion pricing case studies. Overall, our results demonstrate the practical efficacy of simple static tolls and help shed light on the strong empirical performance of the many static congestion pricing systems already in operation. Moreover, since our work abstracts from the substantial informational and computational burdens dynamic tolling places on transportation planners and users, our results can be interpreted as worst-case guarantees. Incorporating realistic behavioral or operational constraints would only strengthen the case for static tolling.

There are several future research directions. First, it would be valuable to study the performance gap between static and dynamic tolling under standard extensions of the bottleneck model, including heterogeneous values of time, stochastic arrivals and utilities, non-uniform desired bottleneck departure distributions, and congestion-dependent outside options whose cost varies with utilization. It would also be worthwhile to examine tolling policies for MFDs, such as Greenshield's relation~\citep{greenshields1947potential}, that do not satisfy Properties 1 and 2 or are non-triangular in the case of static tolling. Finally, there is scope to study objectives beyond revenue and system cost, and differentiated tolling schemes (e.g., carpool or taxi discounts) that better reflect real-world congestion pricing structures. 

%In this work, we studied the question of how much performance in terms of revenue and welfare (measured via the total system cost) is sacrificed when real-world transportation systems use simple static pricing instead of optimal dynamic congestion pricing. To answer this question, we analyzed the performance gap between static (simple) and dynamic (optimal) congestion pricing schemes in two canonical modeling frameworks that capture the dominant real-world use cases of congestion pricing. 

%First, it would be interesting to explore the performance gap between static and dynamic tolling policies under a variety of the standard model extensions and relaxations of Vickrey's classical bottleneck model, including incorporating heterogeneous values of time across users, more general desired bottleneck departure time distributions beyond uniform, and an outside option, whose cost rather than being fixed is dependent on the number of users taking the outside option.

\section*{Acknowledgments}

This work was supported by the Data Science Institute Postdoctoral Fellowship at Columbia University. We also thank Michael Ostrovsky for insightful discussions and Luyang Han for his support with the code to process the OpenStreetMap and NYC subway origin-destination data, required for our experiments in Section~\ref{sec:experiments}.

\bibliographystyle{unsrtnat}
\bibliography{EC_26/sample-bibliography}

\appendix

\section{Arrival and Departure Time Profiles} 

\subsection{Equilibrium Arrival and Departure Time Profiles} \label{apdx:arrival-departure-profiles}

In the following, we depict the equilibrium arrival and departure time profiles in the bottleneck model without an outside option (Figure~\ref{fig:arrival_profile_bottlenck}) and with a public transit outside option in a mixed-mode equilibrium where both car and transit are used (Figure~\ref{fig:arrival_profile_mfd}) in the setting when $\mu < \lambda$. For more details on these equilibrium arrival and departure time profiles, see~\citet{GONZALES20121519}. Note that the waiting time profiles depicted in Figure~\ref{fig:equilibrium_both_models} for car users is given by $w(t) = D(t) - A(t)$ for the bottleneck model without an outside option and is given by $w(t) = D_C(t) - A_C(t)$ for the bottleneck model with a public transit outside option. We note that in the bottleneck model, the arrival time represents the time at which users arrive at the point queue depicted in Figure~\ref{fig:conceptual-bottleneck-mfd} and the departure or cross time is the time at which they exit the point queue.

\begin{figure}[tbh!]
      \centering
      \includegraphics[width=125mm]{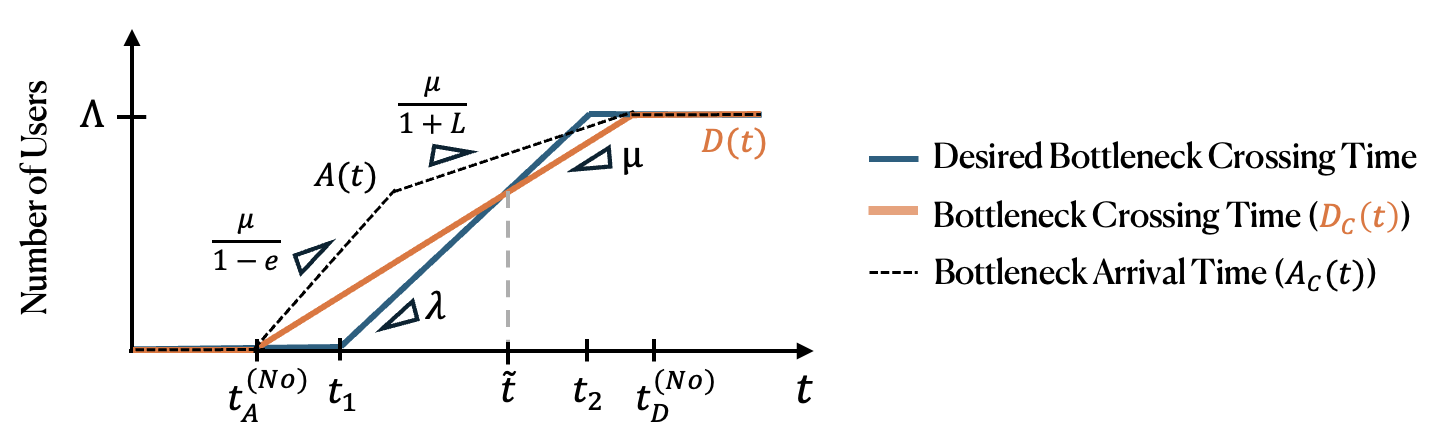}
      \vspace{-13pt}
      \caption{\small \sf Bottleneck arrival and departure (crossing) time profiles in the bottleneck model without an outside option in the setting when $\mu < \lambda$. Here, $e$ and $L$ denote the normalized schedule delay penalties for early and late arrivals, respectively, and $[t_1,t_2]$ represents users’ desired bottleneck crossing times. Moreover, $A(t)$ represents the arrival time distribution of the users at the bottleneck and $D(t)$ represents the number of car users that cross the bottleneck at their desired time. The curve in blue represents the desired bottleneck cross time of the users.
      %The slopes $e$ and $L$ denote the normalized schedule-delay penalties for early and late arrivals, respectively, and $[t_1,t_2]$ represents users’ desired bottleneck crossing times. %All users passing the bottleneck at $t< \Tilde{t}$ pass the bottleneck earlier than their desired time while those passing the bottleneck at $t > \Tilde{t}$ arrive later than their desired depart
      }
      \label{fig:arrival_profile_bottlenck} 
   \end{figure}  
%\vspace{-10pt}
\begin{figure}[tbh!]
      \centering
      \includegraphics[width=125mm]{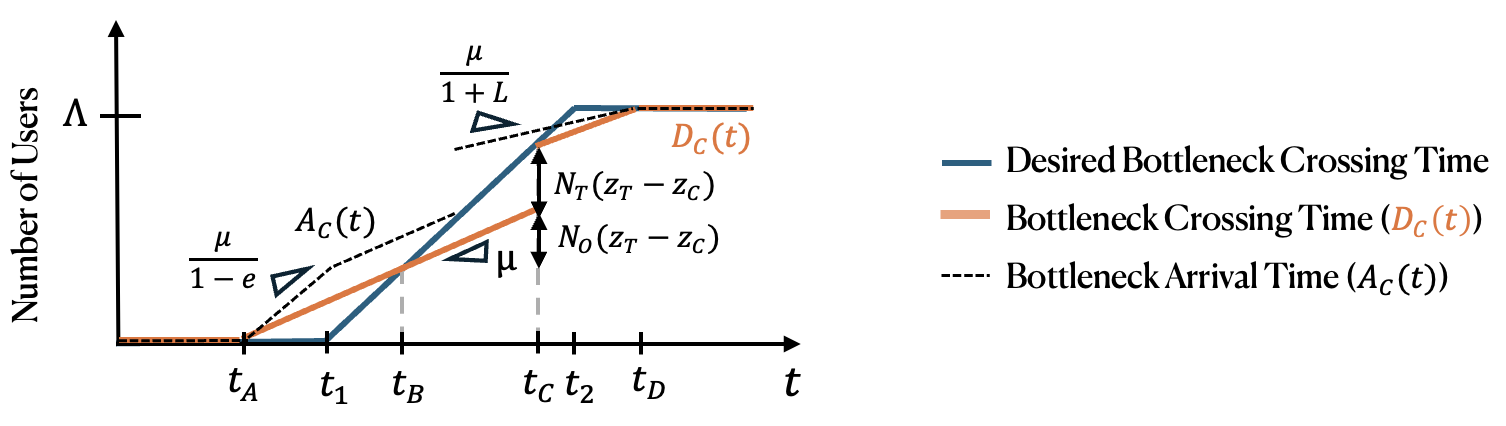}
      \vspace{-13pt}
      \caption{\small \sf Bottleneck arrival and departure (crossing) time profiles for car users in the bottleneck model with a public transit outside option in a mixed-mode equilibrium where both car and transit are used in the setting when $\mu < \lambda$ and $z_T \geq z_C$. Here, $e$ and $L$ denote the normalized schedule delay penalties for early and late arrivals, respectively, and $[t_1,t_2]$ represents users’ desired bottleneck crossing times. $A_C(t)$ represents the arrival time distribution of car users at the bottleneck and $D_C(t)$ represents the departure or crossing time of car users from the bottleneck. Moreover, $N_T(z_T - z_C)$ represents the number of transit users and $N_O(z_T - z_C)$ represents the number of car users that cross the bottleneck at their desired time. The curve in blue represents the desired bottleneck cross time of the users.
      %The slopes $e$ and $L$ denote the normalized schedule-delay penalties for early and late arrivals, respectively, and $[t_1,t_2]$ represents users’ desired bottleneck crossing times. %All users passing the bottleneck at $t< \Tilde{t}$ pass the bottleneck earlier than their desired time while those passing the bottleneck at $t > \Tilde{t}$ arrive later than their desired depart
      }
      \label{fig:arrival_profile_mfd} 
   \end{figure}  

\subsection{Arrival and Departure Time Profiles under Dynamic Revenue Optimal Toll} \label{apdx:arrival-departure-opt}

Figure~\ref{fig:arrival_profile_dynamic_opt} depicts the equilibrium bottleneck arrival and departure (crossing) time profiles in the bottleneck model with a public transit outside option under the dynamic revenue-optimal tolling scheme depicted in Figure~\ref{fig:rev-opt-dynamic-tolls} in the setting when $\mu < \lambda$ and $z_T \geq z_C$. Since there are no waiting delays under the dynamic revenue optimal tolling policy, the departure and arrival time profiles coincide, i.e., $w(t) = D_C(t) - A_C(t) = 0$ for all $t$.

\begin{figure}[tbh!]
      \centering
      \includegraphics[width=125mm]{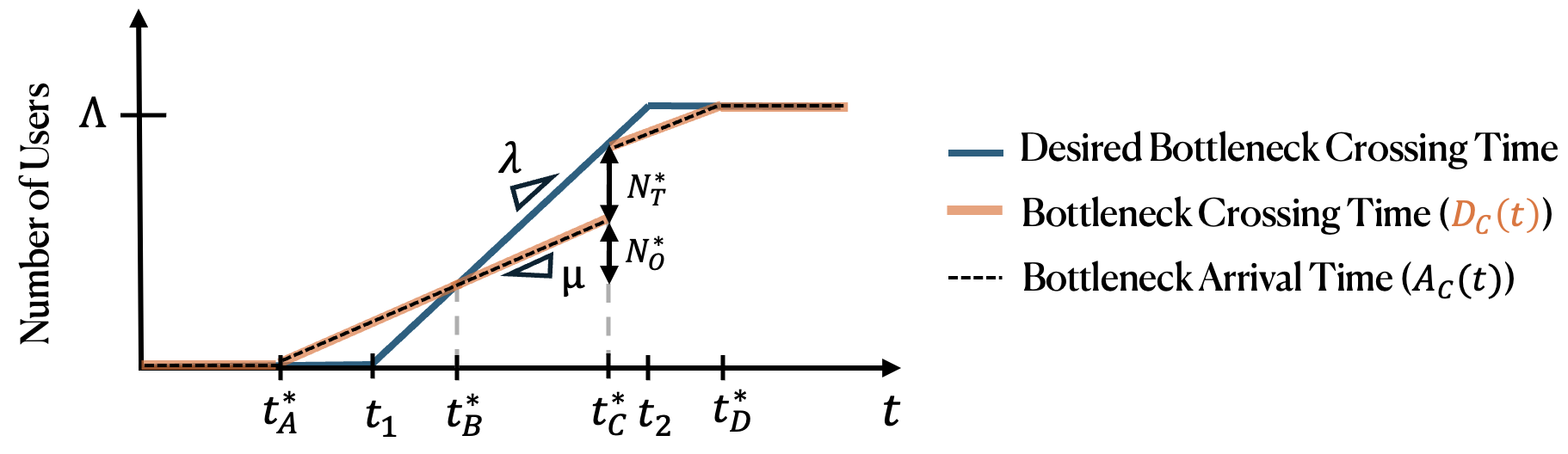}
      \vspace{-13pt}
      \caption{\small \sf Bottleneck arrival and departure (crossing) time profiles for car users under the dynamic revenue-optimal toll in the bottleneck model with a public transit outside option in the setting when $\mu < \lambda$ and $z_T \geq z_C$. Here, $[t_1,t_2]$ represents users’ desired bottleneck crossing times and the curve in blue represents the desired bottleneck cross time of the users. Since waiting or queuing delays are eliminated under the dynamic revenue-optimal toll, both arrival and departure time profiles for car users coincide, i.e., $D_C(t) = A_C(t)$. Moreover, $N_T^*$ represents the number of transit users and $N_O^*$ represents the number of car users that cross the bottleneck at their desired time.
      }
      \label{fig:arrival_profile_dynamic_opt} 
   \end{figure}  

Figure~\ref{fig:arrival_profile_dynamic_opt-mfd} depicts the equilibrium arrival and departure time profiles in an urban system governed by an MFD satisfying Properties 1 and 2 (see Section~\ref{sec:mfd}) under the dynamic revenue optimal tolling scheme in the setting when $\mu_f < \lambda$ and $z_T \geq z_C$. Since there are no waiting delays under the dynamic revenue optimal tolling policy, the departure and arrival time profiles are offset by a constant free-flow travel time of $t_f = \frac{D}{v_f}$, which corresponds to the time taken to travel through the urban system if all users traverse a fixed distance $D$. 

Note that the bottleneck model does not inherently include a notion of distance—hence the arrival and departure curves coincide in that setting. However, distance can be incorporated by modeling the bottleneck as comprising an uncongested physical segment of length $D$ followed by a point queue (see Figure~\ref{fig:conceptual-bottleneck-mfd}). In this case, the arrival and departure time profiles in the bottleneck framework will look akin to that depicted in Figure~\ref{fig:arrival_profile_dynamic_opt-mfd}.

%Note that there is no such notion of distance in the bottleneck model (hence the arrival and departure curves coincide in that setting), though it can be introduced by interpreting the bottleneck as consisting of an uncongestible physical section with a distance $D$ and a point queue (see Figure~\ref{fig:conceptual-bottleneck-mfd}).

%coincide, i.e., $w(t) = D_C(t) - A_C(t) = 0$ for all $t$.

%bottleneck model with a public transit outside option under the dynamic revenue-optimal tolling scheme depicted in Figure~\ref{fig:rev-opt-dynamic-tolls} in the setting when $\mu < \lambda$ and $z_T \geq z_C$. Since there are no waiting delays under the dynamic revenue optimal tolling policy, the departure and arrival time profiles coincide, i.e., $w(t) = D_C(t) - A_C(t) = 0$ for all $t$.

\begin{figure}[tbh!]
      \centering
      \includegraphics[width=125mm]{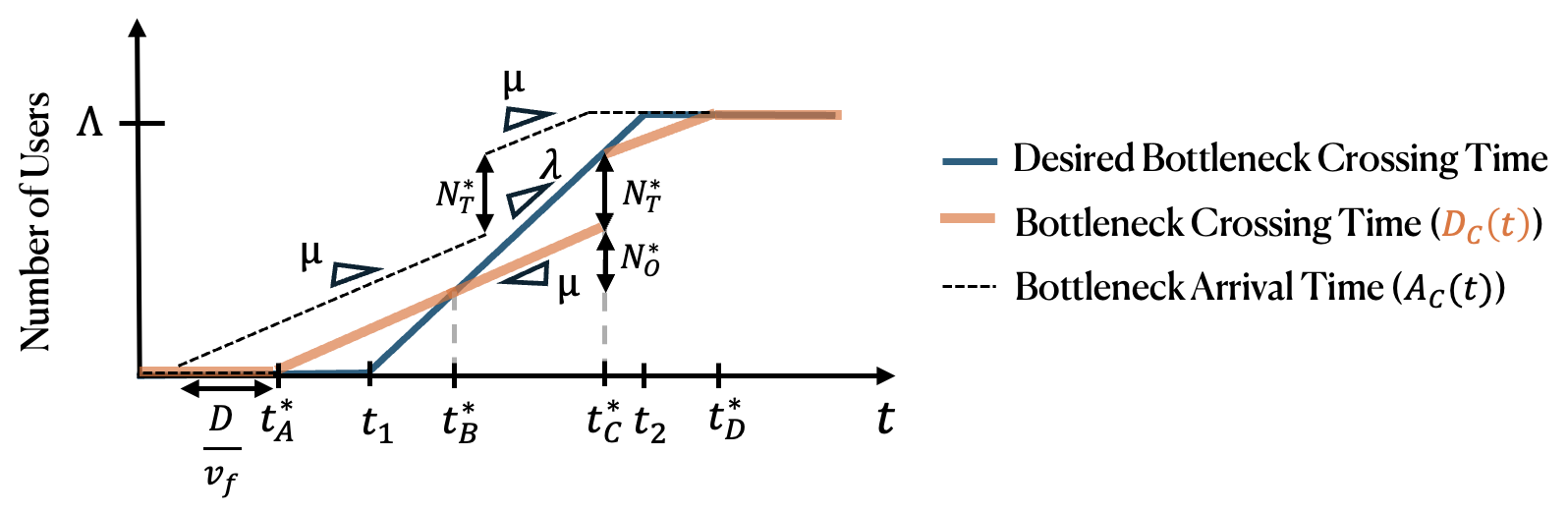}
      \vspace{-13pt}
      \caption{\small \sf Arrival and departure time profiles for car users in an urban system under the dynamic revenue-optimal toll in the MFD model in the setting when $\mu < \lambda$ and $z_T \geq z_C$. Here, the curve in blue represents users’ desired departure time profiles from the urban system (or arrival times at the destination). Since waiting or queuing delays are eliminated under the dynamic revenue-optimal toll, both arrival and departure time profiles for car users are offset by a constant free-flow travel time of $t_f = \frac{D}{v_f}$, which corresponds to the time taken to travel through the urban system if all users traverse a fixed distance $D$. Moreover, $N_T^*$ represents the number of transit users and $N_O^*$ represents the number of car users that cross the urban system at their desired time.
      }
      \label{fig:arrival_profile_dynamic_opt-mfd} 
   \end{figure}

\section{Formal Definitions of Revenue and Total System Cost Metrics} \label{apdx:perf-metrics}

We evaluate tolling policies along two performance metrics of direct relevance to a central planner: (i) cumulative toll revenue and (ii) total system cost, defined below.

\emph{Revenue:} The cumulative toll revenue is defined as the total toll payments collected from all users at the equilibrium induced by a tolling policy $\tau(\cdot)$. Under such a policy, let $t^{\tau(\cdot)}(t^*)$ denote the time at which a user with a desired bottleneck departure time $t^*$ passes the bottleneck under $\tau(\cdot)$ at equilibrium, and let $x_C^{\tau(\cdot)}(t^*)\in[0,1]$ denote the equilibrium fraction of users of type $t^*$ who travel by car (with the remaining fraction traveling by transit). Then, the cumulative toll revenue under $\tau(\cdot)$, normalized by $c_W$, is given by:
\begin{align*}
R(\tau(\cdot)) := \int_{t_1}^{t_2} \lambda \cdot x_C^{\tau(\cdot)}(t^*) \cdot \tau(t^{\tau(\cdot)}(t^*)) dt^*.
\end{align*}
While the above expression for the revenue is in terms of users’ desired bottleneck departure times $t^*$, it is often more convenient to express the above integral in the time at which users actually pass the bottleneck. Accordingly, let $[t_A^{\tau(\cdot)}, t_D^{\tau(\cdot)}]$ denote the equilibrium interval over which users pass the bottleneck under the tolling policy $\tau(\cdot)$, analogous to the no-toll equilibrium interval $[t_A, t_D]$ shown in the right of Figure~\ref{fig:equilibrium_both_models}. Then, in the case $z_T \ge z_C$, the above revenue expression simplifies to:
\[R(\tau(\cdot)) = \int_{t_A^{\tau(\cdot)}}^{t_D^{\tau(\cdot)}} \min \{ \mu, \lambda\} \tau(t) dt\]
as the rate at which users can be processed at the bottleneck is the minimum of the user arrival rate and the bottleneck service rate. Note that when $z_T < z_C$, all users strictly prefer transit, so $x_C^{\tau(\cdot)}(t^*) = 0$ for all $t^*$ and no toll revenue is generated. In the remainder of this work, for brevity of notation, we drop the superscript $\tau(\cdot)$ in the notation for the equilibrium interval $[t_A^{\tau(\cdot)}, t_D^{\tau(\cdot)}]$ when it is clear from context. Furthermore, in studying equilibrium outcomes, when users are indifferent between car and transit, without loss of generality, we break ties in favor of higher revenue outcomes.

\emph{Total System Cost:} We define the total system cost as the sum (across all users) of waiting costs, schedule delay costs, and generalized travel costs associated with car or transit use at the equilibrium induced by a tolling policy $\tau(\cdot)$. Equivalently, the total system cost is the sum of users’ total travel costs at equilibrium \emph{minus} their toll payments, and is given by:
\begin{align*}
SC(\tau(\cdot))
=
\int_{t_1}^{t_2} \lambda  \left[
x_C^{\tau(\cdot)}(t^*)\Big(c\big(t^{\tau(\cdot)}(t^*),t^*\big)-\tau\big(t^{\tau(\cdot)}(t^*)\big)\Big)
+
\big(1-x_C^{\tau(\cdot)}(t^*)\big) z_T
\right] dt^*,
\end{align*}
where $c(t,t^\star)$ denotes the travel cost (in time units) for a car user from Equation~\eqref{eq:userCostNormalized}. The above expression for the total system cost subtracts toll payments from the user costs, since tolls represent transfers between users and the central planner, consistent with standard practice in congestion pricing~\citep{vickrey1969congestion,GONZALES20121519} and economics~\citep{a2fd2512-18e4-3e86-a4f2-2f4443994eb2} that excludes tolls (or prices) when defining system cost or welfare metrics.

\section{Conceptual Relationship Between Bottleneck and MFD Models} \label{apdx:conceptual-bottleneck-mfd}

This section describes a high-level conceptual connection between the bottleneck and MFD models, illustrating how Vickrey’s bottleneck model can be interpreted as a special case of the MFD model. Figure~\ref{fig:conceptual-bottleneck-mfd} depicts a bottleneck system as an uncongestible physical section followed by a downstream point queue, and shows the resulting MFD-like flow–accumulation relationship. Akin to the triangular MFD in Figure~\ref{fig:triangular-mfd}, throughput in the bottleneck setting increases linearly with vehicle accumulation up to the critical threshold $n_c$. Beyond $n_c$, the throughput remains at the bottleneck capacity $\mu$ rather than declining in the congested regime as in the case of the triangular MFD.

%Unlike the triangular MFD in Figure~\ref{fig:triangular-mfd}, the throughput plateaus at the bottleneck capacity $\mu$. Note that unlike the triangular MFD in Figure~\ref{fig:triangular-mfd}, the relationship between system throughput and the number of vehicles in the system is linear up to a threshold $n_c$ and then plataeus at the maximum flow or capacity of the bottleneck rather than declines as in the case of the triangular MFD.

\vspace{-10pt}

\begin{figure}[tbh!]
      \centering
      \includegraphics[width=150mm]{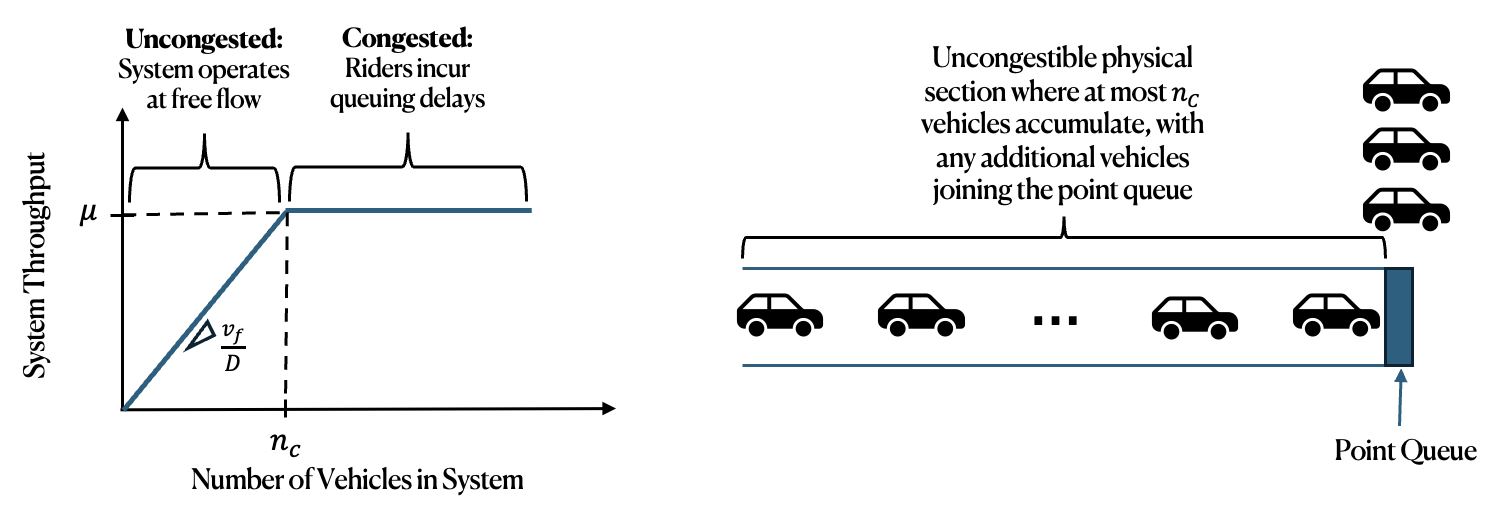}
      \vspace{-13pt}
      \caption{\small \sf Bottleneck system with a point queue and its implied MFD-like relation. The bottleneck system can be represented via an uncongestible physical section (e.g., a bridge of length $D$) followed by a downstream point queue that occupies no physical space and captures congestion and queuing delays in the bottleneck. The physical section can hold at most $n_c$ vehicles and any additional vehicles join the point queue. In this system, when the total vehicle accumulation (i.e., the sum of the vehicles in the physical section and the point queue) satisfies $n \leq n_c$, the system operates in free flow with a speed $v_f$. Beyond this accumulation threshold $n_c$, the system throughput remains at the bottleneck capacity $\mu$ and vehicles incur queuing delays. Note that the slope of the segment connecting $(0,0)$ to any point on the curve determines the average system speed, equal to trip distance $D$ times the slope, which declines when the vehicle accumulation exceeds $n_c$.
      }
      \label{fig:conceptual-bottleneck-mfd} 
   \end{figure}  

\section{Proofs}

\subsection{Proof of Theorem~\ref{thm:rev-comp-static-v-dynamic}} \label{apdx:rev-comp-stativ-v-dynamic}

We establish this result by analyzing three regimes for $z_T - z_C$, each corresponding to a distinct revenue expression under the static revenue-optimal toll characterized in Proposition~\ref{prop:rev-opt-static-tolls}:
%, given by: %which is obtained by substituting the optimal toll from Proposition~\ref{prop:rev-opt-static-tolls} into the revenue expression in Equation~\eqref{eq:rev-opt-static-bottleneck}:
%in Equation~\eqref{eq:revenue-static-opt-toll-expression}.
%First, substituting the static revenue-optimal tolls from Proposition~\ref{prop:rev-opt-static-tolls} into the revenue expression in Equation~\eqref{eq:rev-opt-static-bottleneck}, we obtain: %First, the revenue under static revenue-optimal tolls derived in Proposition~\ref{prop:rev-opt-static-tolls} is given by:
\begin{align*} %\label{eq:revenue-static-opt-toll-expression}
    R(\tau_s^*) = 
    \begin{cases}
    (z_T - z_C) \Lambda \frac{\mu}{\lambda}, & \text{if } 0 \leq z_T - z_C < \frac{\Lambda eL}{(\lambda - \mu)(e+L)} \\ 
    \frac{\mu\, eL}{4\,(e+L)\left(1-\frac{\mu}{\lambda}\right)} \left[ \frac{\Lambda}{\lambda} \! + \! \frac{e+L}{eL}\left(1 \! - \! \frac{\mu}{\lambda}\right) (z_T - z_C) \right]^2, & \text{if } z_T - z_C \! \in \! \left[ \frac{\Lambda eL}{(\lambda - \mu)(e+L)}, \frac{\Lambda eL}{(e+L)} \left( \frac{1}{\lambda - \mu} \! + \! \frac{2}{\mu} \right) \right] %\tau^* = \frac{z_T - z_C}{2} + \frac{\Lambda eL}{2(\lambda - \mu)(e+L)} 
    \\ 
    (z_T - z_C) \Lambda - \frac{\Lambda^2 eL}{\mu(e+L)} , & \text{if } z_T - z_C > \frac{\Lambda eL}{(e+L)} \left( \frac{1}{\lambda - \mu} \! + \! \frac{2}{\mu} \right).
    %\tau^* = (z_T - z_C) - \frac{\Lambda eL}{\mu(e+L)}.
\end{cases}
\end{align*}

%Then, we establish our result by analyzing three regimes for $z_T - z_C$, each corresponding to a distinct revenue expression under the static revenue-optimal toll in Equation~\eqref{eq:revenue-static-opt-toll-expression}.

%Then, we establish our result by considering three cases corresponding to the different parameter regimes of $z_T - z_C$, each corresponding to a distinct revenue expression under the static revenue-optimal tolling policy in Equation~\eqref{eq:revenue-static-opt-toll-expression}.

\textbf{Case i $\left(z_T - z_C < \frac{\Lambda eL}{(\lambda - \mu)(e+L)} \right)$:} In this case, note that $z_T - z_C < \frac{\Lambda eL}{(\lambda - \mu)(e+L)} \frac{\lambda}{\mu}$ as $\lambda > \mu$; hence, from Theorem~\ref{thm:rev-opt-dynamic-tolls}, the revenue of the revenue-optimal dynamic toll is $(z_T - z_C) \Lambda \frac{\mu}{\lambda} + \frac{(z_T - z_C)^2 \mu (e+L)}{2 e L} \left( 1 - \frac{\mu}{\lambda} \right)^2$. Moreover, let $s \geq 1$ be a constant such that $s (z_T - z_C) = \frac{\Lambda eL}{(\lambda - \mu)(e+L)}$. Then, we have:
\begin{align*}
    R(\tau_d^*(\cdot)) &= (z_T - z_C) \Lambda \frac{\mu}{\lambda} + \frac{(z_T - z_C)^2 \mu (e+L)}{2 e L} \left( 1 - \frac{\mu}{\lambda} \right)^2 \stackrel{(a)}{=} R(\tau_s^*) + \frac{(z_T - z_C)^2 \mu (e+L)}{2 e L} \left( 1 - \frac{\mu}{\lambda} \right)^2, \\
    &\stackrel{(b)}{=} R(\tau_s^*) \! + \! (z_T - z_C)\frac{\Lambda eL}{s(\lambda - \mu)(e+L)} \frac{\mu (e+L)}{2eL} \left( \frac{\lambda - \mu}{\lambda} \right)^2 \! = R(\tau_s^*) \! + \! \frac{1}{2s} (z_T - z_C) \Lambda\frac{\mu}{\lambda} \left( 1 - \frac{\mu}{\lambda} \right), \\
    &= R(\tau_s^*) + R(\tau_s^*) \frac{1}{2s} \left( 1 - \frac{\mu}{\lambda} \right) = R(\tau_s^*) \left( 1 + \frac{1}{2s} \left( 1 - \frac{\mu}{\lambda} \right) \right) 
\end{align*}
where (a) follows as $R(\tau_s^*) = (z_T - z_C) \Lambda \frac{\mu}{\lambda}$ and (b) follows as $s (z_T - z_C) = \frac{\Lambda eL}{(\lambda - \mu)(e+L)}$. Thus, the optimal static toll achieves at least $\frac{2s}{2s + 1 - \frac{\mu}{\lambda}} \geq \frac{2}{3 - \frac{\mu}{\lambda}}$ fraction of the optimal dynamic revenue.

\textbf{Case ii $\left(z_T - z_C \in \left[ \frac{\Lambda eL}{(\lambda - \mu)(e+L)}, \frac{\Lambda eL}{(e+L)} \left( \frac{1}{\lambda - \mu} + \frac{2}{\mu} \right) \right] \right)$:} In this case, letting $s \leq 1$ be a constant such that $s(z_T - z_C) = \frac{\Lambda eL}{(\lambda - \mu)(e+L)}$, we have
\begin{align*}
    R(\tau_s^*) &= \frac{\mu eL}{4 (e+L)\left(1-\frac{\mu}{\lambda}\right)} \left[ \frac{\Lambda}{\lambda} + \frac{e+L}{eL}\left(1-\frac{\mu}{\lambda}\right) (z_T - z_C) \right]^2, \\
    %&= \frac{\mu eL}{4(e+L)(\lambda - \mu)} \frac{\Lambda^2}{\lambda} + \frac{\Lambda}{2} \frac{\mu}{\lambda} (z_T - z_C) + \frac{(z_T - z_C)^2 \mu (e+L)}{4 eL} \left( 1 - \frac{\mu}{\lambda} \right), \\
    &\stackrel{(a)}{=} \frac{\Lambda s}{4} \frac{\mu}{\lambda} (z_T - z_C) + \frac{\Lambda}{2} \frac{\mu}{\lambda} (z_T - z_C) + \frac{(z_T - z_C)^2 \mu (e+L)}{4 eL} \left( 1 - \frac{\mu}{\lambda} \right), \\
    &\geq \min \left\{ \frac{1}{2} + \frac{s}{4}, \frac{1}{2(1 - \frac{\mu}{\lambda})} \right\} \left[ \Lambda \frac{\mu}{\lambda} (z_T - z_C) + \frac{(z_T - z_C)^2 \mu (e+L)}{2 eL} \left( 1 - \frac{\mu}{\lambda} \right)^2 \right] \\
    &\stackrel{(b)}{\geq} \min \left\{ \frac{2+s}{4}, \frac{1}{2(1 - \frac{\mu}{\lambda})} \right\} R(\tau_d^* (\cdot)),
\end{align*}
where (a) follows as $s(z_T - z_C) = \frac{\Lambda eL}{(\lambda - \mu)(e+L)}$ and (b) follows as the term in the square brackets is the unconstrained optimal objective of the dynamic revenue optimization Problem~\eqref{eq:constrained-dynamic-rev-opt}. Thus, optimal static tolls achieve at least $\min \left\{ \frac{2+s}{4}, \frac{1}{2(1 - \frac{\mu}{\lambda})} \right\}$ fraction of the optimal dynamic revenue. %, where $s \leq 1$.

%$z_T - z_C \geq \frac{\Lambda eL}{(\lambda - \mu)(e+L)}$. In this setting, we analyze two sub-cases depending on whether the static optimal toll is (a) $\frac{z_T - z_C}{2} + \frac{\Lambda eL}{2(\lambda - \mu)(e+L)}$ or (b) $z_T - z_C - \frac{\Lambda e L}{\mu(e+L)}$.

%\textbf{Case (iia):} We let $s \leq 1$ be a constant such that $s(z_T - z_C) = \frac{\Lambda eL}{(\lambda - \mu)(e+L)}$. In this setting, it holds that the optimal static revenue is given by:

\textbf{Case iii $\left( z_T - z_C > \frac{\Lambda eL}{(e+L)} \left( \frac{1}{\lambda - \mu} + \frac{2}{\mu} \right) \right)$:} In this case, %the static revenue-optimal toll 
$\tau_s^* = z_T - z_C - \frac{\Lambda e L}{\mu(e+L)}$, which implies:
\begin{align*}
    &z_T - z_C - \frac{\Lambda e L}{\mu(e+L)} > \frac{z_T - z_C}{2} + \frac{\Lambda eL}{2(\lambda - \mu)(e+L)}, \\
    &\implies \! z_T - z_C \! > \! \frac{\Lambda eL}{(\lambda - \mu)(e+L)} \! + \! 2 \frac{\Lambda eL}{\mu(e+L)} \! = \! \frac{\Lambda eL}{e+L} \left( \frac{1}{\lambda - \mu} \! + \! \frac{2}{\mu} \right) \! > \! \frac{\Lambda eL}{e \!+ \! L} \left( \frac{1}{\lambda - \mu} \! + \! \frac{1}{\mu} \right) \! = \! \frac{\Lambda eL}{(\lambda \! - \! \mu)(e\!+ \! L)} \frac{\lambda}{\mu}
\end{align*}
When $z_T - z_C > \frac{\Lambda eL}{(\lambda - \mu)(e+L)} \frac{\lambda}{\mu}$, %by Theorem~\ref{thm:rev-opt-dynamic-tolls}, 
the dynamic optimal revenue is given by $(z_T - z_C) \Lambda - \frac{\Lambda^2}{2 \mu} \frac{eL}{e+L}$. Then:
\begin{align*}
    \frac{R(\tau_s^*)}{R_d^*} \! &= \! \frac{(z_T - z_C) \Lambda - \frac{\Lambda^2}{\mu} \frac{eL}{e+L}}{(z_T - z_C) \Lambda - \frac{\Lambda^2}{2 \mu} \frac{eL}{e+L}} \! = \! 1 \! - \! \frac{\frac{\Lambda^2}{2 \mu} \frac{eL}{e+L}}{(z_T - z_C) \Lambda - \frac{\Lambda^2}{2 \mu} \frac{eL}{e+L}} \! = \! 1 \! - \! \frac{1}{\frac{(z_T - z_C) \Lambda}{\frac{\Lambda^2}{2 \mu} \frac{eL}{e+L}} - 1} \! = \! 1 \! - \! \frac{1}{\frac{(z_T - z_C)(e+L)}{\Lambda eL}(2 \mu) \! - \! 1} \\
    &\stackrel{(a)}{\geq} 1 - \frac{1}{ \left( \frac{1}{\lambda - \mu} + \frac{2}{\mu} \right) (2 \mu) - 1} = 1 - \frac{1}{\frac{2(2 \lambda - \mu)}{\lambda - \mu} - 1} = 1 - \frac{\lambda - \mu}{3 \lambda - \mu} \geq 1 - \frac{\lambda - \mu}{3(\lambda - \mu)} = \frac{2}{3}
\end{align*}
where (a) follows as $z_T - z_C > \frac{\Lambda eL}{(\lambda - \mu)(e+L)} \frac{\lambda}{\mu}$. This establishes our claim.

\subsection{Proof of Theorem~\ref{thm:sc-comp-static-v-dynamic}} \label{apdx:pf-sc-comp}

To prove this claim, we first derive expressions for the total system cost under the static and dynamic revenue-optimal tolling policies as well as the minimum achievable system cost $SC^*$.

\emph{Total System Cost of Dynamic Revenue-Optimal Tolls:} Under the dynamic revenue-optimal policy, recall from the proof of Theorem~\ref{thm:rev-opt-dynamic-tolls} that there are no queuing delays. Letting, $N_e$ be the number of users that pass the bottleneck early (i.e., between  $[t_A^*, t_B^*]$ in Figure~\ref{fig:rev-opt-dynamic-tolls}) and $N_L$ be the number of users that pass the bottleneck late (i.e., between  $[t_C^*, t_D^*]$ in Figure~\ref{fig:rev-opt-dynamic-tolls}), the total system cost under the dynamic revenue-optimal policy for the fraction $f^*$ defined in Theorem~\ref{thm:rev-opt-dynamic-tolls} is given by:
\begin{align*}
    SC(\tau^*_{d}(\cdot)) = z_T f^* \Lambda \left( 1 - \frac{\mu}{\lambda} \right) + z_C \left( f^* \Lambda \frac{\mu}{\lambda} + (1 - f^*) \Lambda \right) + e N_e \frac{t_1 - t_A^*}{2} + L N_L \frac{t_D^* - t_2}{2},
\end{align*}
which corresponds to the sum of (i) the cost of using transit, where $f^* \Lambda \left( 1 - \frac{\mu}{\lambda} \right)$ users avail transit, (ii) the free-flow cost of using a car, where $f^* \Lambda \frac{\mu}{\lambda} + (1 - f^*) \Lambda$ users avail a car, and (iii) schedule delay costs. Then, noting that $N_e = \frac{L}{e+L} (1 - f^*)\Lambda$, $N_L = \frac{e}{e+L} (1 - f^*)\Lambda$, and substituting the corresponding relations for $t_1 - t_A^*$ and $t_D^* - t_2$ based on the analysis in the proof of Theorem~\ref{thm:rev-opt-dynamic-tolls}, we obtain the following relation for the total system cost of the revenue-optimal dynamic tolling policy:
\begin{align*}
    SC(\tau^*_{d}(\cdot)) &= z_T f^* \Lambda \left( 1 - \frac{\mu}{\lambda} \right) + z_C \left( f^* \Lambda \frac{\mu}{\lambda} + (1 - f^*) \Lambda \right) + \frac{\Lambda^2 eL}{2 \mu (e+L)} (1 - f^*)^2 \left( 1 - \frac{\mu}{\lambda} \right). %\\
    %&= ,
\end{align*}
In the regime $z_T - z_C \leq \frac{\Lambda eL}{\mu (e+L)}$, we have $z_T - z_C \leq \frac{\Lambda eL}{\mu (e+L)} \frac{\lambda}{\lambda - \mu}$, and therefore Theorem~\ref{thm:rev-opt-dynamic-tolls} implies that $f^* = 1 - \frac{(z_T - z_C) \mu (e+L)}{\Lambda e L} \left( 1 - \frac{\mu}{\lambda} \right)$, yielding the following expression for the total system cost:
\begin{align} \label{eq:dynamic-rev-opt-sc-expression}
    SC(\tau^*_{d}(\cdot)) = z_C \Lambda \frac{\mu}{\lambda} + z_T \left( 1 - \frac{\mu}{\lambda} \right) \Lambda - \frac{(z_T - z_C)^2 \mu (e+L)}{2 eL} \left( 1 - \frac{\mu}{\lambda} \right)^3.
\end{align}

\emph{Total System Cost of Static Revenue-Optimal Tolls:} For a static toll $\tau$, we show that the total system cost as a function of the toll $\tau$ is given by (see Appendix~\ref{apdx:sco-static-toll} for a derivation): \begin{align} \label{eq:sc-expression-toll-static}
    SC(\tau) &= z_T \left[ 1 \! - \! \frac{z_T - z_C - \tau}{\frac{\Lambda e L}{\mu(e+L)}} \right] \Lambda \left( 1 \! - \! \frac{\mu}{\lambda}\right) + z_C \left[ \mu (z_T - z_C - \tau) \left( \frac{1}{e} \! + \! \frac{1}{L} \right) \! + \! \left( 1 \! - \! \frac{z_T - z_C - \tau}{\frac{\Lambda e L}{\mu(e+L)}} \right) \frac{\Lambda \mu}{\lambda} \right] \\
    &\! \! \! \! \! \! \! \! \! + \frac{\mu (z_T  \!- \! z_C \! - \tau)^2}{2} \left(1 \! - \! \frac{\mu}{\lambda} \right) \left( \frac{1}{e} \! + \! \frac{1}{L} \right) \! + \! (z_T - z_C - \tau)\left( \left( 1 \! - \! \frac{z_T \! - \! z_C \! - \! \tau}{\frac{\Lambda e L}{\mu(e+L)}} \right) \Lambda \frac{\mu}{\lambda} \! + \! \frac{\mu (z_T \! - \! z_C \! - \! \tau)}{2} \left( \frac{1}{e} \! + \! \frac{1}{L} \right)  \right), \nonumber
\end{align}
which corresponds to the sum of the (i) cost of using transit, (ii) the free-flow cost of using a car, (iii) schedule delay costs, and (iv) waiting time delays. From this expression, in the regime $z_T - z_C \leq \frac{\Lambda eL}{\mu (e+L)}$, we derive the total system cost under the two feasible static revenue-optimal tolls in Equation~\eqref{eq:optimal-static-toll}: (i) $\tau_s^* = z_T - z_C$ and (ii) $\tau_s^* = \frac{z_T - z_C}{2} + \frac{\Lambda eL}{2(\lambda - \mu)(e+L)}$. Specifically, we obtain:
\small
\begin{align*} %\label{eq:sc-static-rev-opt-toll-expression}
    SC(\tau^*_{s}) \! = \! 
    \begin{cases}
        z_C \Lambda \frac{\mu}{\lambda} + z_T \left( 1 - \frac{\mu}{\lambda} \right) \Lambda, & \text{if }  \tau_s^* \! = \! z_T - z_C \\
        z_C \Lambda \frac{\mu}{\lambda} \! + \! z_T \left( 1 \! - \! \frac{\mu}{\lambda} \right) \! \Lambda \! - \! \frac{\mu (2\lambda \! - \! \mu)}{8 \lambda eL (e+L) (\lambda - \mu)^2} \left( (z_T \! - \! z_C)(e \! + \! L)(\lambda \! - \! \mu) \! - \! \Lambda eL  \right)^2\!, \! \! \! \!& \text{if } \tau_s^* \! = \! \frac{z_T - z_C}{2} \! + \! \frac{\Lambda eL}{2(\lambda  - \mu) \! (e+L)} \\
    \end{cases}
\end{align*} 
\normalsize
We note that the third candidate revenue-optimal toll, $\tau_s^* = (z_T - z_C) - \frac{\Lambda eL}{\mu(e+L)}$ (see Proposition~\ref{prop:rev-opt-static-tolls}), is non-positive in the parameter regime of interest and thus is not relevant for our analysis here.

\emph{Optimal System Cost:} From~\citet{GONZALES20121519}, the minimum achievable system cost is $SC^* = z_C \Lambda + \left( 1 - \frac{\mu}{\lambda} \right) \Lambda (z_T - z_C) - \left( 1 - \frac{\mu}{\lambda} \right) \frac{\mu (e+L)}{2eL} (z_T - z_C)^2$ in the regime when $0 < z_T - z_C \leq \frac{\Lambda eL}{\mu(e+L)}$.

\emph{System Cost Comparisons:} From the above derived relations for the total system cost under static and dynamic revenue-optimal tolls, it follows that $SC(\tau_s^*), SC(\tau_d^*(\cdot)) \leq z_C \Lambda \frac{\mu}{\lambda} + z_T \left( 1 - \frac{\mu}{\lambda} \right) \Lambda$. Then, we obtain the following relation for the ratio of the system costs under these two policies (which we present here for the static revenue-optimal policy):
\begin{align*}
    \frac{SC(\tau^*_s)}{SC^*} &\leq \frac{z_C \Lambda \frac{\mu}{\lambda} + z_T \left( 1 - \frac{\mu}{\lambda} \right) \Lambda}{z_C \Lambda + \left( 1 - \frac{\mu}{\lambda} \right) \Lambda (z_T - z_C) - \left( 1 - \frac{\mu}{\lambda} \right) \frac{\mu (e+L)}{2eL} (z_T - z_C)^2} , \\
    &= 1 + \frac{\left( 1 - \frac{\mu}{\lambda} \right) \frac{\mu (e+L)}{2eL} (z_T - z_C)^2}{z_T \Lambda - (z_T - z_C) \frac{\mu}{\lambda} \Lambda - (z_T - z_C) \frac{\Lambda}{2} \left( 1 - \frac{\mu}{\lambda} \right)}, \\
    &\stackrel{(b)}{\leq} 1 + \frac{(z_T - z_C) \frac{\Lambda}{2} \left( 1 - \frac{\mu}{\lambda} \right)}{z_T \Lambda - (z_T - z_C) \frac{\mu}{\lambda} \Lambda - (z_T - z_C) \frac{\Lambda}{2} \left( 1 - \frac{\mu}{\lambda} \right)}, \\
    &\stackrel{(c)}{\leq} 1 + \frac{\frac{(z_T - z_C)}{2} \left( 1 - \frac{\mu}{\lambda} \right)}{\frac{(z_T - z_C)}{2} \left( 1 - \frac{\mu}{\lambda} \right)}, \\
    &= 2,
\end{align*}
where (b) follows as $z_T - z_C < \frac{\Lambda eL}{\mu (e+L)}$, and (c) follows by subtracting $z_C \Lambda$ in the denominator and simplifying. These inequalities establish our desired system cost ratio for static revenue-optimal tolling. Note that the same set of inequalities applies for the dynamic revenue-optimal tolling policy, as its total system cost also satisfies $SC(\tau_s^*), SC(\tau_d^*(\cdot)) \leq z_C \Lambda \frac{\mu}{\lambda} + z_T \left( 1 - \frac{\mu}{\lambda} \right) \Lambda$. This establishes our claim.

\subsection{Proof of Theorem~\ref{thm:rev-opt-dynamic-tolls-mfd}} \label{apdx:pf-mfd-dynamic-opt}

First, as with the bottleneck model, a necessary equilibrium condition is that $w'(t) + \tau'(t) = e$ for early arrivals and $w'(t) + \tau'(t) = -L$ for late arrivals, and that $w(t) + \tau(t) \leq z_T - z_C$, i.e., the sum of the waiting and toll costs are as depicted in Figure~\ref{fig:eq-waiting-tolls}. Note here that the waiting time $w(t)$ can be expressed as a function of the number of users $n(t)$ in the system at time $t$, i.e., $w(t) = w(n(t))$, and the exact times $t_A', t_B', t_C', t_D'$ shown in Figure~\ref{fig:eq-waiting-tolls} are determined endogenously by the tolling policy. We now show that even for urban systems governed by an MFD, the dynamic revenue-optimal tolls correspond to operating on the throughput-maximizing point of the MFD where the waiting times are set point-wise to zero. %, i.e., the system always operates at free flow.

To establish this claim, we consider two tolling policies corresponding to possibly different values of the endogenously determined time points $t_A', t_B', t_C', t_D'$ in Figure~\ref{fig:eq-waiting-tolls}. First, consider the dynamic revenue-optimal policy $\tau^*(\cdot)$ with associated times $t_A^*, t_B^*, t_C^*, t_D^*$ in Figure~\ref{fig:eq-waiting-tolls}. %, which may differ from those under the static optimal toll.
Moreover, let $f$ be the fraction of time corresponding to the horizontal portion of the curve in Figure~\ref{fig:eq-waiting-tolls} under the optimal dynamic tolling policy. Additionally, let $f_1$ be the fraction of users that seek to arrive in the interval $[t_A^*, t_1]$ and $f_2$ be the fraction of users that seek to arrive in the interval $[t_C^*, t_2]$, where $f_1 + f_2 = 1 - f$.

We define the second tolling policy as $\Tilde{\tau}(\cdot)$, with no waiting time delays (i.e., $w(t) = 0$ for all $t$ and the system operates at maximum throughput $\mu_f$) and $\Tilde{\tau}(t) = z_T - z_C$ for the period $[t_B^*, t_C^*]$. In this case, we define the associated times in Figure~\ref{fig:eq-waiting-tolls} as $\Tilde{t}_A, \Tilde{t}_B, \Tilde{t}_C, \Tilde{t}_D$, where $\Tilde{t}_B = t_B^*$ and $\Tilde{t}_C = t_C^*$.

Then, letting the system throughput $\mu(t) = \mu(n(t))$ be time-varying and depend on the vehicle accumulation in the system, the optimal revenue is given by:
\begin{align*}
    R^* &= \int_{t_A^{*}}^{t_D^*} \mu(n(t)) \tau^*(t) dt = \int_{t_A^{*}}^{t_B^*} \mu(n(t)) \tau^*(t) dt + \int_{t_B^{*}}^{t_C^*} \mu(n(t)) \tau^*(t) dt + \int_{t_C^{*}}^{t_D^*} \mu(n(t)) \tau^*(t) dt,
\end{align*}
and the revenue corresponding to the second policy is given by:
\begin{align*}
    \Tilde{R} &= \int_{\Tilde{t}_A}^{\Tilde{t}_D} \mu_f \Tilde{\tau}(t) dt = \int_{\Tilde{t}_A}^{t_B^*} \mu_f \Tilde{\tau}(t) dt + \int_{t_B^{*}}^{t_C^*} \mu_f \Tilde{\tau}(t) dt + \int_{t_C^{*}}^{\Tilde{t}_D} \mu_f \Tilde{\tau}(t) dt.
\end{align*}

To establish our claim, we show $\Tilde{R} \geq R^*$ by proving three inequalities: (i) $\int_{t_B^{*}}^{t_C^*} \mu(n(t)) \tau^*(t) dt \leq \int_{t_B^{*}}^{t_C^*} \mu_f \Tilde{\tau}(t) dt$, (ii) $\int_{t_A^{*}}^{t_B^*} \mu(n(t)) \tau^*(t) dt \leq \int_{\Tilde{t}_A}^{t_B^*} \mu_f \Tilde{\tau}(t) dt$, and (iii) $\int_{t_C^{*}}^{t_D^*} \mu(n(t)) \tau^*(t) dt \leq \int_{t_C^{*}}^{\Tilde{t}_D} \mu_f \Tilde{\tau}(t) dt$.

\paragraph{\textbf{Proof of (i):}} Note that $\mu(n(t)) \tau^*(t) \leq \mu_f (z_T - z_C)$ for all $t$, as $\mu(n(t)) \leq \mu_f$ and $w(n(t)) + \tau^*(t) \leq z_T - z_C$ for all $t$. Then, we obtain:
\begin{align*}
    \int_{t_B^{*}}^{t_C^*} \! \! \mu(n(t)) \tau^*(t) dt &\leq \! \! \int_{t_B^{*}}^{t_C^*} \! \! \mu_f (z_T - z_C) dt = \mu_f (z_T - z_C) (t_C^* - t_B^*) = \mu_f (z_T - z_C) f \frac{\Lambda}{\lambda} = \mu_f \! \! \int_{t_B^*}^{t_C^*} \! \! \Tilde{\tau}(t) dt.
\end{align*}
%i.e., the second policy achieves a higher revenue in the period $[t_B^*, t_C^*]$ compared to the first.
%\vspace{-5pt}
\paragraph{\textbf{Proof of (ii):}} First note that the revenue under the policy $\Tilde{\tau}(\cdot)$ in the range $[\Tilde{t}_A, t_B^*]$ is given by:
\begin{align}
    \int_{\Tilde{t}_A}^{t_B^*} \mu_f \Tilde{\tau}(t) dt &\stackrel{(a)}{=} \int_{0}^{t_B^* - \Tilde{t}_A} \mu_f (\Tilde{\tau}(t_B^*) - e \Delta) d \Delta = \mu_f \left[ \Tilde{\tau}(t_B^*) (t_B^* - \Tilde{t}_A) - \frac{e (t_B^* - \Tilde{t}_A)^2}{2} \right], \nonumber \\
    &= \mu_f \left[ \Tilde{\tau}(t_B^*) - \frac{e (t_B^* - \Tilde{t}_A)}{2} \right] (t_B^* - \Tilde{t}_A) \stackrel{(b)}{=} \left[ z_T - z_C - \frac{e (t_B^* - \Tilde{t}_A)}{2} \right] f_1 \Lambda, \label{eq:early-dynamic-optimal-helper} %\\
    %&= \frac{\Tilde{\tau}(\Tilde{t}_A) + \Tilde{\tau}(t_B^*)}{2} f_1 \Lambda \stackrel{(c)}{=} \Tilde{\tau}\left(\frac{\Tilde{t}_A + t_B^*}{2} \right) f_1 \Lambda, %\\
    %&= \left( z_T - z_C - \frac{e (t_B^* - \Tilde{t}_A)}{2} \right) f_1 \Lambda \stackrel{(b)}{=}  \left( z_T - z_C - \frac{e f_1 \Lambda}{2 \mu_f} \right) f_1 \Lambda,
\end{align}
where (a) follows from the structure of the tolling policy in Figure~\ref{fig:eq-waiting-tolls} and (b) follows as $f_1 \Lambda = \lambda (t_B^* - t_1) = \mu_f(t_B^* - \Tilde{t}_{A})$. Then, to establish (ii), we use a variable transformation $\Delta = t_B^* - t$ to get:
\begin{align}
    \int_{t_A^*}^{t_B^*} \mu(t) \tau^*(t) dt &= \int_{0}^{t_B^* - t_A^*} \mu(\Delta) \tau^*(\Delta) d \Delta = \int_{0}^{t_B^* - t_A^*} \mu(\Delta) (z_T - z_C - e \Delta - w(\Delta)) d \Delta, \nonumber \\
    &\stackrel{(a)}{\leq} \int_{0}^{t_B^* - t_A^*} \mu(\Delta) (z_T - z_C - e \Delta) d \Delta, \nonumber \\
    &= \int_{0}^{t_B^* - \Tilde{t}_A} \mu(\Delta) (z_T - z_C - e \Delta) d \Delta + \int_{t_B^* - \Tilde{t}_A}^{t_B^* - t_A^*} \mu(\Delta) (z_T - z_C - e \Delta) d \Delta, \label{eq:ub1-earlyrevenue}
\end{align}
where (a) follows as $w(\Delta) \geq 0$ for all $\Delta \in [t_A^*, t_B^*]$. Furthermore, for some $g \in [0, 1]$, we define:
\begin{align*}
    \int_{0}^{t_B^* - \Tilde{t}_A} \mu(\Delta) d \Delta = g f_1 \Lambda, \quad \int_{t_B^* - \Tilde{t}_A}^{t_B^* - t_A^*} \mu(\Delta) d \Delta = (1-g) f_1 \Lambda.
\end{align*}
We use these expressions to upper bound the two terms in Equation~\eqref{eq:ub1-earlyrevenue}. To this end, note that:
\begin{align} \label{eq:ub-term1-earlyrevenue}
    \int_{t_B^* - \Tilde{t}_A}^{t_B^* - t_A^*} \mu(\Delta) (z_T - z_C - e \Delta) d \Delta \leq (1 - g) f_1 \Lambda (z_T - z_C - e (t_B^* - \Tilde{t}_A)),
\end{align}
which follows as $\Delta \geq t_B^* - \Tilde{t}_A$ for all $\Delta \in [t_B^* - \Tilde{t}_A, t_B^* - t_A^*]$. Next, for the other term in Equation~\eqref{eq:ub1-earlyrevenue}: \begin{align}
    \int_{0}^{t_B^* - \Tilde{t}_A} \mu(\Delta) (z_T - z_C - e \Delta) d \Delta &= \int_{0}^{t_B^* - \Tilde{t}_A} \mu(\Delta) (z_T - z_C) d \Delta - e \int_{0}^{t_B^* - \Tilde{t}_A} \mu(\Delta) \Delta d \Delta, \nonumber \\
    &\stackrel{(a)}{=} g f_1 \Lambda (z_T - z_C) - e \int_{0}^{t_B^* - \Tilde{t}_A} \mu(\Delta) \Delta d \Delta, \label{eq:ub-term2-earlyrevenue}
\end{align}
where (a) follows as $\int_{0}^{t_B^* - \Tilde{t}_A} \mu(\Delta) d \Delta = g f_1 \Lambda$ and $z_T - z_C$ is a constant.

Next, from Equations~\eqref{eq:ub-term1-earlyrevenue} and~\eqref{eq:ub-term2-earlyrevenue}, we obtain the following upper bound on the revenue:
\begin{align}
    \int_{t_A^*}^{t_B^*} \mu(t) \tau^*(t) dt &\leq g f_1 \Lambda (z_T - z_C) - e \int_{0}^{t_B^* - \Tilde{t}_A} \mu(\Delta) \Delta d \Delta + (1 - g) f_1 \Lambda (z_T - z_C - e (t_B^* - \Tilde{t}_A)), \nonumber \\
    &= f_1 \Lambda (z_T - z_C) - (1-g) f_1 \Lambda e (t_B^* - \Tilde{t}_A) - e \int_{0}^{t_B^* - \Tilde{t}_A} \mu(\Delta) \Delta d \Delta. \label{eq:ub2-earlyrevenue}
\end{align}

Now, to bound the term $\int_{0}^{t_B^* - \Tilde{t}_A} \mu(\Delta) \Delta d \Delta$, we define $F(\Delta) = \int_0^{\Delta} \mu(x) dx$. Note that $F$ is absolutely continuous with $F'(\Delta) = \mu(\Delta)$. Consequently, using integration by parts, it follows that:
\begin{align}
    \int_{0}^{t_B^* - \Tilde{t}_A} \mu(\Delta) \Delta d \Delta &= \int_{0}^{t_B^* - \Tilde{t}_A} F'(\Delta) \Delta d \Delta = [\Delta F(\Delta)]_{0}^{t_B^* - \Tilde{t}_A} - \int_{0}^{t_B^* - \Tilde{t}_A} F(\Delta) d \Delta, \nonumber \\
    &= (t_B^* - \Tilde{t}_A) \int_{0}^{t_B^* - \Tilde{t}_A} \mu(\Delta) d \Delta - \int_{0}^{t_B^* - \Tilde{t}_A} \left( \int_0^{\Delta} \mu(x) dx \right) d \Delta, \nonumber \\
    &= (t_B^* - \Tilde{t}_A) g f_1 \Lambda - \int_{0}^{t_B^* - \Tilde{t}_A} \left( \int_0^{\Delta} \mu(x) dx \right) d \Delta. \label{eq:side-helper1}
\end{align}
%where (a) follows from integration by parts.

Next, substituting Equation~\eqref{eq:side-helper1} in Equation~\eqref{eq:ub2-earlyrevenue}, we get:
\begin{align}
    \int_{t_A^*}^{t_B^*} \! \! \! \mu(t) \tau^*(t) dt \! &\leq \! f_1 \Lambda (z_T - z_C) \! - \! (1-g) f_1 \Lambda e (t_B^* - \Tilde{t}_A) \! - \! e \bigg[ (t_B^* - \Tilde{t}_A) g f_1 \Lambda - \! \int_{0}^{t_B^* - \Tilde{t}_A} \! \left( \int_0^{\Delta} \! \! \! \mu(x) dx \right) d \Delta \bigg], \nonumber \\
    &= f_1 \Lambda (z_T - z_C) - f_1 \Lambda e (t_B^* - \Tilde{t}_A) + e \int_{0}^{t_B^* - \Tilde{t}_A} \left( \int_0^{\Delta} \mu(x) dx \right) d \Delta, \nonumber\\
    &\stackrel{(a)}{\leq} f_1 \Lambda (z_T - z_C) - f_1 \Lambda e (t_B^* - \Tilde{t}_A) + e \int_{0}^{t_B^* - \Tilde{t}_A} \left( \int_0^{\Delta} \mu_f dx \right) d \Delta, \nonumber \\
    &= f_1 \Lambda (z_T - z_C) - f_1 \Lambda e (t_B^* - \Tilde{t}_A) + e \int_{0}^{t_B^* - \Tilde{t}_A}  \mu_f \Delta d \Delta, \nonumber \\
    &= f_1 \Lambda (z_T - z_C) - f_1 \Lambda e (t_B^* - \Tilde{t}_A) + e \mu_f \frac{(t_B^* - \Tilde{t}_A)^2}{2}, \nonumber \\
    &\stackrel{(b)}{=} f_1 \Lambda (z_T - z_C) - f_1 \Lambda e (t_B^* - \Tilde{t}_A) + e f_1 \Lambda \frac{t_B^* - \Tilde{t}_A}{2}, \nonumber \\
    &= f_1 \Lambda (z_T - z_C) - f_1 \Lambda e \frac{t_B^* - \Tilde{t}_A}{2} \stackrel{(c)}{=} \int_{\Tilde{t}_A}^{t_B^*} \mu_f \Tilde{\tau}(t) dt
\end{align}
where (a) follows as $\mu(x) \leq \mu_f$ for all $x \in [0, t_B^* - \Tilde{t}_A]$, (b) follows as $\mu_f (t_B^* - \Tilde{t}_A) = f_1 \Lambda$, and (c) follows by our derived relation for $\int_{\Tilde{t}_A}^{t_B^*} \mu_f \Tilde{\tau}(t) dt$ in Equation~\eqref{eq:early-dynamic-optimal-helper}. This establishes inequality (ii).

\paragraph{\textbf{Proof of (iii):}} Using an entirely analogous line of reasoning to that in the proof of inequality (ii), we can show that $\int_{t_C^{*}}^{\Tilde{t}_D} \mu_f \Tilde{\tau}(t) dt \geq \int_{t_C^{*}}^{t_D^*} \mu(n(t)) \tau^*(t) dt.$ We omit the details here for brevity.

\paragraph{\textbf{Concluding the Proof:}}

Combining inequalities (i), (ii), and (iii), it is immediate that $R^* \leq \Tilde{R}$, i.e., for any tolling policy $\tau^*(\cdot)$ there exists another policy $\Tilde{\tau}(\cdot)$ that operates the system at the throughput-maximizing capacity at all periods and achieves a weakly higher revenue. We note that if the throughput induced by the policy $\tau^*(\cdot)$ is strictly below $\mu_f$ over any time sub-interval, then at least one of inequalities (i)–(iii) is strict, implying that $R^* < \Tilde{R}$, a contradiction to the optimality of $\tau^*(\cdot)$, thus establishing our claim. 

\subsection{Proof of Proposition~\ref{prop:unbounded-sc-ratios}} \label{apdx:unbounded-sc-ratios}

In the regime when $z_T - z_C > \frac{\Lambda e L}{\mu (e+L)}$, the minimum achievable total system cost reduces to that in the classical bottleneck model without an outside option and is given by $SC^* = z_C \Lambda + \frac{eL}{2(e+L)} \Lambda^2 \left( \frac{1}{\mu} - \frac{1}{\lambda} \right)$. To illustrate the implications of this regime, consider an instance in which $z_C = 0$ (the argument extends directly to settings in which $z_C$ is a small positive constant) and $z_T > \frac{\Lambda eL}{(e+L)(\lambda - \mu)} \left( \frac{2\lambda - \mu}{\mu} \right)$. 
In this setting, the optimal static toll is $\tau_s^* = (z_T - z_C) - \frac{\Lambda eL}{\mu(e+L)}$ and its total system cost is given by $z_C \Lambda + \frac{\Lambda^2 eL}{2 \mu (e+L)} \left( 1 - \frac{\mu}{\lambda} \right) + \frac{\Lambda^2 eL}{2 \mu (e+L)}$. Then, we have:
%Then, combining this expression for the minimum total system cost to the corresponding expression for the total system cost induced by the static revenue-optimal toll derived in the proof of Theorem~\ref{thm:sc-comp-static-v-dynamic}, we get:
\begin{align*}
    \frac{SC(\tau_s^*)}{SC^*} &= \frac{z_C \Lambda + \frac{\Lambda^2 eL}{2 \mu (e+L)} \left( 1 - \frac{\mu}{\lambda} \right) + \frac{\Lambda^2 eL}{2 \mu (e+L)}}{z_C \Lambda + \frac{\Lambda^2 eL}{2 \mu (e+L)} \left( 1 - \frac{\mu}{\lambda} \right)} \stackrel{(a)}{=} 1 + \frac{1}{1 - \frac{\mu}{\lambda}},
\end{align*}
where the final equality follows from letting $z_C = 0$. Consequently, if $\frac{\mu}{\lambda} \rightarrow 1$, the system cost of the static revenue-optimal tolling policy can be unbounded in the regime when $\frac{\mu}{\lambda} \rightarrow 1$.

\section{Additional Details on Numerical Experiments} \label{apdx:additional-numerical-details}

\subsection{Model Calibration Details} \label{apdx:model-calibration}

This section describes our assumptions and methodology to calibrate our model parameters for the bottleneck and MFD frameworks based on the SF-Oakland Bay Bridge and New York City's CRZ case studies, respectively.

\paragraph{Calibration of Bottleneck Model for Bay Area Case Study:} We consider westbound commuting trips from the East Bay into San Francisco, which can occur either by car via the SF–Oakland Bay Bridge or by the Bay Area Rapid Transit (BART) system, a local subway network that runs parallel to the bridge and serves east–west travel across the Bay. All westbound vehicles using the SF–Oakland Bay Bridge are subject to a fixed (static) toll of \$8.50 for most commuter vehicles that does not vary by time of day, and empirical evidence indicates that the SF–Oakland Bay Bridge operates as a true bottleneck~\citep{GONZALES2015267}. Together, these features make the Bay Bridge–BART corridor a well-suited empirical setting for studying the performance gap between static and dynamic tolling policies in a bottleneck model with an outside option.

%it a well-suited empirical setting for our study.

%All westbound vehicles using the SF–Oakland Bay Bridge are subject to a fixed (static) toll of \$8.50 for most commuter vehicles that does not vary by time of day. 

For our study, we focus on the weekday morning commuting period between 5:00 AM and 10:00 AM, during which westbound travelers choose between driving across the SF–Oakland Bay Bridge, modeled as a bottleneck in our framework, or using BART, which serves as the outside option. To calibrate the total number of users $\Lambda$ seeking to travel during this time window, we aggregate average weekday westbound trips by car and by BART for August 2025. Using BART ridership data~\citep{BART_RidershipReports}, we obtain that the average weekday westbound BART ridership between 5:00 AM and 10:00 AM in August 2025 is 27,132 trips. Vehicular demand is estimated using PeMS data from a sensor (VDS 426389) located immediately upstream of the Bay Bridge, which yields an average flow of 41,369 vehicles over the same period. Summing these two components gives a total cumulative demand of $\Lambda = 68,501$, which we round to $70,000$ users for our experiments. This corresponds to an average desired arrival rate of $\lambda = \frac{\Lambda}{5} = 14,000$ users per hour over the five-hour commuting window.

Next, to calibrate the bottleneck service rate, we use the observed maximum flow at detector VDS 426389 on the SF–Oakland Bay Bridge. The peak flow at this location is approximately $\mu = 9,600$ vehicles per hour, which is consistent with standard capacity ranges for a five-lane highway such as the SF–Oakland Bay Bridge. Under these calibrated parameters, observe that the bottleneck service rate during the morning commute is strictly lower than the desired arrival rate (i.e., $\mu < \lambda$), implying congestion delays on the Bay Bridge. These delays will, in turn, affect commuters’ departure-time decisions and may induce mode shifts toward the outside option.

Next, we calibrate the parameters governing user cost function in Equation~\eqref{eq:userCostNormalized} for westbound car trips from East Bay to San Francisco. We set the value of waiting time to $c_W = \$22$, based on an inflation-adjusted estimate of the average value of travel time in the Bay Area~\citep{MTC2015}. For schedule delay costs, we follow the estimates from the seminal work of Small~\citep{Small1982}, setting the earliness parameter to $e = 0.61$ and the lateness parameter to $L = 2.4$. Moreover, we model the free-flow cost of car travel, $z_C$, in time-equivalent units as the sum of (i) the average free-flow travel time for westbound trips from the East Bay into San Francisco and (ii) the daily parking cost in San Francisco, converted into time-equivalent units through a normalization with the value of waiting time $c_W$. Daily parking costs in San Francisco are assumed to be \$30, consistent with typical weekday parking fees in the city. Free-flow travel times are calibrated using origin–destination shares and distances for westbound AM trips into San Francisco, yielding a weighted average free-flow travel time of approximately 21 minutes. Full details of the free-flow travel time calibration are reported in Appendix~\ref{subsec:fftt-calibration-bay-area}. Then, the resulting free-flow cost of car travel is $z_C = 1.714$ hrs.

Finally, we calibrate the average cost $z_T$ of westbound trips from the East Bay into San Francisco using BART. Analogous to the free-flow cost of car travel, we model $z_T$ as the sum of a monetary fare component and a generalized time cost. The generalized time cost is composed of three elements: (i) walking time to and from the BART station, (ii) travel time in the BART, and (iii) waiting time at the station. In line with the empirical evidence that time spent using public transit is perceived as more onerous than time spent driving~\citep{Wardman2012}, we weight these time components by a multiplicative discomfort factor $\eta$ when computing $z_T$.

To estimate these components, we use BART GTFS data to compute ridership volume-weighted averages for east–west Bay trips during the morning rush, yielding an average fare of \$6.14 and an average in-vehicle travel time of approximately 32 minutes~\citep{BART_GTFS}. Walking time is assumed to be 10 minutes each way, consistent with BART access guidelines, resulting in 20 minutes of total access and egress time~\citep{BART_AccessGuidelines}. Moreover, the waiting time is taken to be half of the average train headway of 20 minutes, implying an average waiting time of 10 minutes for randomly arriving commuters~\citep{BART_FY26Budget}. We combine these components to define $z_T$ as the sum of the fare, converted into time-equivalent units by normalizing with the value of waiting time $c_W$, and the generalized travel time (walking, waiting, and travel time on BART) weighted by the discomfort factor.

Because the discomfort multiplier associated with transit use may vary substantially across settings, we conduct a sensitivity analysis in our numerical experiments by varying the multiplier over the range $\eta \in [1, 30]$. Multiplier values in the range $\eta \in [1.5, 5]$ are consistent with empirical estimates in practice~\citep{Wardman2012}. Exploring a broader range allows us to capture heterogeneity in traveler preferences and to examine how technological or infrastructural changes, such as improvements in transit quality or the increased attractiveness of car travel due to the proliferation of autonomous vehicles~\citep{ostrovsky2025congestion}, affect the cost differential $z_T - z_C$. In addition, varying $\eta$ over this wider range enables us to study the full set of regimes for $z_T - z_C$ characterized by our theoretical results in Section~\ref{sec:revenue-optimal-tolling-classical-bottleneck}.

We summarize the above parameter values for the SF-Oakland Bay Bridge case study in Table~\ref{tab:calibration_summary-bottleneck} in Appendix~\ref{apdx:bay_area_summary}.

\paragraph{Calibration of MFD Framework for New York City (NYC) Case Study:}

We now describe the calibration methodology of our model parameters for the MFD framework in Section~\ref{sec:mfd} based on New York City's recently implemented congestion pricing program. Under this program, which began on January 5, 2025, vehicles are tolled when entering the Congestion Relief Zone (CRZ), defined as the area of Manhattan south of 60th Street. While toll levels vary by vehicle class and other factors, most commuter vehicles are subject to a flat (static) toll of \$9 to enter the CRZ during weekday hours between 5:00 AM and 9:00 PM. These features make New York City’s congestion pricing program an especially compelling empirical setting for evaluating the performance gap between static and dynamic tolling policies in city-scale systems, which can be effectively represented through an MFD.

For the purposes of our empirical calibration, we consider commuter trips that either originate or terminate within the CRZ, so that at least a portion of each trip takes place inside the tolled region. Users may choose between traveling by car, incurring the congestion toll when traveling through the CRZ, or using the local subway system, which is separated from the road network. To isolate the key mechanisms of interest in this work, we focus on subway as the only outside option and thus do not explicitly model bus services or other transit modes in New York City for our empirical study.

As with the Bay Area case study, we focus on the weekday morning commuting period between 5:00 AM and 10:00 AM. During this period, we estimate the cumulative travel demand $\Lambda$ as the sum of subway trips and vehicle trips that occur at least partially within the CRZ. Averaged across weekdays in August 2025, we estimate 700,000 subway trips and 200,000 vehicle trips during the morning peak. Details of the demand calibration are provided in Appendix~\ref{apdx:calibration-travel-demand-nyc}. Combining these components yields a total calibrated demand of $\Lambda = 900,000$ users, corresponding to a desired arrival rate of $\lambda = \frac{\Lambda}{5} = 180,000$ users per hour.

Next, we calibrate the parameters of the triangular MFD for the New York City case study. To estimate the average trip distance $D$, we use New York City taxi and for-hire vehicle trip data~\citep{NYC_TLC_TripData} and find that the average trip length for vehicle trips within the CRZ is approximately 3.7 miles (5.95 km). We round this value to $D = 6$ km and use it as the representative trip distance for all vehicle trips in the zone. Moreover, we set the free-flow speed in Manhattan to $25mph$, consistent with posted speed limits on local Manhattan streets.

We calibrate the jam density using standard traffic flow benchmarks. Assuming an average vehicle length of 5 meters and accounting for inter-vehicle spacing under congested conditions implies approximately 7 meters of road space per vehicle, corresponding to a jam density of about 142 vehicles per kilometer per lane. We round this number to set the jam density to 140 veh/km/lane, consistent with estimates in the literature~\citep{Knoop2021}. To extend this to the entire CRZ, we approximate the total road supply within the CRZ using OpenStreetMap data~\citep{OpenStreetMap}, yielding an estimate of approximately 1,000 lane-kilometers. This implies a maximum system-level jam accumulation level of roughly 140,000 vehicles. Given that the true jam accumulation level may be lower than this value, we conduct sensitivity analyses over jam accumulation levels ranging between 14,000 to 140,000 vehicles, capturing plausible variation in the jam density values in practice. Finally, to calibrate the throughput-maximizing capacity $\mu_f$ in the CRZ, we follow an analogous procedure to that used to calibrate the arrival rate $\lambda$. Using this approach, we estimate the peak achievable flow through the CRZ to be $\mu_f = 45,000$ vehicles per hour (see Appendix~\ref{apdx:mu_f_calibration_nyc} for details). We find that our results are not sensitive to moderate variations in $\mu_f$, indicating that this value captures the key effects relevant for our analysis.

Next, we calibrate the user cost function for car travel through the CRZ. To this end, following the estimate in~\citet{cook2025short}, we set the value of waiting time to $c_W = \$40$, consistent with mean hourly wage levels in New York City~\citep{BLS_OEWS_NY_May2024}. As in the Bay Area case study, we set the earliness parameter to $e = 0.61$ and the lateness parameter to $L = 2.4$. Moreover, we calibrate free-flow cost of car travel, $z_C$, as the generalized cost of an uncongested trip within the CRZ expressed in time-equivalent units, combining parking fees and the free-flow travel time. The daily parking fee is assumed to be \$30, consistent with an average parking rate of approximately \$3.50 per hour for an eight-hour workday in New York City. We estimate the free-flow travel time as the ratio of the average trip distance within the CRZ to the free-flow speed, given by $\frac{D}{v_f} = \frac{6 km}{40 km/h} = 0.15$ hours.

%To calibrate the user cost function when using a car through the CRZ, following the estimate from~\cite{cook2025short}, we assume that the value of waiting time is $c_W = \$40$, consistent with mean hourly wage data in NYC~\cite{BLS_OEWS_NY_May2024}, and set the earliness parameter to $e = 0.61$ and the lateness parameter to $L = 2.4$ as in our Bay Area case study. Next, we calibrate $z_C$ in the same way as in the Bay Area study as constituting parking fees and the cost of travel time for the portion of the trip in the CRZ. The parking fee per day is assumed to be \$30, in line with an average \$3.5 per hour rate of parking (for 8 hours) \href{https://spothero.com/city/nyc-parking}{Link}. Noting that the average trip distance in the CRZ is $D = 3.2km$, the free flow travel time can be computed as $\frac{D}{v_f} = \frac{3.2 km}{40 km/h} = 0.08$hrs.

We calibrate the subway cost $z_T$ for trips passing through the CRZ using the same formulation as the Bay Area case study. The trip fare is set to \$3 (as of January 2026), reflecting the flat fare structure of the New York City Subway, which applies uniformly across origins and destinations. On weekdays, most subway lines operate with headways of less than five minutes, implying an average waiting time of approximately 2.5 minutes~\citep{NYC_Comptroller_SubwayFrequencies}. Walking time to and from the subway is assumed to be 10 minutes per access and egress leg, resulting in a total walking time of 20 minutes~\citep{StreetEasy_SubwayAccess}. We further assume that the average time spent traveling on the subway within the CRZ is approximately 12 minutes, corresponding to roughly half the travel time between Columbus Circle (59th Street) and South Ferry along the subway’s Red Line, under the assumption of spatially uniformly distributed trips within the CRZ. Finally, as in the Bay Area case, we vary the discomfort multiplier over the range $\eta \in [1, 30]$ in our numerical experiments to examine the implications of differing relative attractiveness of transit versus car travel.

We summarize the above parameter values in the New York City case study in Table~\ref{tab:calibration_summary_nyc_mfd} in Appendix~\ref{apdx:summary-values-nyc}.

\subsection{Free-flow Travel Time Calibration for Bay Area Case Study} \label{subsec:fftt-calibration-bay-area}

In this section, we present the calibration methodology for the average free-flow travel time for westbound trips from East Bay into San Francisco. We use origin–destination (OD) data from~\citet{MTC2015}, which reports distances and observed OD shares for major East Bay sub-regions traveling to San Francisco, which we summarize in Table~\ref{tab:freeflow_od}. The remaining OD share required to sum to 100 percent corresponds to trips originating in other East Bay locations for which such data are not available. For the origin locations listed in Table~\ref{tab:freeflow_od}, the free-flow travel times are computed assuming a constant speed of $50$mph, consistent with the posted speed limit on the SF–Oakland Bay Bridge. These travel time estimates are then aggregated using the reported OD shares for westbound AM trips to obtain a weighted average free-flow travel time of \textbf{0.35 hours (21 minutes)}. This weighted average provides a parsimonious and empirically grounded estimate of the free-flow travel time for westbound trips from East Bay into San Francisco, required to calibrate the parameter $z_C$.

\begin{table}[h!]
\centering
\caption{\small \sf Calibration of free-Flow Travel Times for Westbound AM Trips into San Francisco. Free-flow times are computed assuming a travel speed of 50 mph. OD shares and distances are based on westbound AM trips reported in~\citet{MTC2015}.}
\label{tab:freeflow_od}
\begin{tabular}{lccc}
\hline
\textbf{Origin} & \textbf{Distance (miles)} & \textbf{Free-Flow Time (hours)} & \textbf{OD Share (\%)} \\
\hline
Oakland        & 12.6 & 0.252 & 29.8 \\
Richmond       & 17.9 & 0.359 & 12.4 \\
Berkeley       & 13.8 & 0.276 & 12.4 \\
Hayward        & 27.2 & 0.544 & 8.6  \\
Walnut Creek   & 25.3 & 0.506 & 5.0  \\
Concord        & 31.1 & 0.622 & 4.8  \\
\hline
\end{tabular}
\end{table}

\subsection{Summary of Parameter Values for Bay Area Case Study} \label{apdx:bay_area_summary}

\begin{table}[h!]
\centering
\caption{\small \sf Calibrated parameter values of the bottleneck model with an outside option for the SF-Oakland Bay Bridge case study.}
\label{tab:calibration_summary-bottleneck}
\begin{tabular}{ll}
\hline
\textbf{Parameter} & \textbf{Value} \\
\hline
\multicolumn{2}{l}{\textit{Demand, Arrival Rate, and Bottleneck Service Rate}} \\
\hline
$\Lambda$ & $70{,}000$ \\
$\lambda$ & $14{,}000$ users/hr \\
$\mu$ & $9{,}600$ vehicles/hr \\
\hline
\multicolumn{2}{l}{\textit{Components of User Cost of Driving}} \\
\hline
$c_W$ & \$22/hr \\
$e$ & $0.61$ \\
$L$ & $2.4$ \\
Parking fee & \$30 \\
Free-flow travel time & $21$ min \\
$z_C$ & $1.714$ hrs \\
\hline
\multicolumn{2}{l}{\textit{Transit (BART) Parameters to Calibrate $z_T$}} \\
\hline
Fare & \$6.14 \\
Walking time & $20$ min \\
Waiting time & $10$ min \\
On-board BART travel time & $32$ min \\
Discomfort Multiplier ($\eta$) & $[1.5, 18]$ \\
\hline
\end{tabular}
\end{table}

\subsection{Calibration of Total Travel Demand for New York Case Study} \label{apdx:calibration-travel-demand-nyc}

In this section, we describe the calibration methodology for total travel demand within the CRZ for the New York City case study. We estimate total demand as the sum of subway trips and vehicle trips, where some portion of the trip occurs within the CRZ during the weekday morning peak between 5-10 AM, averaged over weekdays in August 2025.

For subway demand, we classify stations according to whether they are located inside or outside the CRZ and aggregate trips across three origin–destination (OD) categories: (i) trips originating within the CRZ and ending outside the CRZ, (ii) trips both originating and ending within the CRZ, and (iii) trips originating outside the CRZ and ending within the CRZ. Together, these OD categories account for the vast majority of subway trips that pass through the CRZ. Using this approach, we estimate an average of 684,306 subway trips traversing the CRZ during the 5–10 AM period on weekdays in August 2025 using the subway OD ridership dataset~\citep{MTA_OD_Ridership2025}. For our numerical experiments, we round this figure to 700,000 to conservatively account for any remaining trips not explicitly captured by the above procedure.

For vehicle trips, we begin with CRZ entry data~\citep{NY_CongestionRelief_Entries2025}, which indicate an average of around 128,000 vehicle entries into the CRZ between 5-10 AM on weekdays in August 2025, of which around 35,000 are taxi or for-hire vehicle entries. Because these data record only vehicle entries into the zone, they do not capture all vehicle trips occurring within the CRZ. To account for internal trips, we additionally use New York City taxicab and for-hire vehicle (FHV) trip data~\citep{NYC_TLC_TripData}, including yellow taxis, green taxis, and FHVs, to estimate the total number of taxi trips traversing the CRZ. Aggregating across the same three origin–destination categories used for the subway calibration, we find approximately 50,000 taxi and FHV trips passing through the CRZ during the morning peak, compared with roughly 35,000 taxi and FHV entries recorded in the CRZ entry data. We therefore scale total vehicle entries by the corresponding factor, yielding an adjusted estimate of approximately $128,000 \times \frac{50,000}{35,000} \approx 183,000$ vehicle trips. For our numerical experiments, we conservatively round this figure to 200,000 vehicle trips to account for any remaining trips not explicitly captured by this procedure.

%we use the CRZ entry data~\cite{NY_CongestionRelief_Entries2025} to find that the total number of vehicle entries into the CRZ is 128,347 vehicle entries on average between 5-10am on weekdays in August 2025. These are only data points for vehicle entries into the CRZ and thus does not account for all the trips within the CRZ. We also use the NYC taxicab dataset for yellow taxicabs, green taxicabs, and for-hire vehicles to find that the total taxi and fhv trips across the above three OD categories is approximately 50,000, a higher number than the approximately 35,000 recorded taxi and fhv entries. Accordingly, we scale the total number of vehicle entries into the CRZ with a commensurate factor as 128k * 50k / 35k = 183k. Similar to the subway data, we round this to $200,000$ vehicle trips in the CRZ to conservatively account for any remaining trips not explicitly captured by the above procedure.

\subsection{Calibration of Throughput Maximizing Capacity for New York Case Study} \label{apdx:mu_f_calibration_nyc}

In this section, we describe the calibration methodology for throughput-maximizing capacity $\mu_f$ within the CRZ for the New York City case study. To estimate $\mu_f$, we follow a similar procedure to estimate the average vehicle trips in the CRZ in Appendix~\ref{apdx:calibration-travel-demand-nyc} and instead of looking at averages, we look at the maximum travel demand.

To this end, we begin with CRZ entry data~\citep{NY_CongestionRelief_Entries2025}, which indicate a maximum of approximately 137,000 vehicle entries into the CRZ between 5-10 AM on weekdays in August 2025, of which about 36,400 correspond to taxi and for-hire vehicle (FHV) entries. Because this data captures only vehicle entries into the zone, we supplement it with New York City taxicab and FHV trip data~\citep{NYC_TLC_TripData}, including yellow taxis, green taxis, and FHVs, to account for trips occurring within the CRZ. Aggregating across the same three origin–destination categories described in Appendix~\ref{apdx:calibration-travel-demand-nyc}, we estimate approximately 54,600 taxi and FHV trips traversing the CRZ during the morning peak, compared with the 36,400 taxi and FHV entries recorded in the CRZ entry data. We therefore scale total vehicle entries by the corresponding ratio, yielding an adjusted estimate of approximately $137,000 \times \frac{54,600}{36,400} \approx 205,000$ vehicle trips. Applying the same conservative rounding adjustment used in Appendix~\ref{apdx:calibration-travel-demand-nyc}, this implies a total of approximately $205,000 \times \frac{200,000}{183,000} \approx 225,000$ vehicle trips over the five-hour morning period, corresponding to a throughput of $\mu_f = 45,000$ vehicles per hour.

%Specifically, we find that the with the maximum number of CRZ entries has a total of approximately 137,000 vehicle entries, with a total of 36,500 taxi entries. On the same day, the number of taxi 

\subsection{Summary of Parameter Values for New York Case Study} \label{apdx:summary-values-nyc}

\begin{table}[h!]
\centering
\caption{\small \sf Calibrated parameter values for the MFD framework in the New York City Case Study}
\label{tab:calibration_summary_nyc_mfd}
\begin{tabular}{ll}
\hline
\textbf{Parameter} & \textbf{Value} \\
\hline
\multicolumn{2}{l}{\textit{Demand and Arrival Rates}} \\
\hline
$\Lambda$ & $900{,}000$ users \\
$\lambda$ & $180{,}000$ users/hr \\
\hline
\multicolumn{2}{l}{\textit{MFD Parameters}} \\
\hline
Jam density & $140$ veh/km/lane \\
Total road length in CRZ & $1{,}000$ lane-km \\
Jam accumulation level ($n_j$) & $\{14,000, 42,000, 70,000, 140{,}000 \}$ vehicles \\
Throughput-maximizing capacity $\mu_f$ & $45{,}000$ veh/hr \\
Average trip distance $D$ & $6$ km \\
\hline
\multicolumn{2}{l}{\textit{Car Travel Cost Parameters}} \\
\hline
$c_W$ & \$40/hr \\
$e$ & $0.61$ \\
$L$ & $2.4$ \\
Parking fee & \$30 \\
Free-flow speed $v_f$ & $40$ km/hr \\
$z_C$ & 0.9 \\
\hline
\multicolumn{2}{l}{\textit{Transit (Subway) Parameters to Calibrate $z_T$}} \\
\hline
Fare & \$3 \\
Walking time & $20$ min \\
Waiting time & $2.5$ min \\
On-board subway time & $12$ min \\
Discomfort multiplier $\eta$ & $[1.5,18]$ \\
\hline
\end{tabular}
\end{table}

\section{System Cost Optimal Static Tolling} 

\subsection{Derivation of System Cost Optimal Static Toll for Bottleneck Model} \label{apdx:sco-static-toll}

In the following, we derive the system-cost-optimal static toll in the bottleneck model with an outside option. To this end, we first present the expression for the total system cost as a function of the static toll $\tau$, which is the sum of the cost of using transit and the generalized cost of using the car plus scheduling and queuing delays. In presenting this expression, we focus on the case when $\tau \in [\max \{z_T - z_C - T_C, 0 \}, z_T - z_C]$ and $z_T - z_C \geq 0$. Note that in this regime, we have a mixed-mode equilibrium as characterized in Proposition~\ref{prop:user-eq-two-modes}. Specifically, letting $\Bar{w} = z_T - z_C - \tau$, we have:
\begin{itemize}
    \item $N_e = \frac{\mu \Bar{w}}{e}$ early arrivals
    \item $N_L = \frac{\mu \Bar{w}}{L}$ late arrivals
    \item $N_o = (1-\frac{\Bar{w}}{T_C})\frac{\Lambda}{\lambda}\mu$ on-time arrivals via car
    \item $N_T = (1-\frac{\Bar{w}}{T_C})\Lambda \left( 1 - \frac{\mu}{\lambda}\right)$ arrivals via transit.
\end{itemize}
Then, we obtain the following expression for the total system cost as a function of the static toll $\tau$:
\begin{align*}
    SC(\tau) &= z_T \left[ \Lambda - \frac{\mu (e+L)(z_T - z_C - \tau)}{eL} \right] \left( 1 - \frac{\mu}{\lambda}\right) \\
    &+ z_C \mu (z_T - z_C - \tau) \frac{(e+L)}{eL} + z_C \left[ \Lambda - \frac{\mu (e+L)(z_T - z_C - \tau)}{eL} \right] \frac{\mu}{\lambda} \\
    &+ \frac{\mu (z_T - z_C - \tau)^2}{2} \frac{(e+L)}{eL} \left( 1 - \frac{\mu}{\lambda} \right) \\
    &+ (z_T - z_C - \tau) \left( \Lambda - \frac{\mu (e+L) (z_T - z_C - \tau)}{eL} \right) \frac{\mu}{\lambda} + \frac{\mu (z_T - z_C - \tau)^2}{2} \frac{(e+L)}{eL}.
\end{align*}
To compute the optimal uniform toll, we first compute the derivative $SC'(\tau)$ of the total system cost, which is given by:
\begin{align*}
    SC'(\tau) = \tau \frac{\mu (e+L)}{eL} \left( 2 - 3 \frac{\mu}{\lambda} \right) - \frac{\Lambda \mu}{\lambda} + (z_T - z_C) \frac{\mu (e+L)}{eL} \left( \frac{2 \mu}{\lambda} - 1 \right).  
\end{align*}

We now compute the system-cos-optimal static toll in three regimes: (i) $\frac{\mu}{\lambda} = \frac{2}{3}$, (ii) $\frac{\mu}{\lambda} > \frac{2}{3}$, and (iii) $\frac{\mu}{\lambda} < \frac{2}{3}$. %Note that the toll $\tau \in [0, z_T - z_C]$.

\paragraph{Case (i):} In the regime when $\frac{\mu}{\lambda} = \frac{2}{3}$, the total system cost is linear with $SC'(\tau) = - \frac{2\Lambda}{3} + \frac{1}{3}(z_T - z_C) \frac{\mu (e+L)}{eL}$. Thus, we have:
\begin{align}
\tau^* = 
    \begin{cases}
    \max \{ 0, z_T - z_C - T_C \}, & \text{if } z_T - z_C > \frac{2 \Lambda e L}{\mu(e+L)}  \\
    z_T - z_C, & \text{if } z_T - z_C < \frac{2 \Lambda e L}{\mu(e+L)} \\
    [\max \{ 0, z_T - z_C - T_C \}, z_T - z_C], & \text{if } z_T - z_C = \frac{2 \Lambda e L}{\mu(e+L)}.
\end{cases}
\end{align}
Note that in the regime when $z_T - z_C > \frac{2 \Lambda e L}{\mu(e+L)} = 2 T_C$, it follows that $\max \{ 0, z_T - z_C - T_C \} = z_T - z_C - T_C$.

\paragraph{Case (ii):} In the regime when $\frac{\mu}{\lambda} > \frac{2}{3}$, the total system cost is concave and hence the minimum must occur at one of the end-points $0$ or $z_T - z_C$. Leveraging the fundamental theorem of calculus and the linearity of the derivative $SC'(\tau)$, it follows that:
\begin{align*}
    SC(z_T - z_C) - SC(0) &= \int_{0}^{z_T - z_C} SC'(\tau) d \tau = \frac{z_T - z_C}{2} (SC'(0) + SC'(z_T - z_C)) \\
    &= \frac{\mu}{\lambda} \left( (z_T - z_C) \mu \frac{e+L}{eL} - 2 \Lambda \right).
\end{align*}
Thus, we have:
\begin{align}
\tau^* = 
    \begin{cases}
    z_T - z_C - T_C, & \text{if } z_T - z_C > \frac{2 \Lambda e L}{\mu(e+L)} \\
    z_T - z_C, & \text{if } z_T - z_C < \frac{2 \Lambda e L}{\mu(e+L)} \\
    [z_T - z_C - T_C, z_T - z_C], & \text{if } z_T - z_C = \frac{2 \Lambda e L}{\mu(e+L)}.
\end{cases}
\end{align}

\paragraph{Case (iii):} In the regime when $\frac{\mu}{\lambda} < \frac{2}{3}$, the total system cost is convex and hence there is a possibility that the minimum occurs between the range $(0, z_T - z_C)$. In this case, note that the unique unconstrained minimizer satisfies $SC'(\underline{\tau}) = 0$, i.e.,
\begin{align*}
    \underline{\tau} = \frac{\frac{\Lambda \mu el}{\lambda \mu (e+L)} + (z_T - z_C) \left(1 - \frac{2 \mu}{\lambda} \right))}{2 - 3 \frac{\mu}{\lambda}}
\end{align*}
We have that if $\frac{\Lambda \mu el}{\lambda \mu (e+L)} \leq (z_T - z_C)(\frac{2\mu}{\lambda} - 1)$, then $\underline{\tau} \leq 0$, i.e., the minima occurs at $0$, if $\frac{\Lambda \mu el}{\lambda \mu (e+L)} > (z_T - z_C) (1 - \frac{\mu}{\lambda})$, then $\underline{\tau} \geq z_T - z_C$.

Then, we have:
\begin{align}
\tau^* = 
    \begin{cases}
    \max \{ 0, z_T - z_C - T_C \}, & \text{if } \frac{\Lambda \mu el}{\lambda \mu (e+L)} \leq (z_T - z_C)(\frac{2\mu}{\lambda} - 1) \\
    z_T - z_C, & \text{if } \frac{\Lambda \mu el}{\lambda \mu (e+L)} \geq (z_T - z_C) (1 - \frac{\mu}{\lambda}) \\
    \max \left\{ \frac{\frac{\Lambda \mu el}{\lambda \mu (e+L)} + (z_T - z_C) \left(1 - \frac{2 \mu}{\lambda} \right))}{2 - 3 \frac{\mu}{\lambda}}, z_T - z_C - T_C \right\}, & \text{otherwise}.
\end{cases}
\end{align}

\subsection{Derivation of Total System Cost Under Static Tolling for Triangular MFD} \label{apdx:derivation-tsc-triangular-mfd}

We now derive the total system cost corresponding to any static toll $\tau$ under a triangular MFD, which consists of the following four terms: (i) cost of using transit, (ii) cost of using a car at free-flow, (iii) waiting time or queuing delays when using the car, and (iv) schedule delays. We now derive expressions for each of these four terms in the regime when $\tau \geq \underline{\tau}$, where $\underline{\tau} \geq 0$ is the minimum toll at which there is some user who is indifferent between using car and transit. 

\paragraph{Transit Costs:} We have the following expression for the cost of using transit:
\begin{align} \label{eq:transit-cost}
    C_T(\tau) &= z_T \left(\Lambda - \int_{t_A}^{t_D} \mu(n(t)) dt \right), \nonumber \\
    &= z_T \left[ \Lambda - \frac{\Lambda}{\lambda} \frac{n_j}{\frac{n_j}{\mu_f} + z_T - z_C - \tau} - n_j \frac{e+L}{eL} \ln \left( 1 + \frac{(z_T - z_C - \tau) \mu_f}{n_j} \right) \left( 1 - \frac{\frac{n_j}{\frac{n_j}{\mu_f} + z_T - z_C - \tau}}{\lambda} \right) \right],
\end{align}
where the equality follows from the analysis in Theorem~\ref{thm:rev-static-tolls-mfd}.

\paragraph{Free-flow Car Costs:} We have the following expression for the cost of using car:
\begin{align} \label{eq:car-cost-ff}
    C_F(\tau) &= z_C \int_{t_A}^{t_D} \mu(n(t)) dt, \nonumber \\
    &= z_C \left[ \frac{\Lambda}{\lambda} \frac{n_j}{\frac{n_j}{\mu_f} + z_T - z_C - \tau} + n_j \frac{e+L}{eL} \ln \left( 1 + \frac{(z_T - z_C - \tau) \mu_f}{n_j} \right) \left( 1 - \frac{\frac{n_j}{\frac{n_j}{\mu_f} + z_T - z_C - \tau}}{\lambda} \right) \right].
\end{align}

\paragraph{Waiting and Queuing Delays:} The total waiting and queuing delays are given by the sum of the corresponding delays for the (a) on-time car users, (b) early car users, and (l) late car users. For the on-time car users, we have the following expression for the queuing delay:
\begin{align} \label{eq:on-time-users}
    C_Q^O(\tau) &= (t_C - t_B) \mu_{\tau} (z_T - z_C - \tau), \nonumber \\
    &= \left( \frac{\Lambda}{\lambda} - \frac{1}{\lambda} n_j \frac{e+L}{eL} \ln \left( 1 + \frac{(z_T - z_C - \tau) \mu_f}{n_j} \right) \right) \frac{n_j}{\frac{n_j}{\mu_f} + z_T - z_C - \tau} (z_T - z_C - \tau)
\end{align}
For the early car users, we have the following expression for the queuing delay:
\begin{align} \label{eq:early-queuing}
    C_Q^E(\tau) &= \int_{t_A}^{t_B} \mu(n(t)) w(n(t)) dt, \nonumber \\
    &= \int_{t_A}^{t_B} \mu(n(t)) \left( \frac{n(t)}{\mu(n(t))} - \frac{n_c}{\mu_f} \right) dt, \nonumber \\
    &= \int_{t_A}^{t_B} n_j \left( 1 - \frac{\mu(n(t))}{\mu_f} \right) dt, \nonumber \\
    &= \int_{0}^{t_B - t_A} n_j \left( 1 - \frac{\frac{n_j}{\frac{n_j}{\mu_f} + (z_T - z_C - \tau - e \Delta )}}{\mu_f} \right)  d \Delta, \nonumber \\
    &= \int_{0}^{t_B - t_A} n_j \frac{z_T - z_C - \tau - e \Delta}{\frac{n_j}{\mu_f} + z_T - z_C - \tau - e \Delta} d \Delta.
\end{align}
Analogously, we have the following expression for the queuing delay of the late car users:
\begin{align} \label{eq:late-queuing}
    C_Q^L(\tau) &= \int_{0}^{t_B - t_A} n_j \frac{z_T - z_C - \tau - L \Delta}{\frac{n_j}{\mu_f} + z_T - z_C - \tau - L \Delta} d \Delta.
\end{align}

\paragraph{Schedule Delay:} We first derive the total earliness delay in the system. To this end, for a user that exits the system at $t$ with a desired exit time $t^*(t)$, their earliness delay is given by:
\begin{align*}
    t^*(t) - t = t_1 + \frac{t - t_A}{t_B - t_A} (t_B - t_1) - t = \frac{(t_1 - t_A)(t_B - t)}{t_B - t_A}.
\end{align*}
We now use this expression to derive the total earliness delay, which is given by:
\begin{align*}
    C_S^E(\tau) &= e \int_{t_A}^{t_B} \mu(n(t)) \frac{(t_1 - t_A)(t_B - t)}{t_B - t_A} dt, \nonumber \\
    &= e \int_{t_A}^{t_B} \frac{n_j}{\frac{n_j}{\mu_f} + (z_T - z_C - \tau - e (t - t_A) )} \frac{(t_1 - t_A)(t_B - t)}{t_B - t_A} dt, \nonumber \\
    &= \frac{n_j}{e} \left( z_T - z_C - \tau - \frac{n_j}{\lambda} \ln \left( 1 + \frac{(z_T - z_C - \tau) \mu_f}{n_j} \right) \right) \left[ 1 - \frac{n_j}{\mu_f(z_T - z_C - \tau)} \ln \left( 1 + \frac{(z_T - z_C - \tau) \mu_f}{n_j} \right) \right],
\end{align*}
where the final equality is obtained by solving the integral.

Analogously, the total lateness delay is given by:
\begin{align*}
    C_S^L(\tau) = \frac{n_j}{L} \left( z_T - z_C - \tau - \frac{n_j}{\lambda} \ln \left( 1 + \frac{(z_T - z_C - \tau) \mu_f}{n_j} \right) \right) \left[ 1 - \frac{n_j}{\mu_f(z_T - z_C - \tau)} \ln \left( 1 + \frac{(z_T - z_C - \tau) \mu_f}{n_j} \right) \right].
\end{align*}

Then, we have that the total social cost is given by:
\begin{align*}
    SC(\tau) = C_T(\tau) + C_F(\tau) + C_Q^O(\tau) + C_Q^E(\tau) + C_Q^L(\tau) + C_S^E(\tau) + C_S^L(\tau).
\end{align*}

\end{document}